\documentclass[a4paper,UKenglish]{lipics-v2016}
\pdfoutput=1
\usepackage{microtype}

\bibliographystyle{plainurl}

\usepackage{enumerate}
\usepackage{amssymb}

\usepackage{xcolor,latexsym,amsmath,extarrows,alltt}
\usepackage{xspace}
\usepackage{booktabs}
\usepackage{mathtools}
\usepackage{enumitem}
\usepackage{stmaryrd}
\usepackage{microtype}
\usepackage{wrapfig}
\usepackage{mathabx}

\theoremstyle{definition}\newtheorem{algorithm}[theorem]{Algorithm}

\newcommand{\nats}[0]{\ensuremath{\mathbb{N}}}

\newcommand{\timecomp}[1]{\ensuremath{\textrm{TIME}\left(#1\right)}}
\newcommand{\etime}[1]{\ensuremath{\textrm{E}^{#1}\textrm{TIME}}}
\newcommand{\exptime}[1]{\ensuremath{\textrm{EXP}^{#1}\textrm{TIME}}}

\newcommand{\B}{\mathcal{B}}
\newcommand{\F}{\mathcal{F}}
\renewcommand{\P}{\mathbb{P}}
\newcommand{\V}{\mathcal{V}}
\newcommand{\Flab}{\mathcal{F}_{\mathtt{lab}}}
\newcommand{\NF}{\mathcal{N}\!\mathcal{F}}
\newcommand{\Sorts}{\mathcal{S}}
\newcommand{\Constructors}{\mathcal{C}}
\newcommand{\Defineds}{\mathcal{D}}
\newcommand{\Data}{\mathcal{D}\!\!\mathcal{A}}
\newcommand{\Terms}{\mathcal{T}}
\newcommand{\Rules}{\mathcal{R}}
\newcommand{\Ruleslab}{\Rules_{\mathtt{lab}}}
\newcommand{\FV}{\mathit{FV}}

\newcommand{\asort}{\iota}
\newcommand{\bsort}{\kappa}
\newcommand{\atype}{\sigma}
\newcommand{\btype}{\tau}
\newcommand{\ctype}{\pi}

\newcommand{\clausevar}{\texttt{(var)}}
\newcommand{\clauseapp}{\texttt{(app)}}
\newcommand{\clauseabs}{\texttt{(abs)}}
\newcommand{\clausefun}{\texttt{(fun)}}

\newcommand{\clausebeta}{\texttt{(beta)}}

\newcommand{\app}[2]{#1\cdot #2}
\newcommand{\apps}[3]{#1 \cdot #2 \cdots #3}
\newcommand{\abs}[2]{\lambda #1.#2}
\newcommand{\symb}[1]{\mathtt{#1}}
\newcommand{\unknown}[1]{\underline{\symb{#1}}}
\newcommand{\encode}[1]{\overline{\symb{#1}}}
\newcommand{\interpret}[1]{\llbracket #1 \rrbracket}
\newcommand{\numinterpret}[1]{\langle #1 \rangle}
\newcommand{\domain}{\mathtt{domain}}
\newcommand{\labl}{\mathsf{label}}
\newcommand{\numrep}[1]{[#1]}
\newcommand{\Conf}{\mathsf{Confirmed}}

\newcommand{\arrtype}{\Rightarrow}
\newcommand{\arrz}{\to}
\newcommand{\arrzt}{\Rightarrow}
\newcommand{\arr}[1]{\arrz_{#1}}
\newcommand{\arrr}[1]{\arr{#1}^*}
\newcommand{\arrp}[1]{\arr{#1}^+}
\newcommand{\subterm}{\lhd}
\newcommand{\supterm}{\rhd}
\newcommand{\subtermeq}{\unlhd}
\newcommand{\suptermeq}{\unrhd}
\newcommand{\transition}[5]{#1~\displaystyle{\mathop{=\!\!=\!\!\!
  \Longrightarrow}^{#2/#3\ #4}}~#5}
\newcommand{\card}{\mathtt{card}}

\newcommand{\blank}{\textbf{\textvisiblespace}}
\newcommand{\nul}{\symb{0}}
\newcommand{\one}{\symb{1}}
\newcommand{\nil}{\ensuremath{\rhd}}
\newcommand{\strue}{\symb{true}}
\newcommand{\sfalse}{\symb{false}}

\newcommand{\bits}{\symb{string}}
\newcommand{\bool}{\symb{bool}}

\newcommand{\arxivversion}[1]{#1}
\newcommand{\fscdversion}[1]{}

\newcommand{\CKchange}[1]{\textcolor{blue}{#1}}
\renewcommand{\CKchange}[1]{#1}

\title{Complexity Hierarchies and Higher-Order Cons-Free Rewriting%
\footnote{The authors are supported by the Marie Sk{\l}odowska-Curie
action ``HORIP'', program H2020-MSCA-IF-2014, 658162 and by the Danish
Council for Independent Research Sapere Aude grant ``Complexity via
Logic and Algebra'' (COLA).}}
\titlerunning{Complexity Hierarchies and Higher-Order Cons-Free Rewriting}

\author[]{Cynthia Kop}
\author[]{Jakob Grue Simonsen}
\affil{Department of Computer Science, University of Copenhagen (DIKU)\\
  \texttt{\{kop,simonsen\}@di.ku.dk}}

\authorrunning{C. Kop and J.\,G. Simonsen}

\Copyright{Cynthia Kop and Jakob Grue Simonsen}

\subjclass{F.1.3 Complexity Measures and Classes, F.4.2 Grammars and Other Rewriting Systems}
\keywords{higher-order term rewriting, implicit complexity, cons-freeness, ETIME hierarchy}

\EventEditors{Brigitte Pientka and Delia Kesner}
\EventNoEds{2}
\EventLongTitle{1st International Conference on Formal Structures for Computation and Deduction (FSCD 2016)}
\EventShortTitle{FSCD 2016}
\EventAcronym{FSCD}
\EventYear{2016}
\EventDate{June 22--26, 2016}
\EventLocation{Porto, Portugal}
\EventLogo{}
\SeriesVolume{52}
\ArticleNo{49}
\DOI{10.4230/LIPIcs.FSCD.2016}

\begin{document}

\maketitle

\begin{abstract}
Constructor rewriting systems are said to be cons-free if, roughly,
constructor terms in the right-hand sides of rules are subterms of
constructor terms in the left-hand side; the computational intuition
is that rules cannot build new data structures. It is well-known that
cons-free programming languages can be used to characterize
computational complexity classes, and that cons-free first-order term
rewriting can be used to characterize the set of polynomial-time
decidable sets.

We investigate cons-free higher-order term rewriting systems, the
complexity classes they characterize, and how these depend on the
order of the types used in the systems. We prove that, for every $k
\geq 1$, left-linear cons-free systems with type order $k$
characterize $\etime{k}$ if arbitrary evaluation is used
(i.e., the system does not have a fixed reduction strategy).

The main difference with prior work in implicit complexity is that (i) our results hold for
non-orthogonal term rewriting systems with possible rule overlaps
with no assumptions about reduction strategy, (ii) results for such
term rewriting systems have previously only been obtained for $k=1$, and with additional syntactic restrictions on top
of cons-freeness and left-linearity.

Our results are apparently among the first implicit
characterizations of the hierarchy $\textrm{E} = \etime{1} \subsetneq \etime{2} \subsetneq \cdots$.
Our work confirms prior results that having full non-determinism (via overlaps of rules) does not directly
allow characterization of non-deterministic complexity classes like $\textrm{NE}$.  We also show
that non-determinism makes the classes characterized highly sensitive to minor syntactic changes such as admitting product types or
non-left-linear rules.
\end{abstract}

\section{Introduction}

In~\cite{jon:01}, Jones introduces \emph{cons-free
programming}: working with a small functional programming language,
cons-free programs are exactly those where function bodies cannot contain use of data constructors (the ``cons'' operator on lists).  Put differently, a cons-free
program is \emph{read-only}: data structures cannot be created or
altered, only read from the input; and any data passed as arguments to recursive
function calls must thus be part of the original input.

The interest in such programs lies in their applicability to
computational complexity: by imposing
cons-freeness, the resulting set of programs can only decide the sets in 
a proper subclass of the Turing-decidable sets, indeed are said to
\emph{characterize} the subclass.  
Jones goes on to show that adding further restrictions such as type order
or enforcing tail recursion lowers the resulting expressiveness to known
classes.  For example, cons-free programs with data order $0$ can
decide exactly the sets in PTIME, while tail-recursive cons-free
programs with data order $1$ can decide exactly the sets in PSPACE.
The study of such restrictions and the complexity classes characterized
is a research area known as \emph{implicit complexity}
and has a long history with many distinct approaches
(see, e.g., \cite{DBLP:conf/tlca/Baillot07,baillot_et_al:LIPIcs:2012:3664,DBLP:conf/esop/BaillotGM10,DBLP:journals/cc/BellantoniC92,DBLP:journals/apal/BellantoniNS00,Hofmann:hab,DBLP:journals/tcs/KristiansenN04}).

Rather than a toy language, it 
is tantalizing to consider \emph{term rewriting} instead.  Term
rewriting systems have no fixed evaluation order (so call-by-name or
call-by-value can be introduced as needed, but are not
\emph{required});
and term rewriting is natively non-deterministic,
allowing distinct rules to be applied (``functions to be invoked'')
to the same piece of syntax,
hence could be useful for extensions towards
non-deterministic complexity classes. Implicit complexity using
term rewriting has seen significant advances using a plethora
of approaches (e.g.\ \cite{DBLP:conf/aplas/AvanziniEM12,DBLP:conf/rta/AvanziniM10,DBLP:journals/corr/AvanziniM13}).
Most of this research has, however, considered fixed evaluation orders (most prominently innermost reduction),
and if not, then systems which are either orthogonal, or at least confluent (e.g. \cite{DBLP:conf/rta/AvanziniM10}).
Almost all of the work considers only first-order rewriting.

The authors of~\cite{car:sim:14} provide a first definition of
cons-free term rewriting without constraints on evaluation order or confluence requirements, and prove that this class---limited to \emph{first-order} rewriting---characterizes $\textrm{PTIME}$.  However,
 they impose a rather severe partial linearity restriction on the
programs.  
This paper seeks to answer two questions: (i) what happens if
\emph{no} restrictions beyond left-linearity and cons-freeness are
imposed? And (ii) what if \emph{higher-order} term
rewriting---including bound variables as in the lambda
calculus---is allowed?
We obtain that $k^{\text{th}}$-order
cons-free term rewriting exactly characterizes $\etime{k}$.  This is
surprising because in Jones' rewriting-like language,
$k^{\text{th}}$-order programs characterize $\exptime{k-1}$:
surrendering both determinism and evaluation order thus significantly
increases expressivity.

\fscdversion{An extended version, including appendices with full proofs,
is available online~\cite{fullversion}.}
\arxivversion{Note that an appendix containing full proofs is included at the end
of the paper.}

\section{Preliminaries}\label{sec:prelims}

\subsection{Computational complexity}

We presuppose introductory working knowledge of computability and complexity theory (corresponding to standard textbooks, e.g., \cite{Jones:CompComp}). Notation is fixed below.

Turing Machines (TMs)
are triples $(A,S,T)$ where
$A$ is a finite set of tape symbols such that $A \supseteq I \cup
\{\blank\}$, where $I \supseteq \{ 0,1 \}$ is a set of \emph{initial
symbols} and $\blank \notin I$ is the special \emph{blank} symbol;
$S \supseteq \{ \symb{start}, \symb{accept}, \symb{reject} \}$ is a
 finite set of states, and 
 $T$ is a finite set of transitions $(i,r,w,d,j)$
  with
$i \in S \setminus \{\symb{accept},\symb{reject}\}$ (the \emph{original state}),
$r \in A$ (the \emph{read symbol}), $w \in A$ (the \emph{written symbol}),
 $d \in \{ \symb{L},\symb{R} \}$ (the \emph{direction}), and
$j \in S$ (the \emph{result state}).
We sometimes write this transition as $\transition{i}{r}{w}{d}{j}$.
All TMs in the paper are deterministic and (which
we can assume wlog.) do not get stuck: for every pair $(i,r)$ with
$i \in S\setminus \{\symb{accept},\symb{reject}\}$ and $r \in A$ there
is exactly one transition $(i,r,w,d,j)$.
Every TM has a single, right-infinite tape.  
  
A \emph{valid tape} is a right-infinite sequence of tape symbols with only
finitely many not $\blank$.  A \emph{configuration} of a TM
is a triple $(t,p,s)$ with $t$ a valid tape, $p \in \nats$ and $s \in S$. The transitions $T$
induce a binary relation $\arrzt$ between configurations in the
obvious way.

A TM with input alphabet $I$ \emph{decides}
$X \subseteq I^+$ if for any string $x \in I^+$, we have 
$x \in X$ if{f}
$(\blank x_1\dots x_n\blank\blank\dots,0,\symb{start}) \arrzt^*
(t,i,\symb{accept})$ for some $t,i$, and $(\blank x_1\dots x_n\blank
\blank\dots,0,\symb{start})\linebreak 
\arrzt^* (t,i,\symb{reject})$ otherwise
(i.e., the machine halts on all inputs, ending in 
$\symb{accept}$ or $\symb{reject}$ 
depending on whether $x \in X$).
If $f : \nats \longrightarrow \nats$ is a function, a (deterministic) TM
\emph{runs in time} $\lambda n.f(n)$ if, for each $n \in \nats \setminus \{0\}$ and each $x \in I^n$:
$(\blank x\blank\blank\dots,0,\symb{start}) \arrzt^{\leq f(n)} (t,i,\unknown{s})$ for $\unknown{s} \in \{\symb{accept},\symb{reject}\}$, where $\arrzt^{\leq f(n)}$ denotes a sequence of at most
$f(n)$ transitions.

\paragraph*{Complexity and the ETIME hierarchy}

For $k,n \geq 0$, let $\mathrm{exp}_2^0(n) = n$ and
$\mathrm{exp}_2^{k+1}(n) = 2^{\mathrm{exp}_2^k(n)} =
\mathrm{exp}_2^k(2^n)$.

\begin{definition}
Let $f : \nats \longrightarrow \nats$ be a function. Then, $\timecomp{f(n)}$ is the set of all
 $S \subseteq \{0,1\}^+$ such that there exist
 $a > 0$ and a deterministic TM running in time $\lambda n.a \cdot f(n)$ that decides $S$ (i.e., $S$ is decidable in time $O(f(n))$).
For $k \geq 1$ define:
$
\etime{k} \triangleq \bigcup_{a \in \nats} \timecomp{\textrm{exp}_2^{k}(an)}
$
\end{definition}

Observe in particular that $\etime{1} = \bigcup_{a \in \nats} \timecomp{\textrm{exp}_2^{1}(an)} = \bigcup_{a \in \nats} \timecomp{2^{an}} = \textrm{E}$ (where $\textrm{E}$ is the usual complexity class of this name, see e.g., \cite[Ch.\ 20]{Papadimitriou:complexity}).

Note that for any $d,k \geq 1$, we have
$(\mathrm{exp}_ 2^k(x))^d = 2^{d \cdot \mathrm{exp}_2^{k-1}(x)
} \leq 2^{\mathrm{exp}_2^{k-1}(dx)} = \mathrm{exp}_2^k(dx)$. Hence, if $P$ is a polynomial with non-negative
integer coefficients and the set $S \subseteq \{0,1\}^+$ is decided by an algorithm running in time $O(P(\mathrm{exp}_ 2^k(an)))$ for some $a \in \nats$, then $S \in \etime{k}$.

Using the Time Hierarchy Theorem \cite{Sipser:comp}, it is easy to see that
$\textrm{E} = \etime{1} \subsetneq \etime{2} \subsetneq \etime{3} \subsetneq \cdots$.
The union
$\bigcup_{k \in \nats} \etime{k}$ is the set $\textrm{ELEMENTARY}$ of elementary languages.

\subsection{Higher-order rewriting}

Unlike first-order term rewriting, there is no single, unified approach to
higher-order term rewriting, but rather a number of different co-extensive systems
with distinct syntax; for an overview of basic issues, see \cite{T03_R}.
\CKchange{We} will use \emph{Algebraic Functional Systems} (AFSs)
\cite{jou:rub:99,bla:jou:rub:08}\CKchange{, in the simplest form
(which disallows partial applications)}.
However, \CKchange{our proofs do not use any} particular features of
AFSs that preclude using different higher-order formalisms.

\paragraph*{Types and Terms}

We assume a non-empty set $\Sorts$ of \emph{sorts}, and define
types and type orders as follows:
(i) every $\asort \in \Sorts$ is a type of order \emph{0}; (ii) if $\atype,\btype$ are types of order $n$ and $m$ respectively, then $\atype \arrtype \btype$ is a type of order $\max(n+1,m)$.
Here
$\arrtype$ is right-associative, so $\atype \arrtype \btype \arrtype
\ctype$ should be read $\atype \arrtype (\btype \arrtype \ctype)$.
A \emph{type declaration} of order $k \geq 0$ is a tuple $[\atype_1 \times
\dots \times \atype_n] \arrtype \asort$ with all $\atype_i$ types of
order at most $k-1$, and $\asort \in \Sorts$; if $n = 0$ this
declaration may simply be denoted $\asort$.

We additionally assume given disjoint sets $\F$ of \emph{function
symbols} and $\V$ of \emph{variables}.  Each symbol in $\F$
is associated with a unique type declaration, and each variable in
$\V$ with a unique type.
The set $\Terms(\F,\V)$ of \emph{terms over $\F$ and $\V$} consists of
those expressions $s$ such that $\vdash s : \atype$ can be
derived for some type $\atype$ using the following clauses:

\begin{tabular}{lllll}
  \clausevar &
  & $\vdash x : \atype$
  & \mbox{} & if $x:\atype \in \V$ \\
  \clauseapp &
  & $\vdash s \cdot t : \btype$
  & \mbox{}& if $s : \atype \arrtype \btype$ and
      $t : \atype$  \\
  \clauseabs &
  & $\vdash \abs{x}{s} : \atype \arrtype \btype$
  & \mbox{} & if $x:\atype \in \V$ and $s : \btype$ \\
  \clausefun &
  & $\vdash f (s_1 , \ldots , s_n) : \asort$
  & \mbox{} &
  if $f : {[\atype_1 \times \ldots \times \atype_n] \arrtype \asort} \in \F$
  and
  $s_1 :\atype_1 , \ldots, s_n:\atype_n$
\end{tabular}

Clearly, each term has a \emph{unique} type.
Note that a function symbol $f : [\atype_1 \times \ldots \times
\atype_n] \arrtype \asort$ takes exactly $n$ arguments, and its output
type $\asort$ is a sort.  The \emph{abstraction} construction
$\abs{x}{s}$ binds occurrences of $x$ in $s$ as in
the $\lambda$-calculus, and $\alpha$-conversion is defined for terms \emph{mutatis mutandis};  we identify
terms modulo $\alpha$-conversion, renaming bound variables if necessary.
Application is left-associative.
The set of variables of $s$ which are not bound is denoted
$\FV(s)$.  A term $s$ is \emph{closed} if $\FV(s) =
\emptyset$.
We say that a term $s$ \emph{has base type}
if $\vdash s : \asort \in \Sorts$.

\begin{example}\label{ex:bitlists}
We will often use extensions of the signature
$\F_{\bits}$, given by:
\[
\begin{array}{rclrclrcl}
\strue & : & \bool &
\phantom{XY}\nul & : & [\bits] \arrtype \bits &
\phantom{XY}\nil & : & \bits \\
\sfalse & : & \bool &
\one & : & [\bits] \arrtype \bits \\
\end{array}
\]
Terms are for instance $\strue,\ \abs{x}{\nul(\one(x))}$ and
$\app{(\abs{x}{\nul(x)})}{\one(y)}$.  The first and last
of these terms have
base type, and the first two are closed; the last one has $y$ as a
free variable.
\end{example}
A \emph{substitution} is a type-preserving map from $\V$ to
$\Terms(\F,\V)$ which is the identity on all but finitely many
variables.  Substitutions
$\gamma$ are extended to arbitrary terms $s$, notation $s\gamma$,
by using $\alpha$-conversion to rename all bound variables in $s$ to
fresh ones, then replacing each unbound variable $x$ by $\gamma(x)$.
A \emph{context} $C[]$ is a
term in $\Terms(\F,\V)$ in which a single occurrence of a variable
\CKchange{is}
replaced by
\CKchange{a symbol}
$\Box \notin \F \cup \V$.
The result of replacing $\Box$ in $C[]$ by a term $s$
(of matching type) is denoted $C[s]$.
Free variables may be captured; e.g.
$(\abs{x}{\Box})[x] =\abs{x}{x}$.
If $s = C[t]$ we say that $t$ is a \emph{subterm} of
$s$, notation $t \subtermeq s$, or $t \subterm s$ if $C[] \neq \Box$.

\paragraph*{Rules and Rewriting}
A \emph{rule} is a pair $\ell \arrz r$ of terms in $\Terms(\F,\V)$
with the same \emph{sort} (i.e.~
$\vdash \ell : \asort$ and $\vdash r : \asort$
\CKchange{for some $\asort \in \Sorts$}), such that $\ell$ has
the form $f(\ell_1,\dots,\ell_n)$ with $f \in \F$
and such that $\FV(r) \subseteq \FV(\ell)$.
A rule $\ell \arrz r$ is \emph{left-linear} if every variable occurs at most once in $\ell$.
We assume given a set $\Rules$ of rules, and define the one-step rewrite
relation $\arr{\Rules}$ on $\Terms(\F,\V)$ as follows:

\begin{tabular}{rcll}
$C[\ell\gamma]$ & $\arr{\Rules}$ & $C[r\gamma]$ &
  with $\ell \arrz r \in \Rules$, $C$ a context, $\gamma$ a
  substitution \\
$C[(\abs{x}{s}) \cdot t]$ &
  $\arr{\Rules}$ & $C[s[x:=t]]$ & \\
\end{tabular} \\
We may write $s \arr{\beta} t$ for a rewrite step using \clausebeta.
Let $\arrp{\Rules}$ denote the transitive closure of $\arr{\Rules}$
and $\arrr{\Rules}$ the transitive-reflexive closure.
We say that \emph{$s$ reduces to $t$} if $s \arr{\Rules} t$.
A term $s$ is in
\emph{normal form} if there is no $t$ such that $s \arr{
\Rules} t$,
and \emph{$t$ is a normal form of $s$} if $s \arrr{\Rules}
t$ and $t$ is in normal form.
An AFS is a pair $(\F,\Rules)$, generating a set of terms and a
reduction relation.
The \emph{order} of an AFS is the maximal
order of any type declaration in $\F$.

\begin{example}\label{ex:plusfo}
Recall the signature $\F_\bits$ from Example~\ref{ex:bitlists};
let $\F_{\symb{count}}$ be its extension with $\symb{succ} :
[\bits] \arrtype \bits$.  We consider the AFS $(\F_{\symb{count}},
\Rules_{\symb{count}})$ with the following rules:
\[
\begin{array}{crclcrcl}
\mathsf{(A)} & \symb{succ}(\nil) & \arrz & \one(\nil) &
\mathsf{(B)} & \symb{succ}(\nul(xs)) & \arrz & \one(xs) \\
& & & &
\mathsf{(C)} & \symb{succ}(\one(xs)) & \arrz & \nul(\symb{succ}(xs)) \\
\end{array}
\]
This is a \emph{first-order} AFS, implementing the successor function
on a binary number expressed as a bitstring with the least significant
digit first.  For example, 
\CKchange{$5$ is represented by $\one(\nul(\one(\nil)))$, and indeed
$\symb{succ}(\one(\nul(\one(\nil)))) \arr{\Rules}
\nul(\symb{succ}(\nul(\one(\nil)))) \arr{\Rules}
\nul(\one(\one(\nil)))$, which represents $6$.}
\end{example}

\begin{example}\label{ex:plusho}
Alternatively, we may define a bit-sequence as a \emph{function}: let
$\F_{\symb{hocount}}$ be the extension of $\F_\bits$ with
$\symb{not} : [\bool] \arrtype \bool,\ 
\symb{ite} : [\bool
\times \bool \times \bool]  \arrtype \bool$ and $\symb{all},\ 
\symb{succ} : [(\bool \arrtype \bool) \times \bits] \arrtype \bits$.
Let $\Rules_{\symb{hocount}}$ consist of:
\[
\begin{array}{crclcrcl}
\mathsf{(A)} & \symb{ite}(\strue,x,y) & \arrz & x &
\mathsf{(C)} & \symb{not}(x) & \arrz & \symb{ite}(x,\sfalse,\strue) \\
\mathsf{(B)} & \symb{ite}(\sfalse,x,y) & \arrz & y &
\mathsf{(D)} & \symb{all}(F,\nil) & \arrz & \app{F}{\nil} \\
\mathsf{(E)} & \symb{all}(F,\unknown{a}(xs)) & \arrz &
  \multicolumn{5}{l}{
    \symb{ite}(\app{F}{\unknown{a}(xs)},\symb{all}(F,xs),\sfalse)
    \hfill\llbracket\text{for}\ \unknown{a} \in \{\nul,\one\}\rrbracket
  } \\
\mathsf{(F)} & \symb{succ}(F,\nil) & \arrz &
  \multicolumn{5}{l}{\symb{not}(\app{F}{\nil})} \\
\mathsf{(G)} & \symb{succ}(F,\unknown{a}(xs)) & \arrz &
  \multicolumn{5}{l}{
    \symb{ite}(\symb{all}(F,xs),\symb{not}(\app{F}{\unknown{a}(xs)}),
    \app{F}{\unknown{a}(xs)})\ \llbracket\text{for}\ \unknown{a} \in
    \{\nul,\one\}\rrbracket
  } \\
\end{array}
\]
Note that $\mathsf{(E)}$ and $\mathsf{(G)}$ each represent \emph{two}
rules: one for each choice of $\unknown{a}$.
This AFS is second-order, due to
\CKchange{$\symb{all}$ and $\symb{succ}$}.
A function $F$
represents a (potentially infinite) binary number, with the
$i^{\text{th}}$ bit given by $\app{F}{t}$ \CKchange{for any bitstring $t$}
of length $i$ (counting from $i = 0$, \CKchange{so $t = \nil$}).  Thus,
the number $0$ is represented by\CKchange{, e.g.,} 
$\abs{x}{\sfalse}$, and $1$ by $\symb{ONE} ::=
\abs{x}{\symb{succ}(\abs{y}{\sfalse},x)}$.  
\CKchange{Indeed}
$\app{\symb{ONE}}{\nil} = \app{(\abs{x}{\symb{succ}(\abs{y}{\sfalse},
x)})}{\nil} \arr{\beta} \symb{succ}(\abs{y}{\sfalse},\nil)
\arr{\Rules} \symb{not}(\app{(\abs{y}{\sfalse})}{\nil}) \arr{\beta}
\symb{not}(\sfalse) \arr{\Rules} \strue$, and
$\app{\symb{ONE}}{\nul^k(\nil)} \arrr{\Rules} \sfalse$ for $k > 0$.
\end{example}

We fix a partitioning of $\F$ into two disjoint sets, $\Defineds$ of
\emph{defined symbols} and $\Constructors$ of \emph{constructor
symbols}, such that $f \in \Defineds$ for all $f(\vec{\ell}) \arrz r
\in \Rules$.
A term $s$ is a \emph{constructor term} if it is in
$\Terms(\Constructors,\V)$ and a \emph{proper constructor term} if it
also contains no applications or abstractions.
A \emph{closed} proper constructor term is also called a \emph{data
term}.
The set of data terms is denoted $\Data$.
Note that data terms are built using only clause \clausefun.
A term $f(s_1,\dots,s_n)$ with $f \in \Defineds$ and each $s_i \in
\Data$ is called a \emph{basic term}.
A \emph{constructor rewriting system} is an AFS where each rule
$f(\ell_1,\dots,\ell_n) \arrz r \in \Rules$ satisfies that all
$\ell_i$ are proper constructor terms (and $f \in \Defineds$).
An AFS is a \emph{left-linear constructor rewriting system} if moreover each rule is left-linear.

In a constructor rewriting system, $\beta$-reduction steps
can always be done prior \CKchange{to 
other} steps: if $s$ has a normal form $q$ and $s
\arr{\beta} t$, then also $t \arrr{\Rules} q$.  Therefore we can
\CKchange{(and will!)}
safely assume 
that the right-hand sides of 
rules
\CKchange{are in normal form with respect to $\arr{\beta}$}.

\begin{example}
The AFSs from Examples~\ref{ex:plusfo} and~\ref{ex:plusho} are
left-linear constructor rewriting systems.  In
Example~\ref{ex:plusfo}, $\Constructors = \F_\bits$ and $\Defineds =
\{\symb{succ}\}$.  If a rule $\nul(\nil) \arrz \nil$ were added to
$\Rules_{\symb{count}}$, it would no longer be a constructor system,
as this would force $\nul$ to be in $\Defineds$, conflicting
with rule $\mathsf{(B)}$.
A rule such as $\symb{equal}(xs,xs) \arrz \strue$ would break
left-linearity.
\end{example}

\begin{remark}
Constructor rewriting systems---typically
left-linear---are very common both in the literature on term
rewriting and in functional programming, where similar restrictions
are imposed.  Left-linear systems are
well-behaved: contraction of non-overlapping redexes cannot 
destroy redexes that they themselves are arguments of.  Constructor
systems avoid non-root overlaps, and allow for a clear split between
data and intermediate terms.

They are, however, less common in the literature on \emph{higher-order} term
rewriting, and the notion of a \emph{proper} constructor term is new
for AFSs (although the exclusion of abstractions and applications in
the left-hand sides roughly corresponds to \emph{fully extended
pattern HRSs} in Nipkow's style of higher-order
rewriting~\cite{may:nip:98}).  
\end{remark}


\paragraph*{Deciding problems using rewriting}

Like Turing Machines, an AFS can decide a set $X \subseteq I^+$
(where $I$ is a finite set of symbols).
Consider AFSs with a signature $\F = \Constructors
\cup \Defineds$ where $\Constructors$ contains symbols $\nil :
\bits,\ \strue : \bool,\ \sfalse : \bool$ and $\symb{a} : [\bits]
\arrtype \bits$ for all $a \in I$.
There is an \CKchange{obvious} 
correspon\-dence between elements of
$I^+$ and data terms of sort $\bits$; if $x \in I^+$, we write
$\encode{x}$ for the corresponding data term.
The AFS \emph{accepts} $D \subseteq I^+$ if
there is a designated \CKchange{defined} symbol
$\symb{decide} : [\bits] \arrtype \bool$ such
that, for every $x \in I^+$ we have $\symb{decide}(\encode{x})
\arrr{\Rules} \symb{true}$ if{f} $x \in D$.
More generally, we are interested in the reductions of a given
\emph{basic term} to a \emph{data term}.

We use the acceptance criterion above---reminiscent of the acceptance criterion of non-deterministic Turing machines---because term rewriting is inherently non-deterministic unless
further constraints (e.g., orthogonality) are imposed. Thus, an input $x$ is ``rejected'' by a rewriting system if{f} there is no reduction to $\symb{true}$ from $\symb{decide}(\encode{x})$\CKchange{;} and as evaluation
is non-deterministic, there may be many distinct reductions starting from $\symb{decide}(\encode{x})$.

\section{Cons-free rewriting}\label{sec:limitations}

Since the purpose of this research is to find groups of programs which
can handle \emph{restricted} classes of Turing-computable problems, we
will impose certain limitations.  In
particular, we will limit interest to \emph{cons-free left-linear
constructor rewriting systems}.

\begin{definition}\label{def:consfree}
A rule $\ell \arrz r$, presented using $\alpha$-conversion in a form
where all binders are distinct from $\FV(\ell)$, is
\emph{cons-free} if for all subterms $s = f(s_1,\dots,s_n) \subtermeq r$
with $f \in \Constructors$, we have $s \subterm \ell$ or $s \in \Data$.
A left-linear constructor AFS $(\F,\Rules)$ is cons-free if all rules
in $\Rules$ are.
\end{definition}

This definition corresponds largely to the definitions of
cons-freeness appearing in~\cite{car:sim:14,jon:01}.  In a cons-free
AFS, it is not possible to create more data, as we will see in
Section~\ref{subsec:properties}.

\begin{example}
The AFS from Example~\ref{ex:plusfo} is not cons-free due to rules
(B) and (C).  The AFS from
Example~\ref{ex:plusho} \emph{is} cons-free (in rules (E) and (G),
$\unknown{a}(xs)$ is allowed to occur on the right despite
the constructor $\unknown{a}$, because it also occurs on the left).
However, there are few interesting basic terms, as we do not consider
for instance $\symb{succ}(\abs{x}{\sfalse},\nil)$ basic.
\end{example}

\begin{remark}
The limitation to left-linear constructor AFSs is
standard, but also \emph{necessary}: if either restriction is
dropped, our limitation to cons-free AFSs becomes
meaningless.  In the case of constructor systems, this is obvious:
if defined symbols are allowed to occur within a left-hand side, then
we could simply let $\Defineds := \F$ and have a ``cons-free'' system.
The case of left-linearity is a bit more sophisticated; this we will
study in more detail in Section~\ref{sec:changes}.

As the first two restrictions are necessary to give meaning to the
third, we will \CKchange{consider the limitation to left-linear
constructor AFSs implicit in the notion ``cons-free''}.
\end{remark}

\subsection{Properties of Cons-free Term Rewriting}
\label{subsec:properties}

As mentioned, cons-free term rewriting cannot create new data.  This
means that the set of data terms that might occur during a reduction
starting in some basic term $s$ are exactly the data terms occurring
in $s$, or those occurring in the right-hand side of some rule.
Formally:

\begin{definition}
Let $(\F,\Rules)$ be a constructor AFS. For a given term
$s$, the set $\B_s$ contains all data terms $t$ such that
(i) $s \suptermeq t$,
or (ii) $r \suptermeq t$ for some rule $\ell \arrz r \in \Rules$.
\end{definition}

$\B_s$ is a set of data terms, is closed under subterms and, since we
have assumed $\Rules$ to be fixed, has a linear number of elements in
the size of
$s$.  The property that no new data is generated by reducing $s$ is
formally expressed by the following result:

\begin{definition}[$\B$-safety]
Let $\B \subseteq \Data$ be a set which (i) is closed under taking
subterms, and (ii) contains all data terms occurring as a subterm of
the right-hand side of a rule in $\Rules$.  A term $s$ is
\emph{$\B$-safe} if for all $t$ with $s \suptermeq t$: if $t$ has the
form $c(t_1,\dots,t_m)$ with $c \in \Constructors$, then $t \in \B$.
\end{definition}

\edef\safetypreservelem{\number\value{theorem}}

\begin{lemma}\label{lem:safetypreserve}
If $s$ is $\B$-safe and $s \arr{\Rules} t$, then $t$ is $\B$-safe.
\end{lemma}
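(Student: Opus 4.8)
The statement is that $\B$-safety is preserved under single rewrite steps. The plan is to analyze the two kinds of rewrite step separately, since $\arr{\Rules}$ is defined by exactly two clauses: a rule step $C[\ell\gamma] \arr{\Rules} C[r\gamma]$ and a beta step $C[(\abs{x}{s})\cdot t] \arr{\Rules} C[s[x:=t]]$. In both cases, the redex sits inside a context $C[]$, and the first observation is that $\B$-safety is a property quantified over all subterms of $s$: a term is $\B$-safe iff every one of its subterms that is headed by a constructor lies in $\B$. So if $s = C[u]$ and $s \arr{\Rules} C[u']$, it suffices to check (a) that $u'$ is $\B$-safe, and (b) that the context $C[]$ together with the replacement does not create a new constructor-headed subterm outside $u'$ that fails to be in $\B$. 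For (b): any subterm $t$ of $C[u']$ that is not a subterm of $u'$ is either a subterm of $s$ disjoint from the hole (hence already in $\B$ by $\B$-safety of $s$), or it is a term $C'[u']$ properly containing the new subterm; but such a term has the same head symbol as $C'[u]$, and if that head is a constructor, then $C'[u]$ was a constructor-headed subterm of the $\B$-safe term $s$, hence in $\B$ — and membership in $\B$ depends only on the head symbol being a constructor and... no, wait: $C'[u]$ and $C'[u']$ are different terms, so I cannot directly transfer membership. The cleaner route is: in a constructor rewriting system, $\ell = f(\ell_1,\dots,\ell_n)$ with $f \in \Defineds$, so $\ell\gamma$ is headed by a defined symbol; similarly $(\abs{x}{s})\cdot t$ is an application. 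In either case the redex is not a constructor-headed term, so the hole of $C[]$ never sits directly below a constructor application in a way that matters — more precisely, every constructor-headed proper superterm of the redex in $s$ is also a constructor-headed term in $C[u']$ only if its arguments are unchanged, but its arguments contain the redex. Let me restate: the key point is that a constructor-headed subterm of $C[u']$ strictly containing $u'$ would have to be some $C'[u']$ where $C'[]$ is headed by a constructor; but a constructor $c$ takes proper-constructor-term arguments only in data terms, and more to the point, such a $C'[u]$ is a constructor-headed subterm of $s$, so lies in $\B$, so is a data term, so contains no defined symbols and no applications — contradicting that it contains the redex $u$. Hence no constructor-headed subterm of $C[u']$ strictly contains $u'$, and (b) reduces entirely to checking subterms within $C[]$ disjoint from the hole, which are subterms of $s$ and thus fine.

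It therefore remains to prove (a): the contractum $u'$ is itself $\B$-safe. For the beta case, $u' = s[x:=t]$ where $(\abs{x}{s})\cdot t$ is $\B$-safe; every constructor-headed subterm of $s[x:=t]$ is either a substitution instance $v[x:=t]$ of a constructor-headed subterm $v$ of $s$, or a subterm of $t$ (arising where an occurrence of $x$ was replaced); both $s$ and $t$ are subterms of the $\B$-safe redex, so by a small induction on the structure of $s$ one shows every such subterm lies in $\B$ — here using that if $v = c(v_1,\dots,v_m)$ is constructor-headed then $v \in \B$ and $\B$ is closed under subterms, and a constructor-headed term with a subterm replaced stays... again the same subtlety: $v[x:=t]$ need not equal $v$. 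But $v \in \B$ means $v$ is a data term, hence closed and contains no variables, so $v[x:=t] = v \in \B$. That resolves it cleanly. For the rule case, $u' = r\gamma$ where $\ell\gamma$ is $\B$-safe; a constructor-headed subterm of $r\gamma$ is either (i) a subterm $x\gamma$ of some $\gamma(x)$ — and since $x \in \FV(\ell)$, $\gamma(x)$ is a subterm of $\ell\gamma$ which is $\B$-safe, so we are done; or (ii) of the form $v\gamma$ where $v = f(v_1,\dots,v_m) \subtermeq r$ is constructor-headed with $f \in \Constructors$. By cons-freeness (Definition~\ref{def:consfree}), either $v \in \Data$, in which case $v\gamma = v \in \B$ by closure of $\B$ under subterms and condition (ii) of the definition of a $\B$-safe set, or $v \subterm \ell$, so $v\gamma \subterm \ell\gamma$, which is $\B$-safe, giving $v\gamma \in \B$.

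The main obstacle, as the discussion above shows, is the bookkeeping around substitution and subterm-under-context: one must be careful that "constructor-headed subterm" is the right invariant and that every way of creating a new constructor-headed subterm — by contraction inside the redex, by instantiating a variable, or by the context wrapping around the contractum — is covered. The crucial enabling facts are that $\B$ consists of \emph{data} terms (closed, variable-free, application-free), that $\B$ is closed under subterms, and that cons-freeness forces every constructor-headed subterm of a right-hand side to be either data or a subterm of the corresponding left-hand side; once these are in hand the case analysis is routine. I would organize the write-up as: fix the redex position, dispose of subterms disjoint from or strictly containing the hole using the constructor-system property, then handle the contractum by the two cases (beta / rule) as above.
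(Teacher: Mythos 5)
Your proof is correct: all the key facts are in place (the redex is headed by a defined symbol or is an application, so no constructor-headed subterm of $s$ can strictly contain it, since such a subterm would be in $\B$ and hence a data term; subterms disjoint from the hole are untouched; and the contractum is handled via the cons-freeness dichotomy and the fact that members of $\B$ are closed, so $v[x:=t]=v$ and $v\gamma=v$ for constructor-headed $v$ already in $\B$). The organization differs from the paper's: the paper proceeds by structural induction on $s$ to reduce to the root case, and then handles both the $\beta$-case and the rule case uniformly via an auxiliary parametrized notion ($\B^X$-safety, where $X$ collects the left-hand-side patterns $\vec{\ell}$) together with a substitution lemma stating that instances of $\B^X$-safe terms are $\B$-safe whenever the relevant instantiations are. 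Your version inlines that substitution lemma as a direct case analysis on where a constructor-headed subterm of the contractum can come from, exploiting closedness of data terms instead of the $\B^X$ machinery, and replaces the structural induction by the observation that the context can contribute no new constructor-headed subterms at all. What the paper's factoring buys is reusability (the $\B^X$-safe formulation is exactly what makes ``the right-hand side of a cons-free rule is $\B^{\{\vec{\ell}\}}$-safe'' a one-line remark) and a cleaner treatment of bound-variable renaming in right-hand sides; what your route buys is a shorter, self-contained argument with no auxiliary definition. One point to make explicit in a write-up: in the rule case you use that $x\in\FV(r)\subseteq\FV(\ell)$ and that the $\ell_i$ are proper constructor terms (no binders), so $\gamma(x)$ really is a subterm of $\ell\gamma$ without capture issues.
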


\begin{proof}[Proof Sketch]
By induction on the form of $s$; the result follows trivially by the
induction hypothesis if the reduction does not take place at the root,
leaving only the base cases $s = \app{(\abs{x}{u})}{v} \arr{\Rules}
u[x:=v] = t$ and $s = \ell\gamma \arr{\Rules} r\gamma = t$.  The first
of these is easy by induction on the form of the ($\B$-safe!) term
$u$, the second follows by induction on the form of $r$ (which, as the
right-hand side of a cons-free rule, has convenient properties).
\end{proof}

Thus, if we start with a basic term $f(\vec{s})$,
any data terms occurring in a reduction $f(\vec{s}) \arrr{\Rules}
t$ (directly or as subterms) are in $\B_{f(\vec{s})}$.
This insight will be instrumental in Section~\ref{sec:algorithm}.

\begin{example}\label{ex:palindrome}
By Lemma~\ref{lem:safetypreserve}, functions in a
cons-free AFSs cannot
build recursive data.
To code around this, we might use
subterms of the input as a measure of length.  Consider the
decision problem whether a given bitstring is a palindrome.
We cannot use a rule such as $\symb{decide}(cs) \arrz
\symb{equal}(cs,\symb{reverse}(cs))$ since, by
Lemma~\ref{lem:safetypreserve}, it is impossible to define
$\symb{reverse}$.
Instead, a typical solution uses a string $ys$ of length $k$ to find
$\encode{c_k}$ in $\encode{c_0\dots c_{n-1}}$:
\[
\begin{array}{rcl}
\symb{decide}(cs) & \arrz & \symb{palindrome}(cs,cs) \\
\symb{palindrome}(cs, \nil) & \arrz & \strue \\
\symb{palindrome}(cs, \unknown{a}(ys)) & \arrz &
  \symb{and}(\symb{palindrome}(cs,ys),
             \symb{chk}_{\unknown{a}}(cs, ys))\ \ \ \ \ \ \ \ 
  \hfill \llbracket \unknown{a} \in \{\nul,\one\}\rrbracket \\
\end{array}
\]
\vspace{-8pt}
\[
\begin{array}{rclrcl}
\symb{and}(\strue,x) & \arrz & x &
\symb{chk}_{\unknown{a}}(\unknown{a}(xs),\nil) & \arrz & \strue\ \ 
  \hfill \llbracket \unknown{a} \in \{\nul,\one\}\rrbracket \\
\symb{and}(\sfalse,x) & \arrz & \sfalse &
\symb{chk}_{\unknown{a}}(\unknown{b}(xs),\nil) & \arrz & \sfalse
  \ \ \ \hfill \llbracket \unknown{a},\unknown{b} \in
  \{\nul,\one\} \wedge \unknown{a} \neq \unknown{b}\rrbracket \\
& & & 
\symb{chk}_{\unknown{a}}(\unknown{b}(xs),\unknown{c}(ys)) & \arrz &
  \symb{chk}_{\unknown{a}}(xs,ys)\ \ 
  \hfill \llbracket \unknown{a},\unknown{b},\unknown{c} \in \{\nul,
  \one\}\rrbracket \\
\end{array}
\]
(The signature extends $\F_\bits$, but is otherwise omitted as types
can easily be derived.)
\end{example}

Through cons-freeness, we obtain another useful property: we do not
have to consider constructors which take functional
arguments.

\begin{lemma}\label{lem:niceconstructor}
Given a cons-free AFS $(\F,\Rules)$ with $\F = \Defineds \cup
\Constructors$, let $Y = \{ c : [\atype_1 \times
\dots \times \atype_n] \arrtype \asort \in \Constructors$ some
$\atype_i$ is not a sort$\}$.  Define $\F' := \F \setminus Y$, and
let $\Rules'$ consist of those rules in $\Rules$ not using any
element of $Y$ in either left- or right-hand side.  Then
  (a) all data and $\B$-safe terms are in $\Terms(\F',\emptyset)$, and
  (b) if $s$ is a basic term and $s \arrr{\Rules} t$, then
  $t \in \Terms(\F',\V)$ and $s \arrr{\Rules'} t$.
\end{lemma}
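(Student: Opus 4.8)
The plan is to first establish (a) directly from the definitions, and then derive (b) by induction on the reduction length, using $\B$-safety (Lemma~\ref{lem:safetypreserve}) as the key invariant. For part (a): a data term is a closed proper constructor term, so it is built solely from clause \clausefun{} applied to constructors; if such a term contained a symbol $c \in Y$, then $c$ would be applied to $n$ arguments $s_1,\dots,s_n$ with $s_i : \atype_i$, and since data terms contain no abstractions or applications, each $s_i$ would itself be a proper constructor term of type $\atype_i$. But a proper constructor term of type $\atype_i$ can only exist if $\atype_i$ is a sort (its head symbol has output type a sort, and it takes no further arguments since there is no application). This contradicts the choice of $c \in Y$. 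Hence no data term uses a symbol of $Y$, i.e.\ $\Data \subseteq \Terms(\F',\emptyset)$. For $\B$-safe terms: by definition every constructor-headed subterm of a $\B$-safe term lies in $\B \subseteq \Data$, so by the previous sentence it uses no symbol of $Y$; since moreover $\B$-safety of $s$ is preserved by all subterms of $s$, every constructor-headed subterm is $Y$-free, and any remaining occurrence of a constructor symbol would itself head such a subterm — so no symbol of $Y$ occurs anywhere in a $\B$-safe term. (One should double-check the corner case of a constructor symbol appearing partially applied or under an abstraction; but in an AFS a constructor $c : [\vec\atype] \arrtype \asort$ always appears as $c(\dots)$ with all $n$ arguments supplied, so every occurrence of $c$ does head a constructor-term subterm, and the argument goes through.)

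For part (b): a basic term $f(\vec{s})$ with all $s_i \in \Data$ is $\B_{f(\vec s)}$-safe by inspection (its only constructor-headed subterms are subterms of the $s_i$, hence in $\B_{f(\vec s)}$). Proceed by induction on the length of the reduction $s \arrr{\Rules} t$. The reflexive case is immediate. For the inductive step, write $s \arrr{\Rules} u \arr{\Rules} t$; by the induction hypothesis $u \in \Terms(\F',\V)$, $u$ is $\B$-safe (Lemma~\ref{lem:safetypreserve}), and $s \arrr{\Rules'} u$. For the final step $u = C[q] \arr{\Rules} C[q'] = t$: if the step is a $\beta$-step, then $t$ results from $u$ by substitution of one of its subterms into another, so $t$ uses no symbols not already in $u$, hence $t \in \Terms(\F',\V)$, and the step is also an $\arr{\Rules'}$ step (the $\beta$-rule is in every rule set). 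If the step uses a rule $\ell \arrz r \in \Rules$ with $q = \ell\gamma$, I need to show (i) this rule is in $\Rules'$, and (ii) $t \in \Terms(\F',\V)$. For (i): since $u$ is $\B$-safe, the subterm $\ell\gamma$ is $\B$-safe; as $\ell$ is a proper constructor pattern $f(\ell_1,\dots,\ell_n)$, each argument $\ell_i\gamma$ is a $\B$-safe constructor term, so its constructor symbols all lie in $\Data$ and hence avoid $Y$ — but the matched subterm carries exactly the constructor structure of $\ell_i$, so $\ell$ itself uses no symbol of $Y$; and if $r$ used a symbol $c \in Y$, then by cons-freeness that constructor-headed subterm of $r$ would have to be a subterm of $\ell$ or a data term, both of which we just argued are $Y$-free — contradiction. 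Hence the rule is in $\Rules'$. For (ii): $t = C[r\gamma]$; by $\B$-safety of $t$ (again Lemma~\ref{lem:safetypreserve}) and the observation in part (a) that $\B$-safe terms are $Y$-free, $t \in \Terms(\F',\V)$. Combining, $s \arrr{\Rules'} u \arr{\Rules'} t$, completing the induction.

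The main obstacle I anticipate is the bookkeeping in the rule-step case of (b): one must argue that the \emph{rule itself} — not just the matched instance — is $Y$-free, which requires transferring $Y$-freeness from $\ell\gamma$ back to $\ell$ (fine, since matching a proper constructor pattern exposes all of $\ell$'s symbols) and then using cons-freeness to push this to $r$. The cleanest route is to prove a small auxiliary observation once and for all: in a cons-free system, for every rule $\ell \arrz r$, every constructor symbol occurring in $\ell$ or $r$ lies in $\F'$; this is essentially the ``(rule)-step preserves $\B$-safety'' argument of Lemma~\ref{lem:safetypreserve} specialized to the set $\B$ of all data terms occurring in right-hand sides, and it makes the induction in (b) routine. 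Everything else is a straightforward structural induction, and no feature specific to AFSs (as opposed to other higher-order formalisms) is used beyond the fact that constructors are always fully applied.
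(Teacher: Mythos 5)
Your proposal is correct and follows essentially the same route as the paper: part (a) from the fact that data terms (and hence the elements of $\B$) have base type so their arguments must be of sort type, and part (b) by combining Lemma~\ref{lem:safetypreserve} with the observation that a $\B$-safe term can only be matched by rules of $\Rules'$ (using cons-freeness to transfer $Y$-freeness from the left-hand side to the right-hand side). You merely spell out the induction on reduction length and the left-to-right transfer argument that the paper's two-line proof leaves implicit.
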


\begin{proof}
Since data terms have base type, and the subterms of data
terms are data terms, we have (a).
Then, $\B$-safe terms can only
be matched by rules in $\Rules'$,
so Lemma~\ref{lem:safetypreserve} gives (b).
\end{proof}

Therefore we may safely assume that all elements of $\Constructors$
are at most first-order.

\subsection{A larger example}\label{subsec:sat}

None of our examples so far have taken advantage of the
native non-determinism of term rewriting.  To demonstrate the
possibilities, we consider a first-order cons-free AFS that solves the
Boolean satisfiability problem (SAT).  This is striking because, in Jones' language
in~\cite{jon:01}, first-order programs cannot solve this problem
unless P = NP, even if a non-deterministic $\symb{choose}$ operator is
added~\cite{DBLP:conf/amast/Bonfante06}.  The crucial difference is
that we, unlike Jones, do not employ a call-by-value evaluation
strategy.

Given $n$ boolean variables $x_1,\dots,x_n$ and a boolean formula
$\psi ::= \varphi_1 \wedge \dots \wedge \varphi_n$, the satisfiability
problem considers whether there is an assignment of each $x_i$ to
$\top$ or $\bot$ such that $\psi$ evaluates to $\top$.  Here, each
clause $\varphi_i$ has the form $a_{i_1} \vee \dots \vee a_{i_{k_i}}$,
where each literal $a_{i_j}$ is either some $x_p$ or $\neg x_p$.  We  
represent this problem as a string over $I:=\{0,1,\#,?\}$: the formula
$\psi$ is represented by $L::=b_{1,1}\dots b_{1,n}\# b_{2,1}\dots\#
b_{m,1}\dots b_{m,n}\#$, where each $b_{i,j}$ is $1$ if $x_j$ is a
literal in $\varphi_i$, is $0$ if $\neg x_j$ is a literal in $\varphi_i$,
and is $?$ otherwise.

\begin{example}\label{sat:base}
The satisfiability problem for $(x_1 \vee \neg x_2) \wedge (x_2 \vee
\neg x_3)$ is encoded as $\symb{10?\#?10\#}$.
\end{example}

Letting $\nul,\one,\symb{\#},\symb{?} :
[\symb{string}] \arrtype \symb{string}$, and assuming other
declarations clear from context, we claim that the AFS in
Figure~\ref{fig:sat} can reduce $\symb{decide}(\encode{L})$ to
$\symb{true}$ iff $\psi$ is satisfiable.

\begin{figure}[h]
\vspace{-8pt}
\[
\left.
\begin{array}{rclrcl}
\symb{eq}(\symb{\#}(xs),\symb{\#}(ys)) & \arrz & \strue &
\symb{eq}(\symb{\#}(xs),\unknown{a}(ys)) & \arrz & \sfalse \\
\symb{eq}(\unknown{a}(xs),\unknown{b}(ys)) & \arrz & \symb{eq}(xs,ys) &
\symb{eq}(\unknown{a}(xs),\symb{\#}(ys)) & \arrz & \sfalse \\
\end{array}
\right\}\ \llbracket\text{for}\ \unknown{a},\unknown{b} \in \{\nul,
  \one,\symb{?}\}\rrbracket
\]
\vspace{-10pt}
\[
\begin{array}{rcl}
\symb{decide}(cs) & \arrz & \symb{assign}(cs,\nil,\nil,cs) \\
\symb{assign}(\symb{\#}(xs),s,t,cs) & \arrz & \symb{main}(s,t,cs) \\
\end{array}
\]
\vspace{-10pt}
\[
\left.
\begin{array}{rcl}
\,\symb{assign}(\unknown{a}(xs),s,t,cs) & \arrz &
  \symb{assign}(xs,\symb{either}(\unknown{a}(xs),s),t,cs) \\
\symb{assign}(\unknown{a}(xs),s,t,cs) & \arrz &
  \symb{assign}(xs,s,\symb{either}(\unknown{a}(xs),t),cs) \\
\end{array}
\right\}\ \llbracket\text{for}\ \unknown{a} \in \{\nul,\one,\symb{?}\}
  \rrbracket
\]
\vspace{-8pt}
\[
\begin{array}{rclcrcl}
\symb{either}(xs,q) & \arrz & xs & &
\symb{either}(xs,q) & \arrz & q \\
\end{array}
\]
\vspace{-12pt}
\[
\begin{array}{rcl}
\symb{main}(s,t,\symb{?}(xs)) & \arrz & \symb{main}(s,t,xs) \\
\symb{main}(s,t,\nul(xs)) & \arrz &
  \symb{test}(s,t,xs,\symb{eq}(t,\nul(xs)),\symb{eq}(s,\nul(xs))) \\
\symb{main}(s,t,\one(xs)) & \arrz &
  \symb{test}(s,t,xs,\symb{eq}(s,\one(xs)),\symb{eq}(t,\one(xs))) \\
\end{array}
\]
\vspace{-6pt}
\[
\begin{array}{rclcrcl}
\symb{main}(s,t,\nil) & \arrz & \strue & &
\symb{test}(s,t,xs,\strue,z) & \arrz &
  \symb{main}(s,t,\symb{skip}(xs)) \\
\symb{main}(s,t,\symb{\#}(xs)) & \arrz & \sfalse & &
\symb{test}(s,t,xs,z,\strue) & \arrz & \symb{main}(s,t,xs) \\
\end{array}
\]
\vspace{-7pt}
\[
\begin{array}{rcl}
\symb{skip}(\symb{\#}(xs)) & \arrz & xs \\
\symb{skip}(\unknown{a}(xs)) & \arrz & \symb{skip}(xs)\ 
  \llbracket\text{for}\ \unknown{a} \in \{\nul,\one,\symb{?}\}
  \rrbracket \\
\end{array}
\]
\caption{A cons-free first-order AFS solving the satisfiability problem}
\label{fig:sat}
\end{figure}


In this AFS, we follow some of the same ideas as in
Example~\ref{ex:palindrome}.  In particular, any string of the form
$b_i\dots b_n\#\dots$ with each $b_j \in \{0,1,?\}$ is considered to
represent the number $i$.  The rules for $\symb{eq}$ are defined so
that $\symb{eq}(s,t)$ tests equality of these \emph{numbers}, not
the full strings.

The key idea new to this example is that we use terms not
in normal form to represent a \emph{set} of numbers.  If we are
interested in 
numbers in $\{1,\dots,n\}$, then
a set $X \subseteq \{1,\dots,n\}$ is encoded as a pair $(s,t)$ of
terms such that, for $i \in \{1,\dots,n\}$:
  $s \arrr{\Rules} q$ for some representation $q$ of $i$
  if and only if $i \in X$,
and
  $t \arrr{\Rules} q$ for some representation $q$ of $i$
  if and only if $i \notin X$;

This is possible because we do not use a call-by-value or
similar reduction strategy: an evaluation of this AFS is allowed to
postpone reducing such terms, and we focus on those
reductions.  The AFS is constructed in such a way that reductions
which evaluate these ``sets'' too eagerly simply end in an
irreducible, non-data state.

Now, an evaluation starting in $\symb{decide}(\encode{L})$
first non-deterministically constructs a ``set''$X$ containing those boolean variables
assigned $\symb{true}$:
$\symb{decide}(\encode{L}) \arrr{\Rules} \symb{main}(s,t,\encode{L})$.
Then, the main function goes through $\encode{L}$, finding for each
clause a literal that is satisfied by the assignment.
\CKchange{Encountering for instance}
$b_{i_j} =
\symb{1}$, we determine if $j \in X$ by comparing both a
reduct of
$s$ and of $t$ to $j$.  If $s \arrr{\Rules}$ ``$j$'' then $j \in X$,
if $t \arrr{\Rules}$ ``$j$'' then $j \notin X$; in either case, we
continue accordingly.  If the evaluation state is incorrect, or
if $s$ or $t$ both non-deterministically reduce to some other term, the
evaluation \CKchange{gets} stuck in a non-data normal form.

\begin{example}\label{ex:sat}
To solve satisfiability of $(x_1 \vee \neg x_2) \wedge
(x_2 \vee \neg x_3)$, we reduce $\symb{decide}(\encode{L})$, where
$L =\symb{10?\#?10\#}$.  First, we build a valuation; the
choices made by the $\symb{assign}$ rules are non-deterministic, but a
possible reduction is $\symb{decide}(\encode{L}) \arrr{\Rules}
\symb{main}(s,t,\encode{L})$, where $s =
\symb{either}(\encode{10?\#?10\#},\nil)$ and
$t = \symb{either}(\encode{?\#?10\#},\symb{either}(\encode{0?\#?10\#},
  \nil))$.
Recall that, since $n = 3$, $\encode{10?\#?10\#}$ represents $1$ while
$\encode{?\#?10\#}$ and $\encode{0?\#?10\#}$ represent $3$ and $2$
respectively.
Thus, this corresponds to the valuation $[x_1:=\top,x_2:=\bot,x_3:=
\bot]$.

Then, the main loop recurses over the problem.  Note that
$s$ reduces to a term $\encode{10?\#\dots}$ and $t$ reduces to both
$\encode{?\#\dots}$ and $\encode{0?\#\dots}$.  Therefore,
$\symb{main}(s,t,\encode{L}) = \symb{main}(s,t,\encode{11?\#?01\#})
\arrr{\Rules} \symb{main}(s,t,\symb{skip}(\encode{1?\#?01\#}))
\arrr{\Rules} \symb{main}(s,t,\encode{?01\#})$:
the first clause is confirmed since $x_1$ is mapped to $\top$, so the
clause is removed and the loop continues with the second clause.
Next, the loop passes over those variables whose assignment does not
contribute to verifying this clause, until the clause is confirmed by
$x_3$:
$\symb{main}(s,t,\encode{?01\#}) \arr{\Rules}
\symb{main}(s,t,\encode{01\#}) \arrr{\Rules}
\symb{main}(s,t,\encode{1\#}) \arrr{\Rules}
\symb{main}(s,t,\symb{skip}(\encode{\#})) \arr{\Rules}
\symb{main}(s,t,\nil) \arr{\Rules} \strue$.
\end{example}

Using non-determinism, the term in
Example~\ref{ex:sat} could easily have been reduced to $\sfalse$
instead, simply by selecting a different valuation.  This is not
problematic: by definition, the AFS accepts the set of satisfiable
formulas if $\symb{decide}(\encode{L}) \arrr{\Rules} \strue$ if and
only if $L$ is a satisfiable formula
\CKchange{:} false negatives or reductions
which do not end in a data state are allowed.

\smallskip
\fscdversion{A longer example derivation is given in Appendix B of the full version of the paper.}
\arxivversion{A longer example derivation is given in Appendix B.}

\section{Simulating $\etime{k}$ Turing machines}\label{sec:counting}

In order to see that cons-free term rewriting captures certain classes
of decidable sets, we will simulate Turing Machines.  For this,
we use an approach very similar to that by Jones~\cite{jon:01}.
We introduce constructor symbols $\symb{a} : [\bits] \arrtype \bits$
for all $a \in A$ (including the blank symbol, which we shall refer
to as $\symb{B}$) along with $\nil$ and the booleans, $\symb{s} :
\symb{state}$ for all $s \in S \cup \{\text{fail}\}$,\ $\symb{L},
\symb{R} : \symb{direction}$ and $\symb{action} : [\bits \times
\symb{direction} \times \symb{state}] \arrtype \symb{trans}$,
$\symb{end} : [\symb{state}] \arrtype \symb{trans}$,
$\symb{NA} : \symb{trans}$.
We will introduce defined symbols and rules such that, for any
string $c \in (A \setminus \{\blank\})^*$---represented as the term
$cs:=c_1(c_2(\cdots c_n(\nil)\cdots))$---we have:
\begin{itemize}
\item $\symb{decide}(cs) \arrr{\Rules} \symb{true}$ if and only if
  $(\blank c\blank\blank\dots,0,\symb{start}) \arrzt^* (t,i,
  \symb{accept})$ for some $t,i$;
\item $\symb{decide}(cs) \arrr{\Rules} \symb{false}$ if and only if
  $(\blank c\blank\blank\dots,0,\symb{start}) \arrzt^* (t,i,
  \symb{reject})$ for some $t,i$.
\end{itemize}
As rules may be overlapping, it is possible that $\symb{decide}(cs)$
will have additional normal forms, but only one normal form will be a
data term.

The rough idea of the simulation is to represent non-negative integers
as terms and let $\symb{tape}(n,p)$ reduce to the symbol at
position $p$ on the tape at the start of the $n^{\text{th}}$ step,
while $\symb{state}(n,p)$ returns the state of the machine
at time $n$, provided the tape head is at position $p$.
If the tape head of the machine is not at position $p$ at time $n$, then
$\symb{state}(n,p)$ should return $\symb{fail}$ instead; this makes
it possible to test the position of the tape head at any given time.
As the machine is deterministic, we can devise rules to compute these
terms from earlier configurations.

Finding a suitable representation of integers and corresponding
manipulating functions is the most intricate part of this simulation, where
we may need both higher-order functions and non-deterministic rules.
Therefore, let us first assume that this can be done.
Then, for a Turing machine which is given to run in time bounded
above by $\lambda x.P(x)$, we define the AFS in Figure~\ref{fig:TM}.
Note that, by construction, any occurrence of $cs$ can only be
instantiated by the input string \CKchange{during evaluation}.

\begin{figure}[htb]
\vspace{-10pt}
\[
\left.
\begin{array}{rcl}
\symb{ifelse}_{\unknown{\asort}}(\symb{true}, y, z) & \arrz & y \\
\symb{ifelse}_{\unknown{\asort}}(\symb{false}, y, z) & \arrz & z \\
\end{array}
\right\}\ 
\llbracket\text{for}\ \unknown{\asort} \in
\{\bits,\symb{state}\}\rrbracket
\]
\vspace{-10pt}
\[
\begin{array}{rcl}
\symb{get}(\nil,\numrep{i},q) & \arrz & q \\
\symb{get}(\unknown{a}(xs),\numrep{i},q) & \arrz &
  \symb{ifelse}_\bits(\numrep{i=0},
  \unknown{a}(\nil),
  \symb{get}(xs,\numrep{i-1},q))\ \ 
  \llbracket\text{for all}\ \unknown{a} \in I\rrbracket
\end{array}
\]
\vspace{-9pt}
\[
\begin{array}{rcl}
\symb{inputtape}(cs,\numrep{p}) & \arrz &
  \symb{ifelse}_{\bits}(\numrep{p=0},
    \symb{B}(\nil),\symb{get}(cs,\numrep{p-1},\symb{B}(\nil))) \\
\end{array}
\]
\vspace{-10pt}
\[
\begin{array}{rcl}
\symb{tape}(cs,\numrep{n},\numrep{p}) & \arrz &
  \symb{ifelse}_{\bits}(\numrep{n=0},
  \symb{inputtape}(cs,\numrep{p}),\symb{tapex}(cs,
  \numrep{n-1},\numrep{p})) \\
\symb{tapex}(cs,\numrep{n},\numrep{p}) & \arrz &
  \symb{tapey}(cs,\numrep{n},\numrep{p},
  \symb{transition}(cs,\numrep{n},\numrep{p})) \\
\end{array}
\]
\vspace{-8pt}
\[
\begin{array}{rclrcl}
\symb{tapey}(cs,\numrep{n},\numrep{p},\symb{action}(q,d,s)) & \arrz &
  q &
\symb{tapey}(cs,\numrep{n},\numrep{p},\symb{NA}) & \arrz &
  \symb{tape}(cs,\numrep{n},\numrep{p}) \\
& & &
\!\!\!\!\!\!\!\!
\symb{tapey}(cs,\numrep{n},\numrep{p},\symb{end}(s)) & \arrz &
  \symb{tape}(cs,\numrep{n},\numrep{p}) \\
\end{array}
\]
\vspace{-5pt}
\[
\begin{array}{rcl}
\symb{state}(cs,\numrep{n},\numrep{p}) & \arrz &
  \symb{ifelse}_{\symb{state}}(\numrep{n=0},\symb{state0}(cs,\numrep{p}),
  \symb{statex}(cs,\numrep{n-1},\numrep{p})) \\
\symb{state0}(cs,\numrep{p}) & \arrz &
  \symb{ifelse}_{\symb{state}}(\numrep{p=0},\symb{start},\symb{fail}) \\
\symb{statex}(cs,\numrep{n},\numrep{p}) & \arrz &
  \symb{statey}(\symb{transition}(cs,\numrep{n},\numrep{p-1}),
  \symb{transition}(cs,\numrep{n},\numrep{p}), \\
  & & \phantom{\symb{statey}(}\symb{transition}(cs,\numrep{n},
  \numrep{p+1})) \\
\end{array}
\]
\vspace{-8pt}
\[
\begin{array}{rclrcl}
\symb{statey}(\symb{action}(q,\symb{R},s),y,z) & \arrz & s &
\symb{statey}(\symb{NA},\symb{action}(q,d,s),z) & \arrz & 
  \symb{fail} \\
\symb{statey}(\symb{action}(q,\symb{L},s),y,z) & \arrz & \symb{fail} &
\symb{statey}(\symb{NA},\symb{NA},\symb{action}(q,\symb{L},s)) &
  \arrz & s \\
\symb{statey}(\symb{end}(s),y,z) & \arrz & \symb{fail}\ \ \ \  &
\symb{statey}(\symb{NA},\symb{NA},\symb{action}(q,\symb{R},s)) &
  \arrz & \symb{fail} \\
\symb{statey}(\symb{NA},\symb{end}(s),z) & \arrz & s &
\symb{statey}(\symb{NA},\symb{NA},\symb{end}(s)) &
  \arrz & \symb{fail} \\
\end{array}
\]
\vspace{-4pt}
\[
\begin{array}{rcl}
\symb{transition}(cs,\numrep{n},\numrep{p}) & \arrz &
  \symb{transitionhelp}(\symb{state}(cs,\numrep{n},\numrep{p}),
                        \symb{tape}(cs,\numrep{n},\numrep{p})) \\
\vspace{-5pt}
\symb{transitionhelp}(\symb{fail},q) & \arrz & \symb{NA} \\
\symb{transitionhelp}(\unknown{s},\unknown{r}(\nil)) & \arrz &
  \symb{action}(\unknown{w}(\nil),\unknown{d},\unknown{t})\hfill
  \llbracket\text{for all}\ \transition{\unknown{s}}{\unknown{r}}{
  \unknown{w}}{\unknown{d}}{\unknown{t}} \in T\rrbracket \\
\symb{transitionhelp}(\unknown{s},q) & \arrz & \symb{end}(
  \unknown{s}) \hfill\llbracket\text{for}\ \unknown{s} \in
  \{\symb{accept},\symb{reject}\}\rrbracket
\end{array}
\]
\vspace{-3pt}
\[
\begin{array}{rcl}
\symb{decide}(cs) & \arrz & \symb{findanswer}(cs,\numrep{P(|cs|)},
  \numrep{P(|cs|)}) \\
\symb{findanswer}(cs,\numrep{n},\numrep{p}) & \arrz &
  \symb{test}(\symb{state}(cs,\numrep{n},\numrep{p}),
  cs,\numrep{n},\numrep{p}) \\
\symb{test}(\symb{fail},cs,\numrep{n},\numrep{p}) & \arrz &
  \symb{findanswer}(cs,\numrep{n},\numrep{p-1}) \\
\symb{test}(\symb{accept},cs,\numrep{n},\numrep{p}) & \arrz & \strue \\
\symb{test}(\symb{reject},cs,\numrep{n},\numrep{p}) & \arrz & \sfalse \\
\end{array}
\]
\caption{Simulating a deterministic Turing Machine running in
$\lambda x.P(x)$ time.}
\label{fig:TM}
\end{figure}

\paragraph*{Counting}

The goal, then, is to find a representation of numbers and
functionality to do four things:
\begin{itemize}
\item calculate $\numrep{P(|cs|)}$ or an overestimation (as
  the machine cannot move from its final state);
\item test whether a ``number'' represents $0$;
\item given $\numrep{n}$, calculate $\numrep{n-1}$,
  \emph{provided $n > 0$}---so it suffices to determine
  $\numrep{\max(n-1,0)}$;
\item given $\numrep{n}$, calculate $\numrep{n+1}$,
  \emph{provided $n+1 \leq P(|cs|)$} as necessarily
  $\symb{transition}(cs,\numrep{n},\numrep{p})\linebreak
  \arr{\Rules} \symb{NA}$ when $n < p$ and $n$ never increases---so
  it suffices to determine $\numrep{\min(n+1,P(|cs|))}$.
\end{itemize}
Moreover, these calculations all occur in the right-hand side of a
rule containing the initial input list $cs$ on the left,
which they can therefore use (for instance to recompute $P(|cs|)$).

Rather than representing a number by a single term, we will use
\emph{tuples} of terms
(which are not terms themselves, as a pairing constructor would
conflict with cons-freeness).
To illustrate 
this,
suppose we represent each number $n$ by
a pair $(n_1,n_2)$.
Then the predecessor and successor function must also be split,
e.g.~$\symb{pred}^1(cs,n_1,n_2) \arrr{\Rules} n_1'$ and
$\symb{pred}^2(cs,n_1,n_2) \arrr{\Rules} n_2'$ for $(n_1',n_2')$ some
tuple representing $n-1$.  Thus, for instance the first
$\symb{test}$ rule becomes:
\[
\symb{test}(\symb{fail},cs,n_1,n_2,p_1,p_2) \arrz
  \symb{findanswer}(cs,n_1,n_2,\symb{pred}^1(cs,p_1,p_2),
  \symb{pred}^2(cs,p_1,p_2))
\]

Following Jones~\cite{jon:01}, we use the notion of a \emph{counting
module} which provides an AFS with a representation of a counting
function and a means of computing.  Counting  modules can be composed,
making it possible to count to greater numbers. Due to the laxity of
term rewriting, our constructions are technically quite different
from those of~\cite{jon:01}.

\begin{definition}[Counting Module]
Write $\F = \Constructors \cup \Defineds$ for the signature in
Figure~\ref{fig:TM}.  For $P$ a function from $\nats$ to $\nats$, a
$P$-counting module of \emph{order $k$} is a tuple
$C_{\pi} ::= (\vec{\atype},\Sigma,R,A,\numinterpret{\cdot})$ s.t.:
\begin{itemize}
\item $\vec{\atype}$ is a sequence of types $\atype_1 \times \dots
  \times \atype_a$ where each $\atype_i$ has order at most $k-1$;
\item $\Sigma$ is a $k^{\text{th}}$-order signature disjoint from
  $\F$, with designated symbols $\symb{zero}_{\pi} : [\bits \times
  \vec{\atype}]
  \arrtype \bool$ and, for $1 \leq i \leq a$ with $\atype_i =
  \btype_1 \arrtype \dots \arrtype \btype_m \arrtype \asort$
  symbols $\symb{pred}_{\pi}^i,\symb{suc}_{\pi}^i,
  \symb{inv}_{\pi}^i : [\bits \times \vec{\atype} \times
  \vec{\btype}] \arrtype \asort$ and $\symb{seed}_{\pi}^i :
  [\bits \times \vec{\btype}] \arrtype \bsort$;
\item $R$ is a set of cons-free rules $f(\vec{\ell}) \arrz
  r$ with
  $f \in \Sigma$, each $\ell_i \in \Terms(\Constructors,\V)$ and $r
  \in \Terms(\Constructors \cup \Sigma,\V)$;
\item for every string $cs \subseteq I^+$, the set $A_{cs} \subseteq
  \{ (s_1,\dots,s_a) \in \Terms(\Constructors \cup \Sigma)^a \mid
  \vdash s_j : \atype_j$  for $1 \leq j \leq a\}$;
\item for every string $cs$, $\numinterpret{\cdot}_{cs}$
  is a surjective mapping from $A_{cs}$ to $\{0,\dots,P(|cs|)-1\}$;
\item writing e.g.~$\symb{pred}^i_{\pi}[\vec{s}] : \atype_i$ for the term
  $\abs{\vec{y}}{\symb{pred}^i_{\pi}(\vec{s},\vec{y})}$, the following
  properties are satisfied:
  \begin{itemize}
  \item $(\symb{seed}^1_{\pi}[cs],\dots,\symb{seed}^a_{\pi}[cs]) \in A_{cs}$
    and $\numinterpret{(\symb{seed}^1_{\pi}[cs],\dots,\symb{seed}^a_{\pi}[cs]
    )}_{cs} = P(|cs|)-1$
  \end{itemize}
  and for all $(s_1,\dots,s_a) \in A_{cs}$ with $\numinterpret{(s_1,
  \dots,s_a)}_{cs} = m$:
  \begin{itemize}
  \item
    $(\symb{pred}^1_{\pi}[cs,\vec{s}],\dots,\symb{pred}^a_{\pi}[cs,\vec{s}])$
    and
    $(\symb{suc}^1_{\pi}[cs,\vec{s}],\dots,\symb{suc}^a_{\pi}[cs,\vec{s}])$
    and
    $(\symb{inv}^1_{\pi}[cs,\vec{s}],\dots,\linebreak
    \symb{inv}^a_{\pi}[cs,\vec{s}])$
    are all in $A_{cs}$
  \item
    $\numinterpret{(\symb{pred}^1_{\pi}[cs,\vec{s}],\dots,
    \symb{pred}^a_{\pi}[cs,\vec{s}])}_{cs} = \max(m-1,0)$
  \item
    $\numinterpret{(\symb{suc}^1_{\pi}[cs,\vec{s}],\dots,
    \symb{suc}^a_{\pi}[cs,\vec{s}])}_{cs} = \min(m+1,P(|cs|)-1)$
  \item
    $\numinterpret{(\symb{inv}^1_{\pi}[cs,\vec{s}],\dots,
    \symb{inv}^a_{\pi}[cs,\vec{s}])}_{cs} = P(|cs|)-1-m$
  \item $\symb{zero}_{\pi}(cs,\vec{s}) \arrr{R} \strue$ if{f} $m = 0$ and
    $\symb{zero}_{\pi}(cs,\vec{s}) \arrr{R} \sfalse$ if{f} $m > 0$
  \item if each $s_i \arrr{R} t_i$ and $(t_1,\dots,t_a) \in A_{cs}$,
    then also $\numinterpret{(t_1,\dots,t_a)}_{cs} = m$.
  \end{itemize}
\end{itemize}
\end{definition}

It is not hard to see how we would use a $P$-counting module in the
AFS of Figure~\ref{fig:TM}; this results in a $k^{\text{th}}$-order
AFS for a $k^{\text{th}}$-order module.  Note that this works even if
some number representations $(s_1,\dots,s_a)$ are not in normal form:
even if we reduce $\vec{s}$ to some tuple $\vec{t}$, the result of the
$\symb{zero}$ test cannot change from $\strue$ to $\sfalse$ or vice
versa.  Since the algorithm relies heavily on these tests, we may
safely assume that terms representing numbers are reduced in a lazy
way---as we did in Section~\ref{subsec:sat} for the arguments $s$ and
$t$ of $\symb{main}$.

\edef\basemodule{\number\value{theorem}}
\begin{lemma}\label{lem:mainmodule}
There is a first-order ($\lambda n.2^{n+1}$)-counting module.
\end{lemma}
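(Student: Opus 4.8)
The plan is to build a first-order counting module that counts up to $2^{n+1}$ by representing a number $m < 2^{n+1}$ as its binary expansion, where each bit is located by "pointing into" the input string $cs$. Concretely, I would take $\vec{\atype}$ to be a single type $\bits \arrtype \bool$ (so $a = 1$, and the module is first-order since this type has order $1 = k$ with $k-1 = 0$ on the argument), and represent $m$ by a function $F$ such that $\app{F}{t}$ reduces to $\strue$ or $\sfalse$ according to bit $i$ of $m$, where $i$ is the length of the bitstring $t$ (as in Example~\ref{ex:plusho}). The set $A_{cs}$ consists of those $F : \bits \arrtype \bool$ that, on every bitstring argument of length $\leq n$, reduce deterministically to a boolean, and that are $0$ on all arguments of length $> n$ (so the number has at most $n+1$ bits); $\numinterpret{F}_{cs}$ is then the natural number $\sum_{i=0}^{n} b_i 2^i$ read off from these bits. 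The key design choice, exactly as in the informal discussion of Section~\ref{subsec:sat}, is that we do \emph{not} insist these functions be in normal form: a "bit function" may be an unevaluated term, and the module rules must be arranged so that premature reduction only leads to stuck, non-data states, never to a wrong boolean.

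Next I would write the rules for $\Sigma$. The symbol $\symb{seed}^1$ must produce the function representing $2^{n+1}-1$, i.e. the all-ones function up to length $n$ — but since the actual bound $P(|cs|) = 2^{|cs|+1}$ and the string length $n = |cs|$, we can compare the length of the query bitstring against $cs$ itself using the palindrome-style trick of Example~\ref{ex:palindrome}: $\symb{seed}^1(cs, y) $ walks $cs$ and $y$ down in parallel and returns $\strue$ as long as $y$ does not outrun $cs$, $\sfalse$ otherwise. For $\symb{zero}$, testing whether all bits are $0$ requires iterating over all bitstrings of length $\leq n$; this is done by an auxiliary symbol analogous to $\symb{all}$ in Example~\ref{ex:plusho} that recurses over $cs$ prepending both $\nul$ and $\one$, taking a conjunction of $\app{F}{t}$ negated. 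The interesting rules are $\symb{pred}^1$, $\symb{suc}^1$, $\symb{inv}^1$: given $F$ and a query string $y$, $\symb{suc}^1(cs,F,y)$ must return bit $|y|$ of $m+1$. Binary increment is local up to the carry: bit $i$ of $m+1$ is the complement of bit $i$ of $m$ if bits $0,\dots,i-1$ of $m$ are all $1$, and equals bit $i$ of $m$ otherwise. So $\symb{suc}^1(cs,F,y)$ branches on $\symb{all}(F, \text{strings shorter than } y)$ — computable by recursion over $cs$ again, but now only generating strings of length $< |y|$ — using $\symb{ite}$ to select between $\symb{not}(\app{F}{y})$ and $\app{F}{y}$. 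Predecessor is symmetric (carry when lower bits are all $0$), and $\symb{inv}^1$ is simply $\symb{not}(\app{F}{y})$ — constant-depth, trivially correct. All of these right-hand sides only use $\symb{not}$, $\symb{ite}$, the recursion symbols, $cs$, and constructors already present in left-hand sides, so cons-freeness is immediate.

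Finally I would verify the defining properties of a counting module: (i) $(\symb{seed}^1[cs]) \in A_{cs}$ with interpretation $P(|cs|)-1 = 2^{n+1}-1$, which follows because $\symb{seed}^1[cs]$ is the all-ones-up-to-length-$n$ function by the parallel-walk argument; (ii) closure of $A_{cs}$ under $\symb{pred},\symb{suc},\symb{inv}$ together with the correct effect on $\numinterpret{\cdot}_{cs}$, which reduces to the correctness of binary increment/decrement/complement stated bitwise above, proved by induction on $|y|$; (iii) the $\symb{zero}$ test, by induction over $cs$ showing the auxiliary "for all shorter strings" symbol correctly computes the conjunction; and (iv) the stability clause "if each $s_i \arrr{R} t_i$ and $\vec{t} \in A_{cs}$ then $\numinterpret{\vec{t}}_{cs} = m$", which holds because the interpretation is defined purely in terms of what $\app{F}{\cdot}$ reduces to, and reduction is confined — by the $A_{cs}$ membership condition — to functions that reduce deterministically on short arguments; so reducing $F$ to $F'$ cannot change any bit. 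The main obstacle, and the part requiring the most care, is this interplay between laziness and determinism: I must pin down $A_{cs}$ precisely enough that the stability clause holds (no bit can flip under reduction) while keeping it large enough that $\symb{seed}$, $\symb{pred}$, $\symb{suc}$, $\symb{inv}$ all land inside it — in particular ruling out, via the "non-data stuck state" mechanism, any reduction that would evaluate a bit function on a query and get the wrong answer because it committed to a branch before the carry was resolved.
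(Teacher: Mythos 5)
There is a genuine gap: your module is not first-order. In the definition of a $P$-counting module of order $k$, every component type $\atype_i$ of $\vec{\atype}$ must have order at most $k-1$, and $\Sigma$ must be a $k^{\text{th}}$-order signature. For $k=1$ this forces each $\atype_i$ to be a \emph{sort}. You take $\vec{\atype}$ to be the single type $\bits \arrtype \bool$, which has order $1$, and your symbols $\symb{suc}^1,\symb{pred}^1,\symb{zero},\dots$ take an argument of that functional type, so their declarations have order $2$. What you have sketched is therefore a \emph{second}-order $(\lambda n.2^{n+1})$-counting module --- essentially the construction of Lemma~\ref{lem:expQcount} applied to a trivial length-based counter, not the base case the lemma demands. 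This is not a cosmetic issue: the entire point of Lemma~\ref{lem:mainmodule} is to get exponential counting already at order one, so that iterating Lemmas~\ref{lem:RQcount} and~\ref{lem:expQcount} yields $\mathrm{exp}_2^k$-counting at order exactly $k$ and hence Theorem~\ref{thm:simulation}. With your module every bound shifts up by one order, and you would only obtain $\etime{k}$ problems accepted by AFSs of order $k+1$, which no longer matches the upper bound of Theorem~\ref{thm:algcomplexity} and breaks the characterization.

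The missing idea is the one you cite from Section~\ref{subsec:sat} but do not actually use for the representation itself: at order one, a ``bit vector'' cannot be a term of functional type, so the paper encodes the set of $1$-bits by a \emph{pair of base-type terms} $(s,t)$, where $s$ non-deterministically reduces (via $\symb{either}$) to exactly those subterms $cs_i$ of the input whose position marks a set bit, and $t$ to the complementary subterms; bit tests such as $\symb{bitset}$ then force one of $s,t$ to exhibit a reduct of the right length, with wrong guesses getting stuck in non-data normal forms. All arithmetic ($\symb{seed}$, $\symb{zero}$, $\symb{pred}$, $\symb{inv}$, and $\symb{suc}$ as $\symb{inv}\circ\symb{pred}\circ\symb{inv}$) is then carried out on these pairs using only base-type arguments, keeping both $\vec{\atype}$ and $\Sigma$ first-order. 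Your bitwise increment/decrement analysis is fine in itself, but it must be re-done in this pair-of-terms representation (where, e.g., complement is just swapping the two components) rather than on a function $F : \bits \arrtype \bool$.
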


\begin{proof}
\CKchange{As in}
Section~\ref{subsec:sat},
\CKchange{we will represent}
a set of numbers---or rather, its encoding as a
bit-sequence---by a pair of terms.
We let $C_{\symb{e}} := (\symb{string} \times \symb{string},\Sigma,R,
A,\numinterpret{\cdot})$, where:
\begin{itemize}
\item $A_{cs}$ contains all pairs $(s,t)$ such that (a) all data terms
  $q$ such that $s \arrr{R} q$ or $t \arrr{R} q$ are subterms of $cs$,
  and (b) for each $q \subtermeq cs$ either $s \arrr{R} q$ or $t
  \arrr{R} q$, but not both.
\item Writing $cs = c_N(\dots(c_1(\nil))\dots)$, we let $cs_0 =
  \nil,\ cs_1 = c_1(\nil)$ and so on.  We let
  $\numinterpret{(s,t)}_{cs} =
  \sum_{i = 0}^N \{ 2^{N-i} \mid s \arrr{R} cs_i \}$.  That is,
  $\numinterpret{(s,t)}_{cs}$ is the number represented by the
  bit-sequence $b_0\dots b_N$ where $b_i = 1$ if{f} $s \arrr{R}
  cs_i$, if{f} not $t \arrr{R} cs_i$ (with $b_N$ the least
  significant digit).
\item $\Sigma$ consists of the defined symbols introduced in $R$,
  which we construct below.
\end{itemize}
As in Section~\ref{subsec:sat}, we use non-deterministic selection
functions to construct $(s,t)$:
\[
\begin{array}{rclcrclcrcl}
\symb{either}(x,y) & \arrz & x & & \symb{either}(x,y) & \arrz & y & &
\bot & \arrz & \bot \\
\end{array}
\]
The symbol $\bot$ will be used for terms which do not reduce to
any data (the $\bot \arrz \bot$ rule is used to force $\bot \in
\Defineds$).
For the remaining functions, we consider bitvector arithmetic.
First, $2^{N+1}-1$
corresponds to the bit-sequence where each $b_i = 1$:
\[
\begin{array}{rclrcl}
\symb{seed}_{\symb{e}}^1(cs) & \arrz & \symb{all}(cs,\bot) &
\symb{all}(\nil,q) & \arrz & \symb{either}(\nil,q) \\
\symb{seed}_{\symb{e}}^2(cs) & \arrz & \bot &
\symb{all}(\unknown{a}(xs),q) & \arrz &
  \symb{all}(xs,\symb{either}(\unknown{a}(xs),q))\ 
  \llbracket\text{for}\ \unknown{a} \in \symb{I}\rrbracket \\
\end{array}
\]
Here, $\symb{I} = \{ \symb{a} \mid a \in I \}$.  The
inverse function is obtained by flipping the sequence's bits:
\[
\begin{array}{rclcrcl}
\symb{inv}_{\symb{e}}^1(cs,s,t) & \arrz & t & &
\symb{inv}_{\symb{e}}^1(cs,s,t) & \arrz & s \\
\end{array}
\]
In order to define $\symb{zero}_{\symb{e}}$, we must test the value
of all bits in the sequence. This is done by forcing an evaluation
from $s$ or $t$ to some data term. This test is constructed in such
a way that both $\strue$ and $\sfalse$ results necessarily reflect the
state of $s$ and $t$; any undesirable non-deterministic choices lead
to the evaluation getting stuck.
\[
\left.
\begin{array}{rclrcl}
\symb{eqLen}(\nil,\nil) & \arrz & \strue &
\symb{eqLen}(\nil,\unknown{a}(ys)) & \arrz & \sfalse \\
\symb{eqLen}(\unknown{a}(xs),\unknown{b}(ys)) & \arrz &
  \symb{eqLen}(xs,ys) &
\symb{eqLen}(\unknown{a}(xs),\nil) & \arrz & \sfalse \\
\end{array}
\right\}\ \llbracket\text{for}\ \unknown{a},\unknown{b} \in I
\rrbracket
\]
\vspace{-12pt}
\[
\begin{array}{rclcrcl}
\symb{bitset}(xs,s,t) & \arrz & \symb{test}(\symb{eqLen}(xs,s),
  \symb{eqLen}(xs,t)) & &
\symb{test}(\strue,x) & \arrz & \strue \\
& & & & \symb{test}(x,\strue) & \arrz & \sfalse \\
\end{array}
\]
Then $\symb{zero}_{\symb{e}}$ simply tests whether the bit is unset
for each sublist.
\[
\begin{array}{rclrcl}
\symb{zero}_{\symb{e}}(xs,s,t) & \arrz &
  \symb{zo}(xs,s,t,\symb{bitset}(xs,s,t)) &
\symb{zo}(xs,s,t,\strue) & \arrz & \sfalse \\
\symb{zo}(\unknown{a}(xs),s,t,\sfalse) & \arrz &
  \symb{zero}_{\symb{e}}(xs,s,t)\ \llbracket\text{for}\ 
  \unknown{a} \in I\rrbracket &
\symb{zo}(\nil,s,t,\sfalse) & \arrz & \strue \\
\end{array}
\]
For the predecessor function, note that the predecessor of a
bit-sequence $b_0 \dots b_{i-1} b 1 0 \dots 0$ is $b_0 \dots b_{i-1}
0 1 \dots 1$.  We first define a helper function $\symb{copy}$ to copy
$b_0 \dots b_{i-1}$:
\[
\begin{array}{rclcrcl}
\symb{copy}(xs,s,t,\sfalse) & \arrz &
    \multicolumn{5}{l}{\symb{maybeadd}(xs,\symb{bitset}(xs,s,t),
    \symb{copy}(\symb{tl}(xs),s,t,\symb{empty}(xs)))} \\
\symb{copy}(xs,s,t,\strue) & \arrz & \bot & \phantom{XXX} &
\symb{maybeadd}(xs,\strue,q) & \arrz & \symb{either}(xs,q) \\
& & & & \symb{maybeadd}(xs,\sfalse,q) & \arrz & q \\
\end{array}
\]
\vspace{-12pt}
\[
\begin{array}{rclcrclcrcl}
\symb{empty}(\nil) & \arrz & \strue & &
\symb{tl}(\nil) & \arrz & \nil \\
\symb{empty}(\unknown{a}(x)) & \arrz & \sfalse\ \ 
  \llbracket\text{for}\ \unknown{a} \in \symb{I}\rrbracket & & 
\symb{tl}(\unknown{a}(x)) & \arrz & x\ \ \llbracket\text{for}\ 
  \unknown{a} \in \symb{I}\rrbracket \\
\end{array}
\]
Then $\symb{copy}(xs_{\max(i-1,0)},s,t,[i=0])$ reduces to those
$xs_j$ with $0 \leq j < i$ where $b_j = 1$, and $\symb{copy}(xs_{
\max(i-1,0)},t,s,[i=0])$ to those with $b_j = 0$.
This works because $s$ and $t$ are each other's complement.
To define $\symb{pred}$, we first handle the zero case:
\[
\begin{array}{rcl}
\symb{pred}_{\symb{e}}^{\unknown{i}}(cs,s,t) & \arrz &
  \symb{pz}^{\unknown{i}}(cs,s,t,\symb{zero}_{\symb{e}}(cs,s,t))\ 
  \ \llbracket\text{for}\ \unknown{i} \in \{1,2\}\rrbracket \\
\end{array}
\]
\vspace{-18pt}
\[
\begin{array}{rclcrcl}
\symb{pz}^1(cs,s,t,\strue) & \arrz & s & &
\symb{pz}^1(cs,s,t,\sfalse) & \arrz & \symb{pmain}^1(cs,s,t,
  \symb{bitset}(cs,s,t)) \\
\symb{pz}^2(cs,s,t,\strue) & \arrz & t & &
\symb{pz}^2(cs,s,t,\sfalse) & \arrz & \symb{pmain}^2(cs,s,t,
  \symb{bitset}(cs,s,t)) \\
\end{array}
\]
Then, $\symb{pmain}(xs_N,s,t,[b_N = 1])$ flips the bits $b_N,b_{N-1},
\dots$ until an index is encountered where $b_i = 1$; this last bit
is flipped, and the remaining bits copied.  Formally:
\[
\begin{array}{rcl}
\symb{pmain}^1(xs,s,t,\strue) & \arrz &
  \symb{copy}(\symb{tl}(xs),s,t,\symb{empty}(xs)) \\
\symb{pmain}^2(xs,s,t,\strue) & \arrz &
  \symb{either}(xs,\symb{copy}(\symb{tl}(xs),t,s,
  \symb{empty}(xs))) \\
\symb{pmain}^1(xs,s,t,\sfalse) & \arrz &
  \symb{either}(xs,\symb{pmain}^1(\symb{tl}(xs),s,t,
  \symb{bitset}(\symb{tl}(xs),s,t))) \\
\symb{pmain}^2(xs,s,t,\sfalse) & \arrz &
  \symb{pmain}^2(\symb{tl}(xs),s,t,\symb{bitset}(\symb{tl}(xs),s,t)) \\
\end{array}
\]
Finally, we observe that $x + 1 = N - ((N-x)-1)$ and for $x = N$ also
$\min(x+1,N) = N - (\max((N-x)-1,0))$.  Thus, we may define
$\symb{suc}(b)$ as $\symb{inv}(\symb{pred}(\symb{inv}(x)))$.  Taking
pairing into account and writing out the definition, this simplifies
to:
\[
\begin{array}{rclcrcl}
\symb{suc}^1(cs,s,t) & \arrz & \symb{pred}^2(cs,t,s) & &
\symb{suc}^2(cs,s,t) & \arrz & \symb{pred}^1(cs,t,s) \\
\end{array}
\qedhere
\]
\end{proof}

Having Lemma~\ref{lem:mainmodule} as a basis, we can define composite
modules.  Here, we give fewer details than for
Lemma~\ref{lem:mainmodule} as the constructions use many of the same
ideas.

\edef\prodmodule{\number\value{theorem}}
\begin{lemma}\label{lem:RQcount}
If there exist a $P$-counting module $C_{\pi}$ and a $Q$-counting module
$C_{\rho}$, both of order at most $k$, then there is a $(\lambda n.P(n)
\cdot Q(n))$-counting module $C_{\pi\cdot\rho}$ of order at most $k$.
\end{lemma}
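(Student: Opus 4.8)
The plan is to build the product module $C_{\pi\cdot\rho}$ so that its number representations are pairs consisting of a $C_\pi$-representation and a $C_\rho$-representation, encoding $n = n_P \cdot Q(|cs|) + n_Q$ where $n_P \in \{0,\dots,P(|cs|)-1\}$ is the "high digit" represented in the $C_\pi$-component and $n_Q \in \{0,\dots,Q(|cs|)-1\}$ is the "low digit" represented in the $C_\rho$-component. Formally, if $C_\pi = (\vec{\atype},\Sigma_\pi,R_\pi,A^\pi,\numinterpret{\cdot}^\pi)$ and $C_\rho = (\vec{\btype},\Sigma_\rho,R_\rho,A^\rho,\numinterpret{\cdot}^\rho)$, I take $\vec{\atype}_{\pi\cdot\rho} := \vec{\atype} \times \vec{\btype}$, so a representation is a tuple $(\vec{s},\vec{u})$ with $\vec{s} \in A^\pi_{cs}$, $\vec{u} \in A^\rho_{cs}$, and $\numinterpret{(\vec{s},\vec{u})}_{cs} := \numinterpret{\vec{s}}^\pi_{cs} \cdot Q(|cs|) + \numinterpret{\vec{u}}^\rho_{cs}$. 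This is clearly surjective onto $\{0,\dots,P(|cs|)Q(|cs|)-1\}$, and the order is $\max$ of the two module orders, hence at most $k$. The signature $\Sigma_{\pi\cdot\rho}$ is the disjoint union of $\Sigma_\pi$, $\Sigma_\rho$, and a handful of new defined symbols with the rules $R_\pi \cup R_\rho$ plus new rules below; cons-freeness is preserved since each new rule will only re-use constructor patterns and the designated symbols of the sub-modules.

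\emph{The arithmetic operations decompose digit-wise, exactly as for a two-digit mixed-radix numeral.} For $\symb{seed}_{\pi\cdot\rho}$ I use $\symb{seed}_\pi$ on the high component and $\symb{seed}_\rho$ on the low component, which gives interpretation $(P(|cs|)-1)\cdot Q(|cs|) + (Q(|cs|)-1) = P(|cs|)Q(|cs|)-1$ as required. For $\symb{zero}_{\pi\cdot\rho}(cs,\vec{s},\vec{u})$ I reduce both $\symb{zero}_\pi(cs,\vec{s})$ and $\symb{zero}_\rho(cs,\vec{u})$ and return $\strue$ iff both are $\strue$ (via an auxiliary $\symb{and}$-like symbol); the clause of the counting-module definition guaranteeing the $\symb{zero}$ test is stable under reduction of the representation follows from the corresponding stability clauses of $C_\pi$ and $C_\rho$. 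The predecessor is the mixed-radix borrow: $\symb{pred}_{\pi\cdot\rho}(\vec{s},\vec{u})$ tests $\symb{zero}_\rho(cs,\vec{u})$; if false, it is $(\vec{s},\symb{pred}_\rho(cs,\vec{u}))$; if true (and provided the whole number is positive — which is the precondition, so the high digit is then nonzero), it is $(\symb{pred}_\pi(cs,\vec{s}),\symb{seed}_\rho[cs])$, i.e. decrement the high digit and roll the low digit to $Q(|cs|)-1$. The inverse function is $\symb{inv}_{\pi\cdot\rho}(\vec{s},\vec{u}) := (\symb{inv}_\pi(cs,\vec{s}),\symb{inv}_\rho(cs,\vec{u}))$, since $(P(|cs|)-1-n_P)\cdot Q(|cs|) + (Q(|cs|)-1-n_Q) = P(|cs|)Q(|cs|)-1-n$. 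Finally, as in the proof of Lemma~\ref{lem:mainmodule}, I define $\symb{suc}$ via $\symb{inv}\circ\symb{pred}\circ\symb{inv}$, which automatically yields $\min(n+1, P(|cs|)Q(|cs|)-1)$ from the corresponding property of $\symb{pred}$; alternatively one can give it directly as a carry, incrementing the low digit and rolling over into $\symb{suc}_\pi$ of the high digit when the low digit saturates.

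\textbf{The main obstacle} is not the arithmetic — that is routine mixed-radix bookkeeping — but verifying that every new rule is cons-free and that the stability property "if each component reduces within $A_{cs}$ then the interpretation is unchanged" is genuinely inherited. The subtlety is that the designated functions of $C_\pi$ and $C_\rho$ are only guaranteed to behave correctly on arguments in $A^\pi_{cs}$ resp.\ $A^\rho_{cs}$, so I must check that the product construction never feeds them anything outside those sets: this is where I use that $A^\pi_{cs}$ and $A^\rho_{cs}$ are closed under the sub-module's own $\symb{pred},\symb{suc},\symb{inv}$ and contain the $\symb{seed}$ tuples, and that the high and low components of a product representation are, by definition of $A^{\pi\cdot\rho}_{cs}$, members of the respective sub-module carrier sets. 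A secondary care point is that since $Q(|cs|)$ appears as the radix, the "provided $n>0$" precondition on $\symb{pred}$ must be threaded correctly through the borrow case — but this is exactly the same $\max(m-1,0)$ slack already present in the sub-modules, so it causes no real trouble. I would therefore present the construction, state the interpretation formula, exhibit the rule groups for $\symb{seed},\symb{zero},\symb{inv},\symb{pred},\symb{suc}$, and then discharge the counting-module conditions by a short case analysis invoking the corresponding conditions for $C_\pi$ and $C_\rho$, omitting the fully mechanical verifications.
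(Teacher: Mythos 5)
Your construction is essentially identical to the paper's: the same pair/mixed-radix decomposition $\numinterpret{(\vec{u},\vec{v})} = \numinterpret{\vec{u}}^\pi \cdot Q(|cs|) + \numinterpret{\vec{v}}^\rho$, componentwise $\symb{seed}$ and $\symb{inv}$, $\symb{zero}$ via conjunction, and $\symb{pred}$ as a borrow guarded by $\symb{zero}_\rho$, with $\symb{suc}$ obtained either as a direct carry (the paper's choice) or via $\symb{inv}\circ\symb{pred}\circ\symb{inv}$ (the paper's own trick from Lemma~\ref{lem:mainmodule}); both variants you mention are fine. The only point to watch is that, to meet the literal $\max(m-1,0)$ clause of the module definition at $m=0$, the borrow should be guarded by a zero test on the whole number (as in the $\symb{pz}$ rules of Lemma~\ref{lem:mainmodule}), a detail your "slack" remark glosses over but which is immaterial for the intended use.
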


\begin{proof}[Proof Sketch]
Let $C_\pi ::= ([\atype_1 \times \dots \times \atype_a],
\Sigma^\pi,R^\pi,A^\pi,\numinterpret{\cdot}^\pi)$ and $C_\rho ::=
([\btype_1 \times \dots \times \btype_b],\Sigma^\rho,R^\rho,\linebreak
A^\rho,
\numinterpret{\cdot}^\rho)$.
We will, essentially, represent the numbers $i \in \{0,\dots,
P(|cs|) \cdot Q(|cs|) - 1\}$ by a pair $(i_1,i_2)$ with $0
\leq i_1 < P(|cs|)$ and $0 \leq i_2 < Q(|cs|)$, such that
$i = i_1 \cdot Q(|cs|) + i_2$.  This is done by defining
$A^{\pi \cdot \rho}_{cs} = \{ (u_1,\dots,u_a,v_1,\dots,v_b) \mid (u_1,\dots,
u_a) \in A^{\pi}_{cs} \wedge (v_1,\dots,v_b) \in A^{\rho}_{cs} \}$, and
$\numinterpret{(\vec{u},\vec{v})}^{\pi \cdot \rho}_{cs} =
\numinterpret{(\vec{u})}^\pi_{cs} \cdot Q(|cs|) +
\numinterpret{(\vec{v})}^\rho_{cs}$.  The signature of
defined symbols and rules of $C_{\pi \cdot \rho}$ are straightforwardly
defined as well, extending those in $C_{\pi}$ and $C_{\rho}$; for instance:
\[
\begin{array}{rcl}
\symb{zero}_{\pi \cdot \rho}(cs,u_1,\dots,u_a,v_1,\dots,v_b) & \arrz &
  \symb{and}(\symb{zero}_{\pi}(cs,u_1,\dots,u_a),\symb{zero}_{\rho}(cs,
  v_1,\dots,v_b))
\end{array}
\]
\vspace{-18pt}
\[
\begin{array}{rclcrcl}
\symb{and}(\strue,x) & \arrz & x & &
\symb{and}(\sfalse,y) & \arrz & \sfalse
\phantom{BLABLABLABLABLABLABL} 
\qedhere
\end{array}
\]
\end{proof}

\edef\expmodule{\number\value{theorem}}
\begin{lemma}\label{lem:expQcount}
If there is a $P$-counting module $C_{\pi}$ of order $k$, then there
is a $(\lambda n.2^{P(n)})$-counting module $C_{p[\pi]}$ of order $k+1$.
\end{lemma}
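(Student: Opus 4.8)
The plan is to lift the construction of Lemma~\ref{lem:mainmodule} one type order higher. There, a number below $2^{|cs|+1}$ was stored as a subset of the (linearly many) sublists of $cs$; here a number $m < 2^{P(|cs|)}$ will be stored as a subset $X \subseteq \{0,\dots,P(|cs|)-1\}$, read as the bit-string $b_0\cdots b_{P(|cs|)-1}$ with $b_i = 1$ iff $i \in X$ and $b_{P(|cs|)-1}$ least significant. The crucial difference with Lemma~\ref{lem:mainmodule} is that the ``positions'' are now the numbers handled by $C_{\pi}$, which are not data terms, so a set can no longer be stored as a term whose data reducts enumerate it; instead we store it as a \emph{function} answering membership queries. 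Accordingly, write $\tau_1 \times \dots \times \tau_a$ for the sequence of types of $C_{\pi}$ (each $\tau_i$ of order $\leq k-1$) and take a single new type $\sigma := \tau_1 \arrtype \dots \arrtype \tau_a \arrtype \bool$, of order $\leq k$; then $\symb{seed}_{p[\pi]},\symb{pred}_{p[\pi]},\symb{suc}_{p[\pi]},\symb{inv}_{p[\pi]},\symb{zero}_{p[\pi]}$, whose argument types are among $\bits,\sigma,\tau_1,\dots,\tau_a$, have type declarations of order $\leq k+1$, so $C_{p[\pi]}$ has order $k+1$. Let $\Sigma^{p[\pi]}$ and $R^{p[\pi]}$ extend $\Sigma^{\pi}$ and $R^{\pi}$; let $A^{p[\pi]}_{cs}$ consist of those $F : \sigma$ for which there is a set $X \subseteq \{0,\dots,P(|cs|)-1\}$ with $\app{F}{\vec{u}} \arrr{R^{p[\pi]}} \strue$ iff $\numinterpret{\vec{u}}^{\pi}_{cs} \in X$ and $\app{F}{\vec{u}} \arrr{R^{p[\pi]}} \sfalse$ iff $\numinterpret{\vec{u}}^{\pi}_{cs} \notin X$, for all $\vec{u} \in A^{\pi}_{cs}$; and put $\numinterpret{F}^{p[\pi]}_{cs} := \sum\{ 2^{P(|cs|)-1-i} \mid i \in X\}$.

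The easy operations: $\symb{seed}_{p[\pi]}$ returns the constant-$\strue$ function ($X = \{0,\dots,P(|cs|)-1\}$, value $2^{P(|cs|)}-1 = P'(|cs|)-1$ with $P' = \lambda n.2^{P(n)}$); $\symb{inv}_{p[\pi]}$ returns the pointwise negation of $F$ (complement of $X$, value $2^{P(|cs|)}-1-m$); and $\symb{zero}_{p[\pi]}(cs,F)$ walks the positions $P(|cs|)-1,\dots,0$ — obtained from $\symb{seed}^i_{\pi}[cs]$ by iterating $\symb{pred}^i_{\pi}[cs,\cdot]$, with the bottom detected by $\symb{zero}_{\pi}$ — evaluating $\app{F}{\vec{v}}$ at each, returning $\sfalse$ as soon as a bit is set and $\strue$ if all are unset. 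This is $\symb{zero}_{\symb{e}}$ of Lemma~\ref{lem:mainmodule} with ``sublist of $cs$'' replaced by ``$C_{\pi}$-number'' and ``$s \arrr{} cs_i$'' replaced by ``$\app{F}{\vec{v}} \arrr{} \strue$''.

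For the predecessor we mimic binary decrement: the predecessor of $b_0\cdots b_{j-1}\,1\,0\cdots 0$ is $b_0\cdots b_{j-1}\,0\,1\cdots 1$, where position $j$ is the least significant set bit. After handling $m = 0$ (via $\symb{zero}_{p[\pi]}$; the predecessor is then $F$ itself), the new function, queried at $\vec{u}$, walks the positions from the least significant end toward $0$, again via $\symb{seed}_{\pi}/\symb{pred}_{\pi}$, carrying a boolean flag recording whether the flip point has already been passed; upon reaching the queried position it returns $\strue$ if that position is still a trailing zero, $\sfalse$ if it is the flip point, and the old value $\app{F}{\vec{u}}$ otherwise. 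Detecting ``the walker has reached the queried position'' needs an auxiliary predicate $\symb{eq}_{\pi}$ comparing two $C_{\pi}$-numbers — the one genuinely new ingredient relative to Lemma~\ref{lem:mainmodule} — obtained by simultaneous $\symb{pred}_{\pi}$-recursion on both arguments with $\symb{zero}_{\pi}$-tests. Finally $\symb{suc}_{p[\pi]}$ is $\symb{inv}_{p[\pi]}\circ\symb{pred}_{p[\pi]}\circ\symb{inv}_{p[\pi]}$, exactly as in Lemma~\ref{lem:mainmodule} via $x+1 = N-((N-x)-1)$ with $N = 2^{P(|cs|)}-1$. All rules use only subterms of their left-hand sides and the data terms $\strue,\sfalse,\nil$, so $R^{p[\pi]}$ consists of cons-free, left-linear rules of the required shape, and surjectivity of $\numinterpret{\cdot}^{p[\pi]}_{cs}$ onto $\{0,\dots,2^{P(|cs|)}-1\}$ follows since $\symb{seed}_{p[\pi]}$ realizes the maximum and iterating $\symb{pred}_{p[\pi]}$ reaches all smaller values.

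What then remains is to verify that $\symb{seed}_{p[\pi]}[cs]\in A^{p[\pi]}_{cs}$, that $\symb{pred}_{p[\pi]},\symb{suc}_{p[\pi]},\symb{inv}_{p[\pi]}$ send $A^{p[\pi]}_{cs}$ into $A^{p[\pi]}_{cs}$ with interpretations $\max(m-1,0)$, $\min(m+1,N)$, $N-m$, and that $\symb{zero}_{p[\pi]}$ behaves as required; modulo the three delicate points below this is a routine case analysis on the bit-string arithmetic above. The delicate points — all present in spirit in Lemma~\ref{lem:mainmodule} and in the postulated properties of $C_{\pi}$ — are: (i) the walkers only ever take values in $A^{\pi}_{cs}$ (being $\symb{pred}_{\pi}$-iterates of $\symb{seed}_{\pi}[cs]$), so every call $\app{F}{\vec{v}}$ and $\symb{eq}_{\pi}(cs,\vec{v},\vec{v}')$ is on arguments to which $C_{\pi}$'s guarantees apply; (ii) those guarantees, $\app{F}{\cdot}$ included, are ``reachable iff'' statements stable under further reduction of the arguments, so the decision trees above commit to a single boolean even though a queried $\vec{u}$ need not be in normal form, and this stability propagates to give $\numinterpret{F'}^{p[\pi]}_{cs} = \numinterpret{F}^{p[\pi]}_{cs}$ whenever $F \arrr{} F' \in A^{p[\pi]}_{cs}$; (iii) every recursion terminates because its walker strictly decreases under $\symb{pred}_{\pi}$ and the queried position lies at or below the starting position $P(|cs|)-1$. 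The main obstacle is therefore not the arithmetic but point (ii): ensuring that replacing ``a term whose data reducts enumerate a set'' by ``a function answering membership queries'' does not spoil the laziness arguments, i.e.\ that a query can never be answered both $\strue$ and $\sfalse$, no matter how its (possibly non-normal) argument is reduced along the way.
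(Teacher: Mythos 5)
Your proposal is correct and follows essentially the same route as the paper's proof: numbers below $2^{P(n)}$ are represented as functions of type $\tau_1 \arrtype \dots \arrtype \tau_a \arrtype \bool$ answering bit queries, with $\symb{seed}$ the constant-$\strue$ function, $\symb{inv}$ pointwise negation, $\symb{zero}$ a walk over all positions via $\symb{seed}_{\pi}/\symb{pred}_{\pi}$, $\symb{suc} = \symb{inv}\circ\symb{pred}\circ\symb{inv}$, and a predecessor implementing binary decrement that relies on an equality test on $C_{\pi}$-numbers defined by simultaneous $\symb{pred}_{\pi}$-recursion with $\symb{zero}_{\pi}$-tests. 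The only difference is a minor implementation detail in $\symb{pred}$ (your walker carries a flag and compares the walker against the queried position, whereas the paper composes bit-flipping wrappers around $F$ as it walks), which does not change the substance of the argument.
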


\begin{proof}[Proof Sketch]
We represent
every bitstring $b_{P(|cs|)-1 \cdots b_0}$ as a function of type
$\atype_1 \arrtype \dots \arrtype \atype_a \arrtype \bool$.
The various functions are defined as bitvector operations.  For
example:
\vspace{-2pt}
\[
\begin{array}{rclcrcl}
\symb{seed}_{\symb{p}[\pi]}(cs,k_1,\dots,k_a) & \arrz & \symb{true} & &
\symb{inv}_{\symb{p}[\pi]}(cs,F,k_1,\dots,k_a) & \arrz &
  \symb{not}(\apps{F}{k_1}{k_a}) \\
\end{array}
\]
\vspace{-14pt}
\[
\begin{array}{rcl}
\symb{zero}_{\symb{p}[\pi]}(cs,F) & \arrz & \symb{zero'}_{\symb{p}[\pi]}(cs,
  \symb{seed}^1_{\pi}[cs],\dots,\symb{seed}^a_{\pi}[cs],F) \\
\symb{zero'}_{\symb{p}[\pi]}(cs,k_1,\dots,k_a,F) & \arrz &
  \symb{ztest}_{\symb{p}[\pi]}(\apps{F}{k_1}{k_a},\symb{zero}_{\pi}(cs,k_1,\dots,k_a),cs,
  \\
  & & \phantom{ztest_{p[\pi]}}\ \ \ k_1,\dots,k_a,F) \\
\symb{ztest}_{\symb{p}[\pi]}(\strue,z,cs,\vec{k},F) & \arrz &
  \sfalse \\
\symb{ztest}_{\symb{p}[\pi]}(\sfalse,\strue,cs,\vec{k},F) & \arrz &
  \strue \\
\symb{ztest}_{\symb{p}[\pi]}(\sfalse,\sfalse,cs,\vec{k},F) & \arrz &
  \symb{zero'}_{\symb{p}[\pi]}(cs,\symb{pred}^1_{\pi}[cs,\vec{k}],\dots,
  \symb{pred}^a_{\pi}[cs,\vec{k}], F)
\hfill\qedhere
\end{array}
\]
Note that, for instance, $\symb{seed}_{\symb{p}[\pi]}[cs]$ is
$\abs{k_1 \dots k_a}{\symb{seed}_{\symb{p}[\pi]}(cs,k_1,\dots,k_a)}$:
the additional parameters $k_i$ should be seen as indexing the result
of the function.
\end{proof}

We obtain:

\begin{theorem}\label{thm:simulation}
Any decision problem in $\etime{k}$ can be accepted by a
$k^{\text{th}}$-order AFS.
\end{theorem}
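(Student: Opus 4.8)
The plan is to combine the counting modules of Lemmas~\ref{lem:mainmodule}--\ref{lem:expQcount} with the Turing machine simulation in Figure~\ref{fig:TM}. Fix a decision problem $X \in \etime{k}$. By definition there is a deterministic Turing machine $M$ deciding $X$ and a constant $a \in \nats$ such that $M$ runs in time $\lambda n.\mathrm{exp}_2^k(an)$. Taking $P := \lambda n.\mathrm{exp}_2^k(an)$, it suffices to exhibit a $k^{\text{th}}$-order $P$-counting module (or one for an $O(P)$ bound, which by the remark after the definition of $\etime{k}$ is just as good, since an overestimate of the running time is explicitly permitted in the simulation): once we have such a module $C_\pi$, we plug its symbols $\symb{zero}_\pi$, $\symb{pred}^i_\pi$, $\symb{suc}^i_\pi$, $\symb{seed}^i_\pi$ into the AFS of Figure~\ref{fig:TM} as sketched in the ``Counting'' paragraph (splitting each numeric argument $\numrep{n}$, $\numrep{p}$ into a tuple $n_1,\dots,n_a$ and replacing occurrences of $\numrep{n-1}$, $\numrep{n+1}$, $\numrep{n=0}$ by the corresponding $\symb{pred}$/$\symb{suc}$/$\symb{zero}$ calls). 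The resulting AFS is cons-free, left-linear, a constructor system, and of order $\max(k, 1) = k$; and by the stated correctness properties of Figure~\ref{fig:TM} together with the defining properties of a counting module, $\symb{decide}(\encode{x}) \arrr{\Rules} \strue$ iff $M$ accepts $x$, iff $x \in X$. So the AFS accepts $X$.

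The remaining task is therefore to build the $P$-counting module of order $k$, and this is where Lemmas~\ref{lem:mainmodule}, \ref{lem:RQcount} and \ref{lem:expQcount} do the work. Write $\mathrm{exp}_2^k(an) = \mathrm{exp}_2^{k}(an)$ and note the recursion $\mathrm{exp}_2^{j+1}(an) = 2^{\mathrm{exp}_2^{j}(an)}$. I would proceed by induction on $k$. For the base case $k = 1$: Lemma~\ref{lem:mainmodule} gives a first-order $(\lambda n.2^{n+1})$-counting module; applying Lemma~\ref{lem:RQcount} $a$ times (composing it with itself, or more carefully composing modules whose products telescope) yields a first-order $(\lambda n.2^{a(n+1)})$-counting module, and $2^{an} \le 2^{a(n+1)} = \mathrm{exp}_2^1(a(n+1))$, an $O(\mathrm{exp}_2^1(an))$ bound, which suffices. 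For the inductive step $k \to k+1$: by the induction hypothesis there is a $k^{\text{th}}$-order $(\lambda n.\mathrm{exp}_2^{k}(an))$-counting module $C_\pi$; applying Lemma~\ref{lem:expQcount} produces a $(k{+}1)^{\text{th}}$-order $(\lambda n.2^{\mathrm{exp}_2^{k}(an)})$-counting module, i.e.\ a $(\lambda n.\mathrm{exp}_2^{k+1}(an))$-counting module of order $k+1$, as required. (If one prefers to start the exponentiation tower from a linear bound, one can equally well take a trivial first-order ``identity'' counting module for $\lambda n.n+1$ and apply Lemma~\ref{lem:expQcount} exactly $k$ times, then Lemma~\ref{lem:RQcount} to absorb the factor $a$; either route gives a module for a bound that dominates $\mathrm{exp}_2^k(an)$.)

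The main obstacle is not any single step but the bookkeeping in stitching the module into Figure~\ref{fig:TM}: one must check that replacing each numeric position argument by an $a$-tuple of terms genuinely preserves all the rules' behaviour — in particular that the laziness observation (reducing a number-tuple $\vec s$ to $\vec t$ never flips a $\symb{zero}$ test, by the last clause of the counting-module definition) is exactly what is needed so that the non-deterministic, strategy-free evaluation of the AFS cannot ``cheat'' or get a wrong answer on the one reduction that reaches a data term. One also has to confirm that $P(|cs|)$ itself is computable within the module's vocabulary from the left-hand side argument $cs$ — which is precisely what $\symb{seed}^i_\pi[cs]$ provides, since $\numinterpret{(\symb{seed}^1_\pi[cs],\dots,\symb{seed}^a_\pi[cs])}_{cs} = P(|cs|) - 1$, an overestimate of any reachable time step, matching the ``$\numrep{P(|cs|)}$ or an overestimation'' requirement. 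Finally, the disjointness and signature-order conditions in the module definition guarantee the combined system still has order exactly $k$, so no order is lost or gained in the assembly.
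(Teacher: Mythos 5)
Your main argument is correct and is essentially the paper's own proof: build a first-order $(\lambda n.2^{a(n+1)})$-counting module from Lemmas~\ref{lem:mainmodule} and~\ref{lem:RQcount}, then apply Lemma~\ref{lem:expQcount} repeatedly (your induction on $k$) to obtain a $k^{\text{th}}$-order module for $\lambda n.\mathrm{exp}_2^k(a(n+1)) \geq \mathrm{exp}_2^k(an)$, and plug it into the simulation of Figure~\ref{fig:TM}, using that an overestimate of the time bound is harmless. One caveat: your parenthetical alternative route (start from a trivial $(\lambda n.n+1)$-module and apply Lemma~\ref{lem:expQcount} exactly $k$ times) overshoots the order, since that module is already first-order and each application of Lemma~\ref{lem:expQcount} raises the order by one, landing you at order $k+1$ rather than $k$; so drop that aside and keep the main route, which applies Lemma~\ref{lem:expQcount} only $k-1$ times to the already-exponential base module.
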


\begin{proof}
Following the construction in this section, it suffices if we can
find a $k^{\text{th}}$-order counting module counting up to
$\mathrm{exp}_2^k(a\cdot n)$ where $n$ is the size of the input and
$a$ a fixed positive integer.  Lemma~\ref{lem:mainmodule} gives a
first-order $\lambda n.2^{n+1}$-counting module, and by iteratively
using Lemma~\ref{lem:RQcount} we obtain $\lambda n.(2^{n+1})^a =
\lambda n.2^{a(n+1)}$ for any $a$.  Iteratively applying
Lemma~\ref{lem:expQcount} on the result gives a
\pagebreak 
$k^{\text{th}}$-order $\lambda n.\mathrm{exp}_2^k(a \cdot (n+1))$-%
counting module.
\end{proof}

\section{Finding normal forms}\label{sec:algorithm}

In the previous section we have seen that every function in
$\etime{k}$ can be implemented by a cons-free $k^{\text{th}}$-order
AFS.  Towards a characterization result, we must therefore show the
converse: that every function implemented by a cons-free
$k^{\text{th}}$-order AFS is in $\etime{k}$.

To achieve this goal, we will now give an algorithm that, on
input any basic term in an AFS of order $k$, will output its set of
data normal forms in $\etime{k}$ in the size of the term.

A key idea is to associate terms of higher-order type to
functions.  We define:
\[
\begin{array}{rcl}
\interpret{\asort} & = & \P(\{ s \mid s \in \B\ \wedge
  \vdash s : \asort \})\ \ \text{for}\ \asort \in \Sorts\ \ 
  (\text{so a set of subsets of}\ \B) \\
\interpret{\atype \arrtype \btype} & = & 
  \interpret{\btype}^{\interpret{\atype}}\ \ 
  (\text{so the set of functions from}\ \interpret{\atype}\ 
  \text{to}\ \interpret{\btype}) \\
\end{array}
\]

Intuitively, an element of $\interpret{\asort}$ represents a set of
possible reducts of a term $s : \asort$, while an element of
$\interpret{\atype \arrtype \btype}$ represents the function defined
by some $\abs{x}{s} : \atype \arrtype \btype$.
Since---as induction on the structure of $\atype$ shows---each
$\interpret{\atype}$ is \emph{finite}, we can define the following
algorithm to find all
normal forms of a given basic term.  In the algorithm, we build
functions $\Conf^0,\Conf^1,\dots$, each mapping statements
$f(A_1,\dots,A_n) \approx t$
to a value in $\{\top,\bot\}$.
Intuitively, $\Conf^i[f(\vec{A}) \approx t]$ denotes whether, in
step $i$ in the algorithm, we have confirmed that $f(s_1,\dots,s_n)
\arrr{\Rules} t$, where each $A_i$ represents the corresponding
$s_i$.

\medskip\noindent\hrulefill\vspace{-8pt}
\begin{algorithm}\label{alg:main}
\quad

{\bf Input:} A basic term $s = g(t_1,\dots,t_m)$.

{\bf Output:} The set of data normal forms of $s$. Note that
this set may be empty.

\smallskip
Set $\B := \B_s$.
For all $f : [\atype_1 \times \dots \times \atype_n] \arrtype
\asort \in \Defineds$, all $A_1 \in \interpret{\atype_1},\dots,
A_n \in \interpret{\atype_n}$, all $t \in \interpret{\asort}$,
we let
$\Conf^0[f(A_1,\dots,A_n) \approx t] := \bot$.
Now, for all such $f,\vec{A},t$ and all $i \in \nats$:
\begin{itemize}
\item if $\Conf^i[f(\vec{A}) \approx t] = \top$, then
  $\Conf^{i+1}[f(\vec{A}) \approx t] := \top$;
\item otherwise,
    for all rules $f(\ell_1,\dots,\ell_n) \arrz r \in \Rules$,
    for all substitutions $\gamma$ on domain  $\FV(f(\vec{\ell
    })) \setminus \{ \vec{\ell} \}$ (so on those variables occurring
    below constructors) such that $\ell_j\gamma \in A_j$ for all $j$
    with $\ell_j$ not a variable ($A_j$ is a set of terms since
    $\ell_j$, a non-variable proper constructor term, must have
    base type),
    let $\eta$ be the function such that for each $\ell_j \in \V$, $\eta(\ell_j) = A_j$,
and
    test whether $t \in \NF^i(r\gamma,\eta)$.
  If there are a rule and substitution where this test succeeds, let
  $\Conf^{i+1}[f(\vec{A}) \approx t] := \top$, otherwise let
  $\Conf^{i+1}[f(\vec{A}) \approx t] := \bot$.
\end{itemize}
Here, $\NF^i(s,\eta)$ is defined recursively for $\B$-safe terms
$s$ and functions $\eta$ mapping all variables $x : \atype$ in
$\FV(s)$ to an element of $\interpret{\atype}$, as follows:
\begin{itemize}
\item if $s$ is a data term, then $\NF^i(s,\eta) :=
  \{ s \}$;
\item if $s$ is a variable, then $\NF^i(s,\eta) := \eta(s)$;
\item if $s = f(s_1,\dots,s_n)$ with $f \in \Defineds$, then
  $\NF^i(s,\eta)$ is the set of all $t \in \B$ such that
  $\Conf^i[f(\NF^i(s_1,\eta),\dots,\NF^i(s_n,\eta)) \approx t] =
  \top$;
\item if $s = \app{u}{v}$, then $\NF^i(s,\eta) = \NF^i(u,\eta)(
  \NF^i(v,\eta))$;
\item if $s =_\alpha \abs{x}{t} : \atype \arrtype \btype$ where $x
  \notin \domain(\eta)$, then $\NF^i(s,\eta) :=$ the function
  mapping $A \in \interpret{\atype}$ to $\NF^i(t,\eta \cup [x:=A])$.
\end{itemize}
When $\Conf^{i+1}[f(\vec{A}) \approx t] = \Conf^i[f(\vec{A}) \approx
t]$ for all
statements,
the algorithm ends; we let $I
:= i+1$ and return $\{ t \in \B \mid \Conf^I[g(\{t_1\},\dots,\{t_m\})
\approx t] = \top\}$.

\noindent
\hrulefill
\end{algorithm}

As $\Defineds,\ \B$ and all $\interpret{\atype_i}$ are all finite,
and the number of positions at which $\Conf^i$ is $\top$ increases
in every step, the algorithm always terminates.
The intention
is that $\Conf^I$ reflects rewriting for basic terms.  This result is
stated formally in
Theorem~\ref{thm:algorithmsoundcomplete}.

\begin{example}
Consider the palindrome AFS in Example~\ref{ex:palindrome}, with
starting term $s = \one(\nul(\nil))$.  Then $\B_s = \{
\one(\nul(\nil)),\nul(\nil),\nil,\strue,\sfalse \}$.  Then we have
$\interpret{\bool} = \{ \emptyset,\{\strue\},\{\sfalse\},\linebreak
\{\strue,
\sfalse\} \}$ and $\interpret{\bits}$ is the set containing all eight
subsets of $\{\one(\nul(\nil)),\nul(\nil),\nil\}$.  Thus, there are
$8 \cdot 8 \cdot 2$ statements of the form $\symb{palindrome}(A,B)
\approx t$, $4 \cdot 4 \cdot 2$ of the form $\symb{and}(A,B) \approx
t$ and so on, totalling $432$ statements to be considered in every
step.

We consider one step, determining
$\Conf^1[\symb{chk}_\one(\{\one(\nul(\nil))\},\{\nul(\nil),\nil\})
\approx \strue]$.
There are two viable combinations of a rule and a substitution:
$\symb{chk}_\one(\one(xs),\nul(ys)) \arrz \symb{chk}_\one(xs,ys)$
with substitution
$\gamma = [xs:=\nul(\nil),ys:=\nil]$
and
$\symb{chk}_\one(\one(xs),\nil) \arrz \strue$ with $\gamma = [xs:=
\nul(\nil)]$.
Consider the first.  As there are no functional variables, $\eta$ is
empty and we need to determine whether
$\strue \in \NF^1(\symb{chk}_\one(\nul(\nil),\nil),\emptyset)$.
This fails, because
$\Conf^0[\xi] = \bot$ for all statements $\xi$.  However, the check
for the second rule, $\strue \in \NF^1(\strue,\emptyset)$, succeeds.
Thus, we mark
$\Conf^1[\symb{chk}_\one(\{\one(\nul(\nil))\},\{\nul(\nil),\nil\})
\approx \strue] = \top$.
\end{example}

\edef\soundnessthm{\number\value{theorem}}
\begin{theorem}\label{thm:algorithmsoundcomplete}
Let $f : [\asort_1 \times \dots \times \asort_n] \arrtype \bsort \in
\Defineds$ and $s_1 : \asort_1,\dots,s_n : \asort_n, t : \bsort$ be
data terms.  Then $\Conf^I[f(\{s_1\},\dots,\{s_n\}) \approx t] = \top$
if and only if $f(\vec{s}) \arrr{\Rules} t$.
\end{theorem}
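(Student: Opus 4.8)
The plan is to prove the two directions separately, each by an induction tailored to the structure of the algorithm. For the ``only if'' direction (soundness), I would prove the stronger statement that $\Conf^i$ is always sound: for every $i$, every defined symbol $f$, every tuple $\vec{A}$ and every $t$, if $\Conf^i[f(\vec{A}) \approx t] = \top$ then for all data terms $\vec{s}$ with each $s_j \in A_j$ we have $f(\vec{s}) \arrr{\Rules} t$. Simultaneously I would need the companion statement for $\NF$: if $t \in \NF^i(r\gamma,\eta)$ and $\eta$ is ``realized'' by a substitution $\delta$ (meaning $\delta(x) \arrr{\Rules} u$ for every $u \in \eta(x)$, suitably generalized to higher types), then $r\gamma\delta \arrr{\Rules} t$. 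The proof is by induction on $i$, with an inner structural induction on the term argument of $\NF^i$; the base case $i=0$ is vacuous since $\Conf^0$ is everywhere $\bot$, and the inductive step reads off the definition of $\Conf^{i+1}$ (a rule $f(\vec\ell)\arrz r$ fired with some $\gamma$), using the rewrite step $f(\vec{s}) \arr{\Rules} (r\gamma)\delta$ followed by the reductions supplied by the $\NF$ induction hypothesis.

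For the ``if'' direction (completeness), I would argue: whenever $f(\vec{s}) \arrr{\Rules} t$ with $\vec{s},t$ data terms, there is some step index $i$ with $\Conf^i[f(\{\vec{s}\}) \approx t] = \top$, and then since $\Conf$ is monotone and stabilizes at $I$ we get $\Conf^I[\dots] = \top$. The natural measure here is the \emph{length} of the reduction $f(\vec{s}) \arrr{\Rules} t$, but a single rewrite step $f(\vec{s}) \arr{\Rules} r\gamma\delta$ followed by a reduction of $r\gamma\delta$ to $t$ forces me to reason about reductions starting from arbitrary $\B$-safe terms, not just basic ones. So I would set up a generalized completeness lemma: for every $\B$-safe term $u$ with $u \arrr{\Rules} t$ ($t$ a data term), and every $\eta$ that ``over-approximates'' $u$'s free-variable substitution, $t \in \NF^i(u,\eta)$ for some $i$ — and prove this by induction on the length of $u \arrr{\Rules} t$ (or a lexicographic combination of reduction length and term size to handle the administrative $\beta$-steps and the descent into subterms of applications and constructor-headed terms). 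The key point that makes the induction go through is Lemma~\ref{lem:safetypreserve}: all intermediate terms remain $\B$-safe, so all data subterms encountered lie in $\B$, and hence every relevant $\NF^i(\cdot,\eta)$ value lives in the finite domain $\interpret{\cdot}$.

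The main obstacle, I expect, is the bookkeeping around \emph{higher-order arguments and $\eta$}. A statement $f(\vec{A}) \approx t$ abstracts each argument $s_j$ by a semantic object $A_j \in \interpret{\atype_j}$; when $\atype_j$ is higher-order, $A_j$ is a set-theoretic function and the correspondence between ``$A_j$ is a correct abstraction of $s_j$'' and ``$\NF^i$ computes the right thing'' must be formulated by a logical-relations-style predicate indexed by types: at base type $\asort$, $A \subseteq \B$ abstracts $s$ iff $A$ is exactly (soundness: a superset; completeness: for every reachable data reduct, membership) the set of data reducts of $s$; at arrow type, $A$ abstracts $s = \abs{x}{v}$ iff for every $A'$ abstracting some $s'$, $A(A')$ abstracts $v[x:=s']$. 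I would prove that $\NF^i$ respects this relation in both directions, which is where the nesting of the $\app{u}{v}$ and $\abs{x}{v}$ cases bites. A secondary subtlety is the treatment of \emph{non-determinism and overlapping rules}: since the algorithm quantifies over all rules and all $\gamma$, soundness needs that each confirmed fact corresponds to \emph{some} reduction, while completeness needs that \emph{any} reduction is eventually captured; neither requires confluence, but I must be careful that $\NF^i$ at a defined-symbol node takes the union over all $t$ with $\Conf^i[\dots] = \top$, matching the ``set of all reducts'' reading. The constructor-argument substitution $\gamma$ (matching the proper-constructor parts of $\vec{\ell}$) versus the semantic $\eta$ (handling the variable parts) must be kept disjoint exactly as in the algorithm's definition, and Lemma~\ref{lem:niceconstructor} lets me assume constructors are first-order so that the $\ell_j$ that are not variables indeed have base type and $A_j$ is genuinely a set of data terms.
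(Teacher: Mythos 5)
Your soundness half (induction on the step index $i$, with an inner structural induction and a type-indexed relation tying each semantic value $A \in \interpret{\atype}$ to the terms it abstracts) is workable, modulo getting the direction of the approximation right: for soundness you need every element of $A$ to be an actual reduct of the realizing term (an under-approximation), while your parenthetical ``soundness: a superset'' suggests the opposite. The genuine gap is in the completeness direction. Inducting on the length of the given reduction (even lexicographically with term size) breaks down exactly where you must justify that the environment $\eta'$ built for a fired rule $f(\vec{\ell}) \arrz r$ --- mapping each variable $\ell_j$ to $\NF^i(u_j,\eta)$ --- ``over-approximates'' the actual matching substitution, which sends $\ell_j$ to some reduct $v_j$ of $u_j\theta$. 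With a purely type-indexed relation, establishing this requires that \emph{every} data reduct of $u_j\theta$ (equivalently of $v_j$) lies in $\NF^i(u_j,\eta)$, i.e.\ the full completeness statement for $u_j$ over reductions of arbitrary, unbounded length; these are not sub-reductions of the one you are inducting on, so the primary length component of your measure does not decrease (and may increase), and the induction does not go through. The same phenomenon reappears at higher types, where the relation quantifies over arbitrary applications and hence over further unbounded reductions; repairing it would require either a step-indexed relation with delicate budget bookkeeping, or a ``lazy'' environment notion in which the abstraction hypothesis is discharged at the variable leaves --- neither of which is in your sketch.

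This is precisely the work done by the paper's central device, which your proposal omits: a labeled system $\Ruleslab$ in which every defined symbol carries a numeric label that strictly decreases when a rule is applied, together with a proof (via the Computability Path Ordering) that $\arr{\Ruleslab}$ is terminating even though $\arr{\Rules}$ need not be. Environments are then formalized as $\NF$-substitutions $\eta$ whose entries $\NF^j(s,\zeta)$ come with labeled realizers $\labl_j(s)\overline{\zeta}$, and both directions of the theorem are proved at once by well-founded induction on the realizing labeled term ordered by $\arr{\Ruleslab} \cup \supterm$. Because the induction hypothesis is available for \emph{all} reductions from any $\arr{\Ruleslab}$-reduct or subterm --- not just for suffixes of a fixed reduction --- the problematic step above (applying the hypothesis to $\labl_{i-1}(r\gamma')\overline{\chi}$, whose reduction to $t$ may be longer than the remaining portion of the given one) is unproblematic there. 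Your step-index-on-$i$ soundness argument corresponds to one half of that mutual induction and could be kept, but without the labeled, terminating system (or an equivalent step-indexing of your logical relation) the completeness half as proposed does not succeed.
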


\begin{proof}[Proof Sketch]
Define a labeled variation of $\Rules$:
\[
\Ruleslab =\ 
\{ f_{i+1}(\vec{\ell}) \arrz \symb \labl_i(r) \mid f(
  \vec{\ell}) \arrz r \in \Rules \wedge i \in \nats \} \cup
\{ f_{i+1}(\vec{x}) \arrz f_i(\vec{x}) \mid f \in
  \Defineds \wedge i \in \nats \}
\]
Here $\labl_i$ replaces each defined symbol $f$ by a symbol $f_i$.
Then $\Ruleslab$ is infinite, and $f(\vec{s}) \arrr{\Rules} t$ if{f} some $f_i(\vec{s}) \arrr{\Ruleslab} t$.  Furthermore,
$\arr{\Ruleslab}$ is terminating (even if $\arr{\Rules}$ is not!) as is provable using, e.g., the \emph{Computability Path
Ordering}~\cite{bla:jou:rub:08}.  Thus, $\arr{\Ruleslab}$
is a well-founded binary relation on the set of labeled terms, and we
can hence perform induction.

Consider the arguments passed to $\Conf^i$ in the
recursive process: $\NF^i$ is defined using tests of the form
$\Conf^i[f(\NF^i(s_1,\eta),\dots,\NF^i(s_n,\eta))] = \top$, where
each $\eta(x)$ itself has the form $\NF^j(t,\eta')$.  To formally
describe this, let an \emph{$\NF$-substitution} be recursively defined
as a mapping from some (possibly empty) set $V \subseteq \V$ such that
for each $x : \atype \in V$ there are an $\NF$-substitution $\delta$ and a
term $s$ with $\vdash s : \atype$ such that $\eta(x) = \NF^j(s,
\delta)$ for some $j$.
For an $\NF$-substitution $\eta$ on domain $V$, we define
$\overline{\eta}(x) = x$ for $x \notin V$, and $\overline{\eta}(x) =
\labl_j(s)\overline{\zeta}$ for $x \in V$ with $\eta(x) =
\NF^j(s,\zeta)$.  Then the following two claims can be derived by
mutual induction on $q$ ordered with $\arr{\Ruleslab} \cup\:\supterm$
(all $\eta_j$ and $\zeta$ are $\NF$-substitutions):
\begin{itemize}
\item $\Conf^i[f(\NF^{j_1}(s_1,\eta_1),\dots,\NF^{j_n}(s_n,\eta_n))
  \approx t] = \top$ \emph{if and only if} \\
  $q := f_i(\labl_{j_1}(s_1)\overline{\eta_1},\dots,\labl_{j_n}(s_n)
  \overline{\eta_n}) \arrr{\Ruleslab} t$;
\item $t \in \NF^i(u,\zeta)(\NF^{j_1}(s_1,\eta_1),\dots,\NF^{j_n}(s_n,
  \eta_n))$ \emph{if and only if} \\
  $q := \apps{(\labl_i(u)\overline{\zeta})}{\labl_{j_1}(s_1)
  \overline{\eta_1}}{\labl_{j_n}(s_n)\overline{\eta_n}} \arrr{
  \Ruleslab} t$.
\end{itemize}
Since, if we refrain from stopping the process in step $I$, we have
$\Conf^I = \Conf^{I+1} = \Conf^{I+2} = \dots$, the theorem
follows because $f(\vec{s}) \arrr{\Rules} t$ if{f} some $f_i(\vec{s}) \arrr{\Ruleslab} t$.
\end{proof}

It remains to prove that Algorithm~\ref{alg:main} runs sufficiently
fast.

\edef\complexitythm{\number\value{theorem}}
\begin{theorem}\label{thm:algcomplexity}
If $(\F,\Rules)$ has order $k$, then Algorithm~\ref{alg:main} runs in time
$O(\mathrm{exp}_2^k(m \cdot n))$ for some $m$.
\end{theorem}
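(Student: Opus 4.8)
The plan is to multiply three quantities: the number of iterations the algorithm performs, the number of statements $f(\vec{A}) \approx t$ inspected per iteration, and the cost of inspecting one such statement. Each will turn out to be at most $\mathrm{exp}_2^k$ of a linear function of $n := |s|$, and since a product of finitely many towers of the form $2^{\mathrm{exp}_2^{k-1}(an)}$ is again of that shape — using $\mathrm{exp}_2^j(x)^e \le \mathrm{exp}_2^j(ex)$, exactly as in Section~\ref{sec:prelims} — the product stays $O(\mathrm{exp}_2^k(m \cdot n))$ for a suitable constant $m$.

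The key ingredient is a bound on the semantic domains. By induction on types one shows that there is a constant $c$, depending only on $(\F,\Rules)$, with $|\interpret{\atype}| \le \mathrm{exp}_2^{d+1}(c \cdot n)$ for every type $\atype$ of order $d$: the base case uses that $\B_s$ has $O(n)$ elements, so $|\interpret{\asort}| \le 2^{O(n)} = \mathrm{exp}_2^1(O(n))$; the step writes $\atype$ as $\btype_1 \arrtype \cdots \arrtype \asort$ with every $\btype_i$ of order at most $d-1$, bounds $|\interpret{\atype}| = |\interpret{\asort}|^{\prod_i |\interpret{\btype_i}|}$, and collapses first the finite product of order-$(d-1)$ domains and then the exponentiation using the elementary $\mathrm{exp}_2$-inequalities. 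As $\F$ is fixed, finitely many types occur, so a single $c$ works. Two observations keep the exponent level at $k$ rather than $k+1$. First, every argument type of a defined symbol has order at most $k-1$, by the definition of a type declaration of order $\le k$. Second, since we assume right-hand sides of rules to be $\beta$-normal — so that an abstraction never heads an application, and, by the AFS syntax, a function symbol is always applied to exactly its arguments — a routine (if slightly fiddly) induction on the structure of a right-hand side shows that \emph{every} subterm of it, and hence every $\lambda$-bound variable occurring in it and every value returned by $\NF^i$, has type of order at most $k-1$. Consequently every semantic value the algorithm ever manipulates lies in a domain of size at most $N := \mathrm{exp}_2^k(c' n)$ for a constant $c'$.

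The remaining steps are accounting. The table $\Conf^i$ holds one bit for each statement $f(A_1,\dots,A_n) \approx t$ with $f \in \Defineds$, $A_j \in \interpret{\atype_j}$ and $t \in \interpret{\asort}$; since $\Defineds$ and the arities are fixed and all $\atype_j$ have order at most $k-1$, the number $M$ of statements is at most $\mathrm{exp}_2^k(O(n))$. A bit of $\Conf^i$ can only pass from $\bot$ to $\top$, so every non-final iteration flips at least one bit and the algorithm halts after $I \le M + 1 \le \mathrm{exp}_2^k(O(n))$ iterations. Within one iteration we loop over the $\le M$ statements; for one that is currently $\bot$ we loop over the constantly many applicable rules $f(\vec{\ell}) \arrz r$ and over the substitutions $\gamma$ of the constructor-position variables — each such variable is sent into $\B_s$ and there are $O(1)$ of them, so at most $n^{O(1)}$ substitutions — and for each we test whether $t \in \NF^i(r\gamma,\eta)$. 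Here $|r\gamma| = O(n)$ and $r$ contains only $O(1)$ abstractions, so unfolding the recursive definition of $\NF^i$ gives a recursion tree in which an abstraction node branches into $|\interpret{\atype}| \le N$ children while every other node has $O(1)$ children; the tree thus has at most $N^{O(1)} = \mathrm{exp}_2^k(O(n))$ nodes, and each node performs a bounded number of operations (copying an $\eta$-value, applying a tabulated function to an argument, a $\Conf^i$-lookup) on objects of size $\le N$, costing $\le \mathrm{exp}_2^k(O(n))$. So one $\NF^i$-evaluation, and hence one iteration (including the $\le M$ comparisons of $\Conf^{i+1}$ with $\Conf^i$), costs $\mathrm{exp}_2^k(O(n))$. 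Multiplying the three $\mathrm{exp}_2^k(O(n))$ factors and absorbing the polynomial overhead of representing the finite — but very large — tables and function tabulations on a Turing machine (legitimate since $\mathrm{exp}_2^k$ is closed under polynomials, Section~\ref{sec:prelims}) yields the claimed bound.

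I expect the main obstacle to be the second paragraph: doing the nested-exponential bookkeeping for $|\interpret{\atype}|$ with a single uniform constant, and, more subtly, establishing that the types actually manipulated by the algorithm (arguments of defined symbols, and subterms of $\beta$-normal right-hand sides) all have order at most $k-1$ — this is precisely what separates $\mathrm{exp}_2^k$ from $\mathrm{exp}_2^{k+1}$, and it is where the left-linear constructor and $\beta$-normal right-hand side conventions are put to work.
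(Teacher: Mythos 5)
Your proposal is correct and follows essentially the same route as the paper's proof (including the appendix-level details): an induction on types bounding $\card(\interpret{\atype})$ by $\mathrm{exp}_2^k$ of a linear function of $|\B|$ because argument types and, via $\beta$-normal right-hand sides, all subterm types have order at most $k-1$; a bound on the number of statements and hence on the number of iterations via monotonicity of $\Conf$; a recursion-tree/structural-induction bound on the cost of one $\NF^i$-evaluation in which only abstraction nodes branch over a semantic domain; and absorption of the remaining polynomial overhead using closure of $\etime{k}$ under polynomials. The minor differences (your $n^{O(1)}$ count of substitutions versus the paper's generous $2^{aN}$, and $I \leq M+1$ versus $X+2$) do not change the argument.
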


\begin{proof}
Write $N := |\B|$. As $\Rules$ and $\F$ are fixed, $N$ is linear in
the size of the only input, $s$.
We claim that if
$k,i \in \nats$ are such that $\atype$ has at most \emph{order} $k$, and
the \emph{longest sequence} $\atype_1 \arrtype \dots \arrtype \atype_n
\arrtype \asort$ occurring in $\atype$ has length $n+1 \leq i$, then
$\card(\interpret{\atype}) \leq \mathrm{exp}_2^{k+1}(i^k \cdot N)$.
\vspace{2pt}

(Proof of claim.)
Observe first
that $\P(\B)$ has cardinality $2^N$.
Proceed by induction on the form of $\atype$. Note that we can
write $\atype$ in the form $\atype_1 \arrtype \dots \arrtype \atype_n
\arrtype \asort$ with $n < i$ and each $\atype_j$ having order at
most $k-1$ (as $n = 0$ when given a $0^{\text{th}}$-order type). We have:
{\small
\begin{align*}
\card(\interpret{\atype_1 \arrtype \dots \arrtype \atype_n \arrtype
\asort}) &=
\card((\cdots (\interpret{\asort}^{\interpret{\atype_n}})^{
\interpret{\atype_{n-1}}}\cdots)^{\interpret{\atype_1}}) 
= \card(\interpret{\asort})^{\card(\interpret{\atype_n})\cdots
  \card(\interpret{\atype_1})} \\
&\leq  2^{N \cdot \card(\interpret{\atype_n}) \cdots
  \card(\interpret{\atype_1})} 
\leq  2^{N \cdot \mathrm{exp}_2^k(i^k \cdot N) \cdots
  \mathrm{exp}_2^k(i^k \cdot N)}\hfill
  (\text{by IH})  \\
&= 2^{N \cdot \mathrm{exp}_2^k(i^k \cdot N)^n} 
\leq  2^{\mathrm{exp}_2^k(i^k \cdot N \cdot n + N)}\hfill
  (\text{by induction on}\ k)  \\
&=  \mathrm{exp}_2^{k+1}(n \cdot i^k \cdot N + N)
  \leq \mathrm{exp}_2^{k+1}(i \cdot i^k \cdot N)
  = \exp_2^{k+1}(i^{k+1} \cdot N) \\
& \ \ \ \ (\text{because}\ n \cdot i^k + 1 \leq (n+1) \cdot i^k \leq i \cdot i^k)
\end{align*}
\vspace{-18pt}
}

\noindent (End of proof of claim.)

Since, in a $k^{\text{th}}$-order AFS, all types occurring in type
declarations have order at most $k-1$, there is some $i$ (depending
solely on $\F$) such that all sets $\interpret{\atype}$ in the
algorithm have cardinality $\leq \mathrm{exp}_2^k(i^{k-1} \cdot
N)$.  Writing $a$ for the maximal arity in $\F$, there
are at most $|\Defineds| \cdot \mathrm{exp}_2^k(i^{k-1} \cdot N)^a
\cdot N \leq |\Defineds| \cdot \mathrm{exp}_2^k((i^{k-1} \cdot a +
1) \cdot N)$ distinct statements $f(\vec{A}) \approx t$.

Writing $m := i^{k-1} \cdot a + 1$ and $X := |\Defineds| \cdot
\mathrm{exp}_2^k(m \cdot N)$, we thus find: the algorithm has at most
$I \leq X+2$ steps, and in each step we consider at most $X$
statements $\varphi$ where $\Conf^i[\varphi] = \bot$.  For every
applicable rule, there are at most $(2^N)^a$ different substitutions
$\gamma$
, so we have to test
a statement 
$t \in \NF^i(r\gamma,\eta)$ at most $X \cdot (X+2)
\cdot |\Rules| \cdot 2^{aN}$ times.  The exact cost of calculating
$\NF^i(r\gamma,\eta)$ is implementation-specific, but is certainly
bounded by some polynomial $P(X)$ (which depends on the form of $r$).
This leaves the total time cost of the algorithm at $O(X \cdot (X+1)
\cdot 2^{aN} \cdot P(X)) = O(P'(\mathrm{exp}_2^k(m \cdot N)))$ for
some polynomial $P'$ and constant $m$. As $\etime{k}$ is robust under
taking polynomials, the result follows.
\end{proof}

\begin{theorem}\label{thm:characterization}
Let $k \geq 1$. A set $S \subseteq \{0,1\}^+$ is in $\etime{k}$ if{f} there is an AFS of order $k$ that accepts $S$.
\end{theorem}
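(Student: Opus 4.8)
The plan is to combine the two directions already established in the excerpt. Theorem~\ref{thm:simulation} gives the inclusion ``$\etime{k}$ is accepted by some $k^{\text{th}}$-order AFS'', so only the converse requires the machinery of Section~\ref{sec:algorithm}. For the converse, suppose $S \subseteq \{0,1\}^+$ is accepted by an AFS $(\F,\Rules)$ of order $k$ via a designated symbol $\symb{decide} : [\bits] \arrtype \bool$, so that $\symb{decide}(\encode{x}) \arrr{\Rules} \strue$ iff $x \in S$. I would describe a deterministic Turing machine that, on input $x$, first constructs the basic term $s := \symb{decide}(\encode{x})$, then runs Algorithm~\ref{alg:main} on $s$, and finally accepts iff $\strue$ is in the returned set of data normal forms. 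Correctness of this machine is immediate from Theorem~\ref{thm:algorithmsoundcomplete} (with $n = 1$, $s_1 = \encode{x}$, $t = \strue$): the algorithm returns exactly $\{ t \in \B \mid \symb{decide}(\encode{x}) \arrr{\Rules} t \text{ and } t \text{ a data term}\}$, so it contains $\strue$ precisely when $x \in S$.

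The running-time bound comes from Theorem~\ref{thm:algcomplexity}, which states Algorithm~\ref{alg:main} runs in time $O(\mathrm{exp}_2^k(m \cdot n))$ for some constant $m$ depending only on $(\F,\Rules)$, where $n$ is the size of the input term. Since $\encode{x}$ has size linear in $|x|$ and the term $\symb{decide}(\encode{x})$ adds only a constant, $n = O(|x|)$, so the machine runs in time $O(\mathrm{exp}_2^k(m' \cdot |x|))$ for a suitable constant $m'$; by the robustness remark in the preliminaries (that $\etime{k}$ is closed under composition with polynomials, and in particular absorbs the polynomial overhead of simulating the algorithm on a Turing machine and of building $\encode{x}$), this places $S \in \etime{k}$. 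One small point to spell out is that Algorithm~\ref{alg:main} is described abstractly (manipulating finite sets $\interpret{\atype}$ and the table $\Conf^i$), so I would note that each such object has a representation of size bounded by $\mathrm{exp}_2^k(O(n))$ — this is exactly the cardinality claim proved inside Theorem~\ref{thm:algcomplexity} — hence a Turing machine can store and update it within the stated time bound.

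For the forward direction I simply invoke Theorem~\ref{thm:simulation}: if $S \in \etime{k}$ then there is a $k^{\text{th}}$-order AFS accepting $S$. Strictly, Theorem~\ref{thm:simulation} is phrased for ``decision problems'', and here the input alphabet is $I = \{0,1\}$; the simulation of Section~\ref{sec:counting} uses constructors $\symb{a} : [\bits] \arrtype \bits$ for each $a \in I$ together with $\nil$ and the booleans, and the acceptance criterion via $\symb{decide}$ matches the one in Theorem~\ref{thm:characterization}, so no adaptation is needed.

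The only genuinely delicate point — and hence the step I would flag as the main obstacle — is the interface between the two sources: making sure that ``accepts $S$'' in the sense of Section~\ref{sec:prelims} (existence of \emph{some} reduction $\symb{decide}(\encode{x}) \arrr{\Rules} \strue$, with possibly many other normal forms) is precisely the property that Algorithm~\ref{alg:main} decides. This is where the non-determinism of the rewriting could in principle cause trouble, but Theorem~\ref{thm:algorithmsoundcomplete} is stated exactly so as to settle it: it characterizes reachability of an \emph{arbitrary} data term $t$ from a basic term, irrespective of other reducts, so ``$\exists$ reduction to $\strue$'' becomes the decidable condition ``$\Conf^I[\symb{decide}(\{\encode{x}\}) \approx \strue] = \top$''. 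Everything else is bookkeeping: converting between term size and string length, and appealing to the polynomial-robustness of $\etime{k}$. I would therefore keep the proof short, essentially a two-line argument in each direction, citing Theorems~\ref{thm:simulation}, \ref{thm:algorithmsoundcomplete} and~\ref{thm:algcomplexity}.
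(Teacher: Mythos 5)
Your proof is correct and follows exactly the route the paper takes: Theorem~\ref{thm:simulation} for the forward direction, and Algorithm~\ref{alg:main} with Theorems~\ref{thm:algorithmsoundcomplete} and~\ref{thm:algcomplexity} (plus the polynomial robustness of $\etime{k}$) for the converse. The extra bookkeeping you spell out (building $\symb{decide}(\encode{x})$, representing the sets $\interpret{\atype}$ on a Turing machine) is implicit in the paper's two-line proof, so there is no substantive difference.
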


\begin{proof}
If $S \in \etime{k}$, Theorem~\ref{thm:simulation}
shows that it is accepted by an AFS of order $k$.
Converse\-ly, if there is an AFS of order $k$ that accepts $S$,
Theorem \ref{thm:algcomplexity} shows that we can find whether any basic term reduces to $\strue$ in time
$O(\mathrm{exp}_2^k(m \cdot n))$ for some $m$, and thus $S \in \etime{k}$.
\end{proof}

\begin{remark}
Observe that Theorem~\ref{thm:characterization}
concerns \emph{extensional} rather than \emph{intensional} behaviour of cons-free AFSs: 
a cons-free AFS may take arbitrarily many steps to reduce its input to normal form, even
if it accepts a set that a Turing machine may decide in a bounded number of steps. However,
Algorithm~\ref{alg:main} can often find the possible results of an
AFS faster than evaluating the AFS would take, by avoiding duplicate
calculations.
\end{remark}

\section{Changing the restrictions}\label{sec:changes}

In the presence of non-determinism, minor syntactical changes can make
a large difference in expressivity.  We briefly consider two natural
changes here.

\subsection{Non-left-linearity}

Recall that we imposed three restrictions: the rules in $\Rules$ must
be \emph{constructor rules}, \emph{left-linear} and \emph{cons-free}.
Dramatically, dropping the restriction on left-linearity allows us
to decide every Turing-decidable set using first-order systems.
This is demonstrated by the first-order AFS in
Figure~\ref{fig:nonlinear} which simulates an arbitrary Turing
Machine on input alphabet $I = \{0,1\}$.  Here, a tape $x_0 \dots x_n
\blank\blank \dots$ with the tape head at
position $i$ is represented by a triple $(x_{i-1}\symb{::}\cdots
\symb{::}x_0,~x_i,~x_{i+1}\symb{::}\cdots\symb{::}x_n)$, where the
``list constructor'' $\symb{::}$ is a \emph{defined symbol}, ensured
by a rule which never fires.  To split such a list into a head and
tail, the AFS non-deterministically generates a \emph{new} head and
tail, makes sure they are fully evaluated, and uses a non-left-linear
rule to test whether their combination corresponds to the original
list.

\begin{figure}[htb]
\vspace{-12pt}
\[
\begin{array}{rclrclrcl}
\bot\symb{::}t & \arrz & t &
\symb{rnd} & \arrz & \symb{I} &
\symb{translate}(\symb{0}(xs)) & \arrz & \symb{O::translate}(xs) \\
\symb{rnd} & \arrz & \symb{O} &
\symb{rnd} & \arrz & \symb{B} &
\symb{translate}(\symb{1}(xs)) & \arrz & \symb{I::translate}(xs) \\
& & & 
& & & 
\symb{translate}(\nil) & \arrz & \symb{B::translate}(\nil) \\
\multicolumn{6}{l}{\symb{rndtape}(x) \arrz \nil} &
\symb{translate}(\nil) & \arrz & \nil \\
\multicolumn{6}{l}{\symb{rndtape}(x) \arrz \symb{rnd::rndtape}(x)} &
\symb{equal}(xl,xl) & \arrz & \symb{true} \\
\end{array}
\]
\vspace{-8pt}
\[
\begin{array}{rcl}
\symb{start}(cs) & \arrz & \symb{run}(\symb{startstate},\nil,\symb{B},
  \symb{translate}(cs)) \\
\symb{run}(\unknown{s},xl,\unknown{r},yl) & \arrz &
  \symb{shift}(\unknown{t},xl,\unknown{w},yl,\unknown{d})\ 
  \ \ 
  \llbracket\text{for every transition}\ \transition{\unknown{s}}{
  \unknown{r}}{\unknown{w}}{\unknown{d}}{\unknown{t}}\rrbracket \\
\symb{shift}(s,xl,c,yl,d) & \arrz & \symb{shift}_1(s,xl,c,yl,d,
  \symb{rnd},\symb{rndtape}(\symb{O}),\symb{rndtape}(\symb{I})) \\
\symb{shift}_1(s,xl,c,yl,d,\unknown{b},t,t) & \arrz &
  \symb{shift}_2(s,xl,c,yl,d,\unknown{b},t)\ \ \llbracket
  \text{for every}\ \unknown{b} \in \{\symb{O},\symb{I},\symb{B}\}
  \rrbracket \\
\symb{shift}_2(s,xl,c,yl,\symb{R},z,t) & \arrz &
  \symb{shift}_3(s,c~\symb{::}~xl,z,t,
  \symb{equal}(yl,z~\symb{::}~t)) \\
\symb{shift}_2(s,xl,c,yl,\symb{L},z,t) & \arrz &
  \symb{shift}_3(s,t,z,c~\symb{::}~yl,\symb{equal}(xl,z~\symb{::}~
  t)) \\
\symb{shift}_3(s,xl,c,yl,\symb{true}) & \arrz & \symb{run}(s,xl,c,yl) \\
\end{array}
\]
\caption{A first-order non-left-linear AFS that simulates a Turing machine}
\label{fig:nonlinear}
\end{figure}

\subsection{Product Types}\label{subsec:products}

Unlike AFSs, Jones' minimal language in~\cite{jon:01} employs a 
\emph{pairing constructor}, essentially admitting terms $(s,t) :
\asort \times \bsort$ if $\vdash s : \asort$ and $\vdash t : \bsort$
are data terms or themselves pairs.  This is not in conflict with the
cons-freeness requirement due to type restrictions: it does not allow
construction of an arbitrarily large structure of fixed type.
%
In our (non-deterministic) setting, however,
pairing is significantly more powerful.  Following the ideas of
Section~\ref{sec:counting}, one can count up to arbitrarily large
numbers: for an input string $x_n(\dots(x_1(\nil)))$ of length $n$,
\begin{itemize}
\item the counting module $C_0$ represents $i \in \{0,\dots,
  n\}$ by a substring $x_i(\dots(x_1(\nil))) : \bits$;
\item given a $(\lambda n.\mathrm{exp}_2^k(n+1))$-counting module
  $C_k$, 
  we let $C_{k+1}$ represent a number
  $b$ with bit representation $b_0\dots b_N$ (for $N < \mathrm{exp
  }_2^k(n+1)$) as the pair $(s,t)$---a term!---where
  $s$ reduces to representations of those bits set to $1$, and
  $t$ to representations of bits set to $0$.
\end{itemize}
Then for instance a number in $\{0,\dots,2^{2^{n+1}}-1\}$ is
represented by a pair $(s,t) : (\bits \times \bits)
\times (\bits \times \bits)$, where $s$ and $t$
themselves are \emph{not} pairs; rather, they are both terms reducing
to a variety of different pairs.  A membership test would take the
form
\[
\begin{array}{ccc}
\multicolumn{3}{c}{\symb{elem}_2(k,(s,t)) \arrz
  \symb{elemtest}(\symb{equal}_1(k,s),\symb{equal}_1(k,t))} \\
\symb{elemtest}(\symb{true},x) \arrz \symb{true} &
\phantom{ABCD} &
\symb{elemtest}(x,\symb{true}) \arrz \symb{false} \\
\end{array}
\]
with the rule for $\symb{equal}_1$ having the form $\symb{equal}_1(
(s_1,t_1),(s_2,t_2)) \arrz r$.  That is, the rule \emph{forces a
partial evaluation}.  This is possible because a ``false constructor''
(i.e., a syntactic structure that rules can match) is allowed to occur above non-data terms.

\section{Future work}\label{sec:conclusion}

\CKchange{In this paper, we have considered the expressive power of cons-free
term rewriting, and seen that restricting data order results in
characterizations of different classes.
A natural direction for future work is to consider further restrictions,
either on rule formation, reduction strategy, or both.}
Following Jones~\cite{jon:01}, we suspect that
\CKchange{restricting to innermost evaluation will give}
the hierarchy $\textrm{P} \subseteq \textrm{EXPTIME} \subseteq
\exptime{2} \subsetneq \cdots$.
\CKchange{Furthermore, we}
conjecture
that a combination of higher-order rewriting and restrictions on
rule formation, possibly together with additions such as product
types, may yield characterizations of a \CKchange{wide range} of
classes, including \CKchange{non-deterministic classes like
$\textrm{NP}$ or very small classes like $\textrm{LOGTIME}$}.


%

\bibliography{finalbib}

\begin{thebibliography}{10}

\bibitem{DBLP:conf/aplas/AvanziniEM12}
M.~Avanzini, N.~Eguchi, and G.~Moser.
\newblock A new order-theoretic characterisation of the polytime computable
  functions.
\newblock In {\em APLAS}, volume 7705 of {\em LNCS}, pages 280--295, 2012.

\bibitem{DBLP:conf/rta/AvanziniM10}
M.~Avanzini and G.~Moser.
\newblock Closing the gap between runtime complexity and polytime
  computability.
\newblock In {\em RTA}, volume~6 of {\em LIPIcs}, pages 33--48, 2010.

\bibitem{DBLP:journals/corr/AvanziniM13}
M.~Avanzini and G.~Moser.
\newblock Polynomial path orders.
\newblock {\em LMCS}, 9(4), 2013.

\bibitem{DBLP:conf/tlca/Baillot07}
P.~Baillot.
\newblock From proof-nets to linear logic type systems for polynomial time
  computing.
\newblock In {\em TLCA}, volume 4583 of {\em LNCS}, pages 2--7, 2007.

\bibitem{DBLP:conf/esop/BaillotGM10}
P.~Baillot, M.~Gaboardi, and V.~Mogbil.
\newblock A polytime functional language from light linear logic.
\newblock In {\em ESOP}, volume 6012 of {\em LNCS}, pages 104--124, 2010.

\bibitem{baillot_et_al:LIPIcs:2012:3664}
P.~Baillot and U.~Dal Lago.
\newblock {Higher-Order Interpretations and Program Complexity}.
\newblock In {\em CSL}, volume~16 of {\em LIPIcs}, pages 62--76, 2012.

\bibitem{DBLP:journals/cc/BellantoniC92}
S.~Bellantoni and S.~Cook.
\newblock A new recursion-theoretic characterization of the polytime functions.
\newblock {\em Computational Complexity}, 2:97--110, 1992.

\bibitem{DBLP:journals/apal/BellantoniNS00}
S.~Bellantoni, K.~Niggl, and H.~Schwichtenberg.
\newblock Higher type recursion, ramification and polynomial time.
\newblock {\em Annals of Pure and Applied Logic}, 104(1--3):17--30, 2000.

\bibitem{bla:jou:rub:08}
F.~Blanqui, J.~Jouannaud, and A.~Rubio.
\newblock The computability path ordering: The end of a quest.
\newblock In {\em CSL}, volume 5213 of {\em LNCS}, pages 1--14, 2008.

\bibitem{DBLP:conf/amast/Bonfante06}
G.~Bonfante.
\newblock Some programming languages for logspace and ptime.
\newblock In {\em AMAST}, volume 4019 of {\em LNCS}, pages 66--80, 2006.

\bibitem{car:sim:14}
D.~de~Carvalho and J.~Simonsen.
\newblock An implicit characterization of the polynomial-time decidable sets by
  cons-free rewriting.
\newblock In {\em RTA-TLCA}, volume 8560 of {\em LNCS}, pages 179--193, 2014.

\bibitem{Hofmann:hab}
M.~Hofmann.
\newblock Type systems for polynomial-time computation, 1999.
\newblock Habilitations\-schrift.

\bibitem{Jones:CompComp}
N.~Jones.
\newblock {\em Computability and Complexity from a Programming Perspective}.
\newblock MIT Press, 1997.

\bibitem{jon:01}
N.~Jones.
\newblock The expressive power of higher-order types or, life without {CONS}.
\newblock {\em JFP}, 11(1):55--94, 2001.

\bibitem{jou:rub:99}
J.~Jouannaud and A.~Rubio.
\newblock The higher-order recursive path ordering.
\newblock In {\em LICS}, pages 402--411, 1999.

\bibitem{DBLP:journals/tcs/KristiansenN04}
L.~Kristiansen and K.~Niggl.
\newblock On the computational complexity of imperative programming languages.
\newblock {\em TCS}, 318(1--2):139--161, 2004.

\bibitem{may:nip:98}
R.~Mayr and T.~Nipkow.
\newblock Higher-order rewrite systems and their confluence.
\newblock {\em TCS}, 192(1):3--29, 1998.

\bibitem{Papadimitriou:complexity}
C.~Papadimitriou.
\newblock {\em Computational Complexity}.
\newblock Addison-Wesley, 1994.

\bibitem{Sipser:comp}
M.~Sipser.
\newblock {\em Introduction to the Theory of Computation}.
\newblock Thomson Course Technology, 2006.

\bibitem{T03_R}
F.~van Raamsdonk.
\newblock Higher-order rewriting.
\newblock In {\em Term Rewriting Systems}, {C}hapter~11. Cambridge University
  Press, 2003.

\end{thebibliography}

\newpage

\appendix

\edef\savedcounter{\number0}
\newcommand{\startappendixcounters}{
  \setcounter{theorem}{\savedcounter}
  \renewcommand{\thetheorem}{\Alph{section}\arabic{theorem}}
}
\newcommand{\oldcounter}[1]{
  \edef\savedcounter{\number\value{theorem}}
  \setcounter{theorem}{#1}
  \renewcommand{\thetheorem}{\arabic{theorem}}
}
\startappendixcounters

\section{Proofs omitted from the main text}

In Section \ref{sec:prelims} we claim that for constructor rewriting systems
$\Rules$ the following holds:

\begin{center}
If $t\:_\beta\!\leftarrow s \arrr{\Rules} q$ with $q$ a normal
form, then $t \arrr{\Rules} q$ as well.
\end{center}

This is used as justification to not consider rules with a
$\beta$-redex $\app{(\abs{x}{s})}{t}$ in the right-hand side.  We
will obtain this result as an easy consequence of the labeled system
employed for the proofs in Section~\ref{sec:algorithm}.  Thus, in this
appendix we will allow such rules until the claim is proven in
Lemma~\ref{lem:betafirst}.

\subsection{Proofs of Section~\ref{sec:limitations}}

To facilitate proving the properties on $\B$-safety, we first extend
the definition to be parametrized over a set of proper constructor
terms satisfying certain rules.  In the following, we assume that
$\B$ is a set of data terms which is closed under $\supterm$ and
contains all data terms occurring in the right-hand side of a rule in
$\Rules$.

\begin{definition}[$\B^X$-safety]
Let $X$ be a set of proper constructor terms on disjoint variables,
which does not contain any variable occurring bound in $s$; then:
\begin{enumerate}[label=(\Alph*)]
\item\label{bsafe:X} any subterm $s \subtermeq t \in X$ is
  $\B^X$-safe;
\item\label{bsafe:base} any term in $\B$ is $\B^X$-safe;
\item\label{bsafe:var} any variable is $\B^X$-safe;
\item\label{bsafe:fun} if $f \in \Defineds$ and $s_1,\dots,s_n$ are
  $\B^X$-safe, then $f(s_1,\dots,s_n)$ is $\B^X$-safe (if well-typed);
\item\label{bsafe:app} if $s$ and $t$ are both $\B^X$-safe,
  then $\app{s}{t}$ is $\B^X$-safe (if well-typed);
\item\label{bsafe:abs} if $x \in \V$ and $s$ is $\B^X$-safe, then 
  $\abs{x}{s}$ is $\B^X$-safe.
\end{enumerate}
\end{definition}

It is easy to see that a term is $\B$-safe if{f} it is $\B^\emptyset$-safe. 
Note also that if we $\alpha$-rename all rules to make sure the same variables do not
occur both bound and free, then the right-hand side $r$ of a cons-free
rule $f(\vec{\ell}) \arrz r$ is $\B^{\{\vec{\ell}\}}$-safe.

We have the following properties:

\begin{lemma}\label{lem:safetyprop}
For all $\B^X$-safe terms $s$:
\begin{enumerate}
\item\label{bsafe:prop:subterm} all subterms $t$ of $s$ are
  $\B^X$-safe;
\item\label{bsafe:prop:substitute} if $\gamma$ is a substitution
  such that $t\gamma$ is $\B$-safe for all $t \in X \cup \FV(s)$,
  then $s\gamma$ is $\B$-safe.
\end{enumerate}
\end{lemma}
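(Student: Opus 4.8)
The plan is to prove both items by induction on the derivation that $s$ is $\B^X$-safe, i.e.\ by a case analysis on the last clause \ref{bsafe:X}--\ref{bsafe:abs} used to justify safety of $s$. Two preliminary remarks organize the argument. First, as already noted in the text, $\B$-safety coincides with $\B^\emptyset$-safety, so item~\ref{bsafe:prop:substitute} is obtained by running the same induction with $X$ instantiated to $\emptyset$ along the way and invoking item~\ref{bsafe:prop:subterm} where subterms of the result need to be inspected. Second, working modulo $\alpha$-conversion I will always rename bound variables so that no variable bound in $s$ occurs in $X$, in the domain of $\gamma$, or free in any $\gamma(y)$; this keeps every occurrence of the predicate $\B^{X'}$-safe well-formed when $X'$ is enlarged during the induction.

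For item~\ref{bsafe:prop:subterm} I would proceed clause by clause. If $s$ is $\B^X$-safe by \ref{bsafe:X}, say $s \subtermeq t$ with $t \in X$, then every subterm of $s$ is again a subterm of $t$, hence $\B^X$-safe by \ref{bsafe:X} directly. If by \ref{bsafe:base}, then $s \in \B$ and closure of $\B$ under $\supterm$ gives that every subterm of $s$ lies in $\B$. Clause \ref{bsafe:var} is immediate. For \ref{bsafe:fun}, \ref{bsafe:app}, \ref{bsafe:abs}, a proper subterm of $s$ is a subterm of one of the immediate $\B^X$-safe subterms of $s$, so the induction hypothesis applies, while $s$ itself is $\B^X$-safe by assumption.

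For item~\ref{bsafe:prop:substitute}, assuming $t\gamma$ is $\B$-safe for all $t \in X \cup \FV(s)$, I would again argue by cases. Clause \ref{bsafe:X}: $s\gamma \subtermeq t\gamma$ for some $t \in X$, and $t\gamma$ is $\B$-safe by hypothesis, so item~\ref{bsafe:prop:subterm} (with $X=\emptyset$) makes $s\gamma$ $\B$-safe. Clause \ref{bsafe:base}: $s$ is a closed data term, so $s\gamma = s \in \B$. Clause \ref{bsafe:var}: $s = x \in \FV(s)$, so $s\gamma = \gamma(x)$ is $\B$-safe by hypothesis. Clauses \ref{bsafe:fun}/\ref{bsafe:app}: since the free variables of each immediate subterm are contained in $\FV(s)$, the hypothesis restricts to each such subterm, the induction hypothesis makes every $s_i\gamma$ $\B$-safe, and then $s\gamma$ is $\B$-safe by \ref{bsafe:fun}/\ref{bsafe:app} read for $\B^\emptyset$. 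Clause \ref{bsafe:abs}: with $s = \abs{x}{u}$ and $x$ chosen fresh as above, $s\gamma = \abs{x}{u\gamma}$, and for each $t \in X \cup \FV(u) \subseteq X \cup \FV(s) \cup \{x\}$ the term $t\gamma$ is $\B$-safe (by hypothesis unless $t = x$, in which case $x\gamma = x$ is a variable, hence $\B$-safe by \ref{bsafe:var}); the induction hypothesis then yields $\B$-safety of $u\gamma$, and \ref{bsafe:abs} for $\B^\emptyset$ concludes.

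I expect the only point requiring genuine care to be the abstraction clause of item~\ref{bsafe:prop:substitute}: one must avoid variable capture when pushing $\gamma$ under $\lambda x$, and must invoke the induction hypothesis with the correctly enlarged set $X \cup \FV(u)$ of ``safe'' free variables rather than $X \cup \FV(s)$. Both issues are dispatched entirely by the renaming convention fixed at the outset, so the remainder of the proof is bookkeeping, with no essential obstacle.
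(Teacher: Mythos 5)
Your proposal is correct and follows essentially the same route as the paper: a case analysis on the clause establishing $\B^X$-safety (equivalently, induction on the form of $s$), using closure of $\B$ under subterms for clause (B), the hypothesis on $X$ together with item (1) for clause (A), and the induction hypothesis plus the renaming convention for the remaining clauses. The only difference is that you spell out the abstraction case of item (2) explicitly, which the paper leaves implicit; this is fine and arguably a small improvement in completeness.
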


\begin{proof}
All three properties follow by a simple induction on the form of
$s$.  Note that for the second property, all variables in $s$ are
renamed to fresh ones beforehand, which therefore do not occur
anywhere in $X$ or in the domain or range of $\gamma$.
\begin{itemize}
\item \emph{property (\ref{bsafe:prop:subterm}):} For case
  \ref{bsafe:base} we note that $\B$ is closed under subterms; the
  other cases are obvious.
\item \emph{property (\ref{bsafe:prop:substitute}):} 
  Case \ref{bsafe:X} holds by property (\ref{bsafe:prop:subterm}):
  $s \subtermeq t \in X$ implies $s\gamma \subtermeq t\gamma$, which
  is $\B$-safe by assumption.  Case \ref{bsafe:base} holds because all
  elements of $\B$ are closed, so $s\gamma = s \in \B$.  Case
  \ref{bsafe:var} follows by assumption, and cases
  \ref{bsafe:fun}--\ref{bsafe:app} by the induction hypothesis.
  \qedhere
\end{itemize}
\end{proof}

We recall Lemma~\ref{lem:safetypreserve}, the primary property of
interest for $\B$-safety:

\oldcounter{\safetypreservelem}
\begin{lemma}
If $s$ is $\B$-safe and $s \arr{\Rules} t$, then $t$ is $\B$-safe.
\end{lemma}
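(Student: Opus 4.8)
The plan is to prove the statement by a case analysis on the position at which the contracted redex sits in $s$, using an ordinary structural induction on $s$ for redexes strictly below the root and reducing, at the root, to the two base cases of a $\beta$-step and a rule step. All the real content of the two base cases is packaged into Lemma~\ref{lem:safetyprop}, in particular its substitution-stability clause~(\ref{bsafe:prop:substitute}), which is precisely the ``induction on the form of $u$ / of $r$'' alluded to in the informal sketch; recall also that $\B$-safe coincides with $\B^\emptyset$-safe.

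For the inductive step I would argue as follows. If the redex is strictly below the root of $s$, then $s$ is neither a data term nor a variable (both are normal forms), and it cannot be $c(\vec{s})$ with $c\in\Constructors$ either: such a $\B$-safe term lies in $\B$ by taking $t=s$ in the definition of $\B$-safety, hence is a data term and hence irreducible. So $s$ has one of the shapes $f(s_1,\dots,s_n)$ with $f\in\Defineds$, $\app{u}{v}$, or $\abs{x}{u}$; the redex lies in exactly one immediate subterm, that subterm is $\B$-safe by Lemma~\ref{lem:safetyprop}(\ref{bsafe:prop:subterm}), the induction hypothesis makes its contractum $\B$-safe, and reassembling via clause~(\ref{bsafe:fun}), (\ref{bsafe:app}) or~(\ref{bsafe:abs}) yields that $t$ is $\B$-safe.

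There remain the two root cases. If $s=\app{(\abs{x}{u})}{v}\arr{\beta}u[x:=v]=t$, then $u$ and $v$ are $\B$-safe (the only derivation of $\B$-safety of $s$ uses clause~(\ref{bsafe:app}) then~(\ref{bsafe:abs})); I would apply Lemma~\ref{lem:safetyprop}(\ref{bsafe:prop:substitute}) to the $\B^\emptyset$-safe term $u$ with the substitution $[x:=v]$, checking only that $x[x:=v]=v$ and $y[x:=v]=y$ for $y\neq x$ are $\B$-safe, which they are. If $s=\ell\gamma\arr{\Rules}r\gamma=t$ with $\ell=f(\ell_1,\dots,\ell_n)\arrz r\in\Rules$ a cons-free constructor rule, then, after $\alpha$-renaming $r$ so that no variable occurs both bound in $r$ and free in $\ell$, the right-hand side $r$ is $\B^{\{\vec{\ell}\}}$-safe as already observed above; I would apply Lemma~\ref{lem:safetyprop}(\ref{bsafe:prop:substitute}) with $X=\{\vec{\ell}\}$, which requires $w\gamma$ to be $\B$-safe for every $w\in\{\vec{\ell}\}\cup\FV(r)$. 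Each $\ell_i\gamma$ is a subterm of the $\B$-safe term $\ell\gamma$, hence $\B$-safe by~(\ref{bsafe:prop:subterm}); and any $y\in\FV(r)\subseteq\FV(\ell)$ occurs inside some $\ell_i$, so $y\gamma$ is a subterm of the $\B$-safe term $\ell_i\gamma$ and therefore $\B$-safe as well.

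The main obstacle I expect is the $\alpha$-conversion bookkeeping in the rule case: one has to be careful that the bound variables of $r$ are genuinely chosen disjoint from $\FV(\ell)$ (and from the relevant part of $\gamma$), so that ``$r$ is $\B^{\{\vec{\ell}\}}$-safe'' is legitimate and so that Lemma~\ref{lem:safetyprop}(\ref{bsafe:prop:substitute}) applies without capture; once that is set up, the remainder is routine clause-chasing.
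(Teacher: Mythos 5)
Your proof is correct and follows essentially the same route as the paper's: induction on the form of $s$, with non-root steps handled by the induction hypothesis and reassembly, and the two root cases ($\beta$-step and cons-free rule step) discharged via the $\B^{\{\vec{\ell}\}}$-safety of right-hand sides and Lemma~\ref{lem:safetyprop}(\ref{bsafe:prop:substitute}). You in fact spell out slightly more detail than the paper does (the constructor-rooted case in the inductive step and the explicit verification that $\ell_i\gamma$ and $y\gamma$ are $\B$-safe), but the argument is the same.
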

\startappendixcounters

\begin{proof}
By induction on the form of $s$.
First suppose the reduction does not take place at the root.  Since
$s$ reduces, it cannot be a variable or data term, so it has one of
three forms:
\begin{itemize}
\item $s = f(s_1,\dots,s_n)$ with $f \in \Defineds$ and all $s_i$
  are $\B$-safe.  Then the
  reduction takes place in some $s_i$, so $t = f(s_1,\dots,s_i',
  \dots,s_n)$ with $s_i \arr{\Rules} s_i'$, so also $s_i'$ is
  $\B$-safe by induction.  This, and $\B$-safety of all other
  $s_j$, gives $\B$-safety of $t$.
\item $s = \app{u}{v}$.  Then either $t = \app{u'}{v}$ with $u
  \arr{\Rules} u'$ (and therefore $u'$ is $\B$-safe) or $t =
  \app{u}{v'}$ with $v \arr{\Rules} v'$ (and therefore $v'$ is
  $\B$-safe).  Either way, $t$ is the application of two
  $\B$-safe terms and therefore $\B$-safe.
\item $s = \abs{x}{u}$.  In this case, the reduction must take place
  in the $\B$-safe term $u$, so $t = \abs{x}{u'}$ and $u'$ is
  $\B$-safe as well by induction; $\B$-safety of $t$ follows.
\end{itemize}
This leaves the base case, a reduction at the root.  Here, there are
two possibilities:
\begin{itemize}
\item $s = \app{(\abs{x}{u})}{v}$ and $t = u[x:=v]$.  By $\B$-safety
  of $s$, also $u$ and $v$ are $\B$-safe, so by
  Lemma~\ref{lem:safetyprop}(\ref{bsafe:prop:substitute}) the
  result $t$ is $\B$-safe as well.
\item $s = \ell\gamma$ and $t = r\gamma$ for some rule $\ell \arrz r
  \in \Rules$ and substitution $\gamma$ which maps $\ell$ to a
  $\B$-safe term.  Writing $\ell = f(\vec{\ell})$, we can assume that
  $r$ is $\alpha$-renamed to be $\B^{\{\vec{\ell}\}}$-safe, so by
  Lemma~\ref{lem:safetyprop}(\ref{bsafe:prop:substitute}) we obtain
  $\B$-safety of $t$.
  \qedhere
\end{itemize}
\end{proof}

\subsection{Proofs of Section~\ref{sec:counting}}

We move on to the results of Section~\ref{sec:counting}.

\oldcounter{\prodmodule}
\begin{lemma}
If there exist a $P$-counting module $C_\pi$ and a $Q$-counting module
$C_\rho$, both of order at most $k$, then there is a $\lambda n.P(n)
\cdot Q(n)$-counting module $C_{\pi\cdot\rho}$ of order at most $k$.
\end{lemma}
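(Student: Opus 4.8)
The plan is to build the product module $C_{\pi\cdot\rho}$ by juxtaposing the two given modules and encoding a number $i\in\{0,\dots,P(|cs|)\cdot Q(|cs|)-1\}$ as the pair $(i_1,i_2)$ with $i=i_1\cdot Q(|cs|)+i_2$, where $i_1<P(|cs|)$ is handled by $C_\pi$ and $i_2<Q(|cs|)$ by $C_\rho$. Concretely I set $\vec{\atype}^{\pi\cdot\rho}=\atype_1\times\dots\times\atype_a\times\btype_1\times\dots\times\btype_b$, $\Sigma^{\pi\cdot\rho}=\Sigma^\pi\uplus\Sigma^\rho$ extended with the new symbols listed in the definition, $A^{\pi\cdot\rho}_{cs}=\{(\vec u,\vec v)\mid \vec u\in A^\pi_{cs}\wedge \vec v\in A^\rho_{cs}\}$, $R^{\pi\cdot\rho}=R^\pi\cup R^\rho$ together with the defining rules for the new symbols, and $\numinterpret{(\vec u,\vec v)}^{\pi\cdot\rho}_{cs}=\numinterpret{\vec u}^\pi_{cs}\cdot Q(|cs|)+\numinterpret{\vec v}^\rho_{cs}$. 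Surjectivity onto $\{0,\dots,PQ-1\}$ is immediate from surjectivity of $\numinterpret{\cdot}^\pi$ and $\numinterpret{\cdot}^\rho$ and Euclidean division.

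Next I would define the operations componentwise with carry logic, mirroring ordinary positional arithmetic in base $Q(|cs|)$. The $\symb{seed}$ symbols take $\symb{seed}^j_{\pi\cdot\rho}[cs]=\symb{seed}^j_\pi[cs]$ for $j\le a$ and $\symb{seed}^{a+j}_{\pi\cdot\rho}[cs]=\symb{seed}^j_\rho[cs]$, so the seed represents $(P-1)\cdot Q+(Q-1)=PQ-1$. The $\symb{inv}$ symbols likewise act componentwise, since $PQ-1-(i_1 Q+i_2)=(P-1-i_1)Q+(Q-1-i_2)$. The interesting cases are $\symb{pred}$ and $\symb{suc}$: to compute the predecessor I first test $\symb{zero}_\rho$ on the low component; if it is $\sfalse$ I apply $\symb{pred}_\rho$ to the $\vec v$ part and keep $\vec u$; if it is $\strue$ I apply $\symb{pred}_\pi$ to $\vec u$ and replace $\vec v$ by $\symb{seed}_\rho$ (wrap-around), unless the whole number is already $0$, in which case I keep it at $0$ using $\symb{zero}_{\pi\cdot\rho}$. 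Symmetrically for $\symb{suc}$, branching on whether the low component equals $Q-1$ (testable via $\symb{zero}_\rho$ composed with $\symb{inv}_\rho$, or equivalently via $\symb{suc}_\rho$ behaviour), clamping at $PQ-1$. Each such rule is a constant-depth combination of the component operations applied through an auxiliary $\symb{ifelse}$-style branching symbol; $\symb{zero}_{\pi\cdot\rho}$ is exactly the $\symb{and}$ of the two component zero-tests as shown in the sketch.

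Finally I would verify the required properties of the definition, which is essentially a bookkeeping exercise: membership in $A^{\pi\cdot\rho}_{cs}$ of all the constructed tuples follows because each component operation of $C_\pi$, $C_\rho$ lands back in $A^\pi_{cs}$, $A^\rho_{cs}$; the interpretation equations ($\max(m-1,0)$, $\min(m+1,PQ-1)$, $PQ-1-m$) follow from the carry analysis above together with the component equations; the $\symb{zero}_{\pi\cdot\rho}$ reduction behaviour follows from the two component zero-tests and the rules for $\symb{and}$; and the ``interpretation is stable under reduction of the components'' clause follows from the corresponding clauses for $C_\pi$ and $C_\rho$, since any reduct of $(\vec u,\vec v)$ that lies in $A^{\pi\cdot\rho}_{cs}$ reduces $\vec u$ to something in $A^\pi_{cs}$ with the same $\pi$-value and $\vec v$ to something in $A^\rho_{cs}$ with the same $\rho$-value. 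Cons-freeness of the new rules is clear since they only introduce defined symbols and copy variables. The order bound is $\max(\mathrm{ord}(C_\pi),\mathrm{ord}(C_\rho))\le k$ because the new type declarations only combine existing argument types. I expect the main obstacle — modest, given the laxity of the setting — to be stating the $\symb{pred}/\symb{suc}$ branching rules so that every non-deterministic choice either produces a correct representation of the intended value or gets stuck in a non-data normal form, exactly as in the treatment of Lemma~\ref{lem:mainmodule}; once the branching is funnelled through a deterministic $\symb{ifelse}$ on the Boolean result of $\symb{zero}_\rho$, this is routine.
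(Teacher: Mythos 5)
Your proposal is correct and follows essentially the same route as the paper's own construction: represent $i$ in base $Q(|cs|)$ as a juxtaposed pair of $C_\pi$- and $C_\rho$-tuples, take seed and inverse componentwise, define $\symb{zero}_{\pi\cdot\rho}$ as the conjunction of the two component tests, and implement $\symb{pred}$/$\symb{suc}$ by branching (through an auxiliary test symbol) on the zero test of the low component, with wrap-around via $\symb{seed}_\rho$ resp.\ $\symb{inv}_\rho$. If anything, your explicit clamping of $\symb{pred}$ at $0$ via a preliminary $\symb{zero}_{\pi\cdot\rho}$ test is slightly more careful than the rules written out in the paper's appendix, which omit that corner case.
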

\startappendixcounters

\begin{proof}
Let $C_\pi ::= ([\atype_1 \times \dots \times \atype_a],\Sigma^\pi,
R^\pi,A^\pi,\numinterpret{\cdot}^\pi)$ and $C_\rho ::= ([\btype_1
\times \dots \times \btype_b],\Sigma^\rho,R^\rho,A^\rho,
\numinterpret{\cdot}^\rho)$.  We can safely assume that any symbol
$f$ which occurs in both $\Sigma^\pi$ and $\Sigma^\rho$ has the same
type declaration in both, and is defined by the same rules in $R^\pi$
and $R^\rho$---if this is not the case, we simply use a renaming.
Thus, we are given two counting modules that have no
\emph{conflicts}: combining the signatures and rules does not affect
the reduction and interpretation properties.

Let $C_{\pi \cdot \rho} = ([\atype_1 \times \dots \times
\atype_a \times \btype_1 \times \dots \times \btype_b],\Sigma^\pi
\cup \Sigma^\rho \cup \Sigma,R^\pi \cup R^\rho \cup R,A^{\pi \cdot
\rho},\numinterpret{\cdot}^{\pi \cdot \rho})$, where:
\begin{itemize}
\item $A^{\pi \cdot \rho} = \{ (u_1,\dots,u_a,v_1,\dots,v_b) \mid
  (u_1,\dots,u_a) \in A^\pi \wedge (v_1,\dots,v_b) \in A^\rho \}$,
\item $\numinterpret{(u_1,\dots,u_a,v_1,\dots,v_b)}^{\pi \cdot \rho}_{
  cs} = \numinterpret{(u_1,\dots,u_a)}^\pi_{cs} \cdot Q(|cs|) +
  \numinterpret{(v_1,\dots,v_b)}^\rho_{cs}$,
\item $\Sigma$ consists of the defined symbols introduced in $R$,
  which we construct below.
\end{itemize}

Intuitively, fixing $cs$ and writing $N := P(|cs|)$ and $M :=
Q(|cs|)$, a number $i$ in $\{0,\dots,N \cdot M - 1\}$ can be seen as
a unique pair $(n,m)$ with $0 \leq n < N$ and $0 \leq m < M$, such
that $i = n \cdot m$.  Here, $n$ is represented by a tuple
$(u_1,\dots,u_a)$ in the counting module $C_\pi$, and $m$ by a tuple
$(v_1,\dots,v_b)$ in $C_\rho$.

For the $\symb{seed}$ function, we observe that $N \cdot M - 1 = (N -
1) \cdot M + (M - 1)$, which corresponds to the pair $(N-1,M-1)$,
which in turn translates to the tuple
$(\symb{seed}_\pi^1[cs],\dots,\symb{seed}_\pi^a[cs],\linebreak
\symb{seed}_\rho^1[cs],\dots,\symb{seed}_\rho^b[cs])$.  This tuple
is generated by the following rules:
\[
\begin{array}{rcl}
\symb{seed}_{\pi \cdot \rho}^i(cs,\vec{z}) & \arrz &
  \symb{seed}_\pi^i(cs,\vec{z})\ \ \text{for}\ 1 \leq i \leq a \\
\symb{seed}_{\pi \cdot \rho}^i(cs,\vec{z}) & \arrz & \symb{seed}_{
  \pi \cdot \rho}^{i-a}(cs,\vec{z})\ \ \text{for}\ a+1 \leq i \leq
  a+b \\
\end{array}
\]
Note the extra parameters $\vec{z}$: this we do because some
$\atype_i$ may be a functional type, and all functions have a sort as
output type (as observed in the definition of counting modules).

The $\symb{zero}$ function requires both components to be $0$:
\[
\begin{array}{rcl}
\symb{zero}_{\pi \cdot \rho}(cs,u_1,\dots,u_a,v_1,\dots,v_b) & \arrz &
  \symb{and}(\symb{zero}_\pi(cs,u_1,\dots,u_a),
  \symb{zero}_\rho(cs,v_1,\dots,v_b)) \\
\end{array}
\]
\vspace{-18pt}
\[
\begin{array}{rclcrcl}
\symb{and}(\strue,x) & \arrz & x & &
\symb{and}(\sfalse,y) & \arrz & \sfalse \\
\end{array}
\]
For inverses, note that $N \cdot M - (n \cdot M + m) - 1 =
(N - m) \cdot M - m - 1 = (N - m - 1) \cdot M + N - m - 1$, giving the
pair $(N-n-1,M-m-1)$, or $(\symb{inv}(n),\symb{inv}(m))$:
\[
\begin{array}{rcl}
\symb{inv}_{\pi \cdot \rho}^i(cs,u_1,\dots,u_a,v_1,\dots,v_b,\vec{z})
  & \arrz & \symb{inv}_\pi^i(cs,u_1,\dots,u_a,\vec{z})
  \ \ \text{for}\ 1 \leq i \leq a \\
\symb{inv}_{\pi \cdot \rho}^i(cs,u_1,\dots,u_a,v_1,\dots,v_b,\vec{z})
  & \arrz & \symb{inv}_\rho^{i-a}(cs,v_1,\dots,v_b,\vec{z})\ \ 
  \text{for}\ a+1 \leq i \leq a+b \\
\end{array}
\]
For the predecessor, $(i,j)-1$ results in $(i,j-1)$ if $j > 0$,
otherwise in $(i-1,M-1))$:
\[
\begin{array}{rcl}
\symb{pred}^i_{\pi \cdot \rho}(cs,u_1,\dots,u_a,v_1,\dots,v_b,\vec{z}) & \arrz &
  \symb{ptest}_{\pi \cdot \rho}^i(\symb{zero}_\rho(cs,v_1,\dots,v_b),
  cs,u_1,\dots,u_a, \\
  & & \ \ \ \ \ \ \ \ \ \ \ \ \ 
  v_1,\dots,v_b,\vec{z})\ \ \text{for}\ 1 \leq i \leq a+b \\
\symb{ptest}^i_{\pi \cdot \rho}(\symb{false},cs,\vec{u},\vec{v},
  \vec{z}) & \arrz & \app{u_i}{\vec{z}}\ \ \text{for}\ 1 \leq i \leq a \\
\symb{ptest}^i_{\pi \cdot \rho}(\symb{false},cs,\vec{u},\vec{v},
  \vec{z}) & \arrz & \symb{pred}_\rho^{i-a}(cs,v_1,\dots,v_b,\vec{z})
  \ \ \text{for}\ a+1 \leq i \leq a+b \\
\symb{ptest}^i_{\pi \cdot \rho}(\symb{true},cs,\vec{u},\vec{v},
  \vec{z}) & \arrz & \symb{pred}_\pi^i(cs,u_1,\dots,u_a,\vec{z})\ \ 
  \text{for}\ 1 \leq i \leq a \\
\symb{ptest}^i_{\pi \cdot \rho}(\symb{true},cs,\vec{u},\vec{v},
  \vec{z}) & \arrz & \symb{seed}_\rho^{a-i}(cs,v_1,\dots,v_b,\vec{z})\ \ 
  \text{for}\ a+1 \leq i \leq a+b \\
\end{array}
\]
Note the use of $\app{v_i}{\vec{z}}$: this rule can be read as
$\symb{ptest}^i_{\pi \cdot \rho}[\sfalse,cs,\vec{u},\vec{v}]
\arr{\Rules} u_i$ if $1 \leq i \leq a$ (modulo $\alpha$-equivalence).

For the successor, $(i,j)+1$ results in $(i,j+1)$ if $j < M-1$,
and in $(i+1,0)$ otherwise.  The former holds exactly if
$\symb{inv}(j)$ is non-zero, and $0$ is exactly
$\symb{inv}(\symb{seed}(cs))$.
\[
\begin{array}{rcl}
\symb{suc}^i_{\pi \cdot \rho}(cs,u_1,\dots,u_a,v_1,\dots,v_b,\vec{z}) & \arrz &
  \symb{suctest}^i_{\pi \cdot \rho}(\symb{zero}_\rho(cs,
  \symb{inv}_\rho^1[cs,\vec{v}],
  \dots,\symb{inv}_\rho^b[cs,\vec{v}]), \\
  & & \hfill
  u_1,\dots,u_a,v_1,\dots,v_b,\vec{z})
  \ \ \text{for}\ 1 \leq i \leq a+b \\
\symb{suctest}^i_{\pi \cdot \rho}(\symb{false},cs,\vec{u},\vec{v},
  \vec{z}) & \arrz & \app{u_i}{\vec{z}}
  \ \ \text{for}\ 1 \leq i \leq a \\
\symb{suctest}^i_{\pi \cdot \rho}(\symb{false},cs,\vec{u},\vec{v},
  \vec{z}) & \arrz & \symb{suc}_\rho^{i-a}(cs,v_1,\dots,v_b,\vec{z})\ \ 
  \text{for}\ a+1 \leq i \leq a+b \\
\symb{suctest}^i_{\pi \cdot \rho}(\symb{true},cs,\vec{u},\vec{v},
  \vec{z}) & \arrz & \symb{suc}_\pi^i(cs,u_1,\dots,u_a,
  \vec{z})\ \ \text{for}\ 1 \leq i \leq a \\
\symb{suctest}^i_{\pi \cdot \rho}(\symb{true},cs,\vec{u},\vec{v},
  \vec{z}) & \arrz & \symb{nul}_\rho^{a-i}(cs,\vec{z})\ \ 
  \text{for}\ a+1 \leq i \leq a+b \\
\symb{nul}^i_\rho(cs,\vec{z}) & \arrz &
  \symb{inv}_\rho^i(cs,\symb{seed}^1_\rho[l],\dots,
  \symb{seed}^b_\rho[l],\vec{z})\ \text{for}\ 1 \leq i \leq b \\
\end{array}
\]
\end{proof}

\oldcounter{\expmodule}
\begin{lemma}
If there is a $P$-counting module $C_{\pi}$ of order $k$, then there
is a $\lambda n.2^{P(n)}$-counting module $C_{p[\pi]}$ of order $k+1$.
\end{lemma}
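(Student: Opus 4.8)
The plan is to lift the pair-of-terms encoding of Lemma~\ref{lem:mainmodule} one type order higher. Where there a bit-string of length $n$ was stored as a pair of $\bits$-terms, here a bit-string $\beta = b_0\cdots b_{P(|cs|)-1}$ --- and hence a number below $2^{P(|cs|)}$ --- will be stored as a single term $F$ of type $\atype_1 \arrtype \cdots \arrtype \atype_a \arrtype \bool$, where $\atype_1 \times \cdots \times \atype_a$ is the type sequence $\vec{\atype}$ of $C_\pi$; the intended reading is that $F$ applied to a tuple $\vec{s}$ with $\numinterpret{\vec{s}}^\pi_{cs} = j$ reduces to $\strue$ when $b_j = 1$ and to $\sfalse$ when $b_j = 0$. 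Since each $\atype_i$ has order at most $k-1$, this component type has order at most $k$, so every newly introduced defined symbol has a type declaration of order at most $k+1$ and $C_{p[\pi]}$ is a module of order $k+1$, as claimed. Concretely I would let $A_{cs}$ be the set of closed terms $F$ of this type for which there exists a genuine bit-string $\beta$ such that, for every $j$ and every $\vec{s} \in A^\pi_{cs}$ with $\numinterpret{\vec{s}}^\pi_{cs} = j$, we have $\app{F}{\vec{s}} \arrr{R} \strue$ if $b_j = 1$, $\app{F}{\vec{s}} \arrr{R} \sfalse$ if $b_j = 0$, and $\app{F}{\vec{s}}$ reduces to no other data term; and set $\numinterpret{F}_{cs} := \sum_j b_j \cdot 2^j$. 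Uniqueness of $\beta$ makes this well-defined, surjectivity follows since iterating the predecessor below from $\symb{seed}_{p[\pi]}$ reaches every value, and the ``interpretation invariant under reducing $F$'' clause holds because any reduct $F'$ of $F$ that lies in $A_{cs}$ can drive $\app{F'}{\vec{s}}$ only to data already reachable from $\app{F}{\vec{s}}$, forcing its bit-string to coincide with $\beta$.

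The operations are bit-vector routines, each looping over the $P(|cs|)$ bit positions by walking the index representations of $C_\pi$: one starts from $\symb{seed}_\pi[cs]$ (or its $\symb{inv}_\pi$, which represents $0$), steps with $\symb{suc}_\pi$ or $\symb{pred}_\pi$, and tests the endpoints with $\symb{zero}_\pi$. The easy ones are exactly as in the proof sketch: $\symb{seed}_{p[\pi]}[cs] = \abs{\vec{k}}{\strue}$ (all-ones, value $2^{P(|cs|)}-1$); $\symb{inv}_{p[\pi]}[cs,F] = \abs{\vec{k}}{\symb{not}(\app{F}{\vec{k}})}$ (bitwise complement, using ordinary boolean helpers $\symb{not},\symb{ite}$ in $\Sigma$); and $\symb{zero}_{p[\pi]}(cs,F)$ scans the indices downward via the $\symb{zero'}/\symb{ztest}$ rules of the sketch, answering $\sfalse$ the first time $\app{F}{(\text{index})}$ yields $\strue$ and $\strue$ once it has passed index $0$ with no set bit. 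For the predecessor I would, after a $\symb{zero}_{p[\pi]}$-guard that returns $F$ unchanged on input $0$, take $\symb{pred}_{p[\pi]}[cs,F]$ to be a term $\abs{\vec{k}}{(\cdots)}$ which, on index $\vec{k}$, scans positions $0,1,2,\dots$ with $\symb{suc}_\pi$ while simultaneously decrementing a private copy of $\vec{k}$ with $\symb{pred}_\pi$ (so $\symb{zero}_\pi$ fires exactly when the scan index reaches $\numinterpret{\vec{k}}^\pi$): if a set bit of $F$ is encountered strictly below that index it returns $\app{F}{\vec{k}}$, otherwise it returns $\symb{not}(\app{F}{\vec{k}})$ --- this is the bitwise ``borrow'' realizing $b-1$. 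Finally $\symb{suc}_{p[\pi]} := \symb{inv}_{p[\pi]}\circ\symb{pred}_{p[\pi]}\circ\symb{inv}_{p[\pi]}$, using $x+1 = N-((N-x)-1)$ as in Lemma~\ref{lem:mainmodule}. All these rules place only constructors already on their left-hand side (or data), so the module stays cons-free.

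The verification then consists of: (i) each rule is a cons-free constructor rule of order $\le k+1$; (ii) $\symb{seed}_{p[\pi]}[cs] \in A_{cs}$ with interpretation $2^{P(|cs|)}-1$; (iii) $\symb{pred}_{p[\pi]}$, $\symb{suc}_{p[\pi]}$, $\symb{inv}_{p[\pi]}$ map $A_{cs}$ into $A_{cs}$ and act as $\max(m-1,0)$, $\min(m+1,2^{P(|cs|)}-1)$, $2^{P(|cs|)}-1-m$ on interpretations; and (iv) $\symb{zero}_{p[\pi]}(cs,F) \arrr{R} \strue$ iff $m = 0$ and $\arrr{R} \sfalse$ iff $m > 0$. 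Claims (ii)--(iv) are each shown by induction on the number of $C_\pi$-indices still to be visited, invoking the counting-module axioms for $C_\pi$; the decisive one is that reducing a $C_\pi$-tuple leaves its $\pi$-interpretation and its $\symb{zero}_\pi$-answer unchanged, which legitimizes carrying the not-yet-normalized scan tuples through the recursions.

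The step I expect to be the main obstacle is (iv) together with the non-determinism bookkeeping inside $\symb{pred}_{p[\pi]}$: since the index tuples fed to $F$ and the walker tuples are themselves not in normal form, I must guarantee that a ``wrong'' non-deterministic reduction of them --- or of $F$ itself --- can never produce a spurious $\strue$ or $\sfalse$. This is handled in the spirit of Section~\ref{subsec:sat} and Lemma~\ref{lem:mainmodule}, by guarding every test so that an inconsistent choice strands the term in a non-data normal form; but making this airtight for the nested $\symb{zero}_{p[\pi]}/\symb{pred}_{p[\pi]}$ recursions, each of which calls $\symb{zero}_\pi,\symb{suc}_\pi,\symb{pred}_\pi$ at every step, is where the real care is needed.
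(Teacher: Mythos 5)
Your construction matches the paper's: numbers below $2^{P(|cs|)}$ are functions of type $\atype_1 \arrtype \dots \arrtype \atype_a \arrtype \bool$ over $C_\pi$-indices, with $\symb{seed}$ the constant-$\strue$ function, $\symb{inv}$ pointwise negation, $\symb{zero}$ a scan over indices driven by $\symb{pred}_\pi/\symb{zero}_\pi$, and $\symb{suc} = \symb{inv}\circ\symb{pred}\circ\symb{inv}$; the only divergence is that your predecessor is defined pointwise by an upward scan with a privately decremented copy of the query index, whereas the paper scans downward accumulating flips via a $\symb{flip}$ wrapper and an explicit $\symb{equal}_\pi$ subroutine --- an equivalent borrow-propagation implementation. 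So the proposal is correct and takes essentially the same approach as the paper, at a comparable level of detail.
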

\startappendixcounters

\begin{proof}
Assume given a $P$-counting module $C_\pi = ([\atype_1 \times
\dots \times \atype_a],\Sigma,R,A,\numinterpret{\cdot}^\pi)$.
We define the $2^P$-counting module $C_{\symb{p}[\pi]}$ as the tuple
$([\atype_1 \arrtype \dots \arrtype \atype_a \arrtype \bool],
\Sigma^{\symb{p}[\pi]},R^{\symb{p}[\pi]},B,
\numinterpret{\cdot}^{\symb{p}[\pi])}$, where:
\begin{itemize}
\item $B_{cs}$ is the set of all terms $q \in \Terms(\Sigma^{\symb{p}[
  \pi]} \cup \Constructors,\emptyset)$ of type $\atype_1 \arrtype
  \dots \arrtype \atype_a \arrtype \bool$ such that:
  \begin{itemize}
  \item for all $(s_1,\dots,s_a) \in A_{cs}$: $\apps{q}{s_1}{s_a}$
    reduces to either $\strue$ or $\sfalse$, but not to both;
  \item for all $(s_1,\dots,s_a),(t_1,\dots,t_a) \in A_{cs}$: if
    $\numinterpret{(\vec{s})}^\pi_{cs} = \numinterpret{(\vec{t})
    }^\pi_{cs}$, then $\apps{q}{s_1}{s_a}$ and $\apps{q}{t_1}{t_a}$
    reduce to the same boolean value.
  \end{itemize}
\item Writing $N := P(|cs|)-1$, let
  $\numinterpret{q}^{\symb{p}[\pi]}_{cs} = \sum_{i=0}^N \{ 2^{N-i
  } \mid \apps{q}{s_1}{s_a} \arrr{R} \strue$ for some
  $(s_1,\dots,s_a)$ with $\numinterpret{(s_1,\dots,s_a)}^\pi_{cs} =
  i \}$; that is, $q$ represents the number given by the bitvector
  $b_0\dots b_N$ (with $b_N$ the least significant digit) where $b_i
  = 1$ if and only if $q \cdot \numrep{i} \arrr{R^{\symb{p}[\pi]}}
  \strue$ for some representation $\numrep{i}$ of $i$ in the counting
  module $C_\pi$ (note that, by the requirement on $B_{cs}$, this
  therefore holds for \emph{any} representation of $i$).
\item $\Sigma^{\symb{p}[\pi]} = \Sigma \cup \Sigma'$ and
  $R^{\symb{p}[\pi]} = R \cup R'$, where $\Sigma'$ consists of the
  defined symbols introduced in $R'$, which we construct below.
\end{itemize}

To start, $\symb{seed}[cs]$ should return a bitvector that is $1$ at
all bits, so having $\symb{seed}[cs] \arrr{R'} \lambda \vec{k}.\strue$
would suffice.  By definition of the $f[\vec{s}]$ construction, that
is:
\[
\begin{array}{rcl}
\symb{seed}_{\symb{p}[\pi]}(cs,k_1,\dots,k_a) & \arrz & \strue \\
\end{array}
\]
The inverse of a bitvector is obtained by flipping all the bits, as
we saw n Lemma~\ref{lem:mainmodule}.  Thus:
\[
\begin{array}{rclcrcl}
\symb{inv}_{\symb{p}[\pi]}(cs,F,k_1,\dots,k_a) & \arrz &
  \symb{not}(\apps{F}{k_1}{k_a}) & & 
\symb{not}(\strue) & \arrz & \sfalse \\
& & & & \symb{not}(\sfalse) & \arrz & \strue \\
\end{array}
\]
For the $\symb{zero}$ function, we simply test whether all bits are
set to $0$:
\[
\begin{array}{rcl}
\symb{zero}_{\symb{p}[\pi]}(cs,F) & \arrz & \symb{zero'}_{\symb{p}[
  \pi]}(cs,\symb{seed}^1_\pi[cs],\dots,\symb{seed}^a_\pi[cs],F) \\
\symb{zero'}_{\symb{p}[\pi]}(cs,k_1,\dots,k_a,F) & \arrz &
  \symb{ztest}_{\symb{p}[\pi]}(\apps{F}{k_1}{k_a},\symb{zero}_\pi(cs,
  k_1,\dots,k_a),cs\\ & & \phantom{ztest_{p[\pi]}}\ \ \ 
  k_1,\dots,k_a,F) \\
\symb{ztest}_{\symb{p}[\pi]}(\strue,z,cs,\vec{k},F) & \arrz &
  \sfalse \\
\symb{ztest}_{\symb{p}[\pi]}(\sfalse,\strue,cs,\vec{k},F) & \arrz &
  \strue \\
\symb{ztest}_{\symb{p}[\pi]}(\sfalse,\sfalse,cs,\vec{k},F) & \arrz &
  \symb{zero'}_{\symb{p}[\pi]}(cs,\symb{pred}^1_\pi[cs,\vec{k}],
  \dots,\symb{pred}^a_\pi[cs,\vec{k}], F)
\end{array}
\]
For the predecessor function, we observe as before that $x_0 \dots x_i
1 0 \dots 0$ has $x_1 \dots x_i 0 1 \dots 1$ as a predecessor; that is,
we must flip all the bits until we encounter a $1$, flip that one
too, and leave the function unmodified for the rest.  To this end, we
first define what it means to flip a bit: we want $\symb{flip}[F,
\vec{k}]$ to be the function that maps $\vec{z}$ to $\app{F}{\vec{z}}$
if $\numinterpret{\vec{k}}^\pi \neq \numinterpret{\vec{z}}^\pi$ and
to $\symb{not}(\app{F}{\vec{z}})$ otherwise.  For this, of course, we
will need to define an equality check as well.
\[
\begin{array}{rcl}
\symb{flip}_{\symb{p}[\pi]}(cs,F,k_1,\dots,k_a,z_1,\dots,z_a) & \arrz
  & \symb{flipcheck}_{\symb{p}[\pi]}(F,\vec{z},\symb{equal}_\pi(cs,
  \vec{k},\vec{z})) \\
\symb{flipcheck}_{\symb{p}[\pi]}(F,z_1,\dots,z_a,\sfalse) & \arrz &
  \apps{F}{z_1}{z_a} \\
\symb{flipcheck}_{\symb{p}[\pi]}(F,z_1,\dots,z_a,\strue) & \arrz &
  \symb{not}(\apps{F}{z_1}{z_a}) \\
\symb{equal}_\pi(cs,k_1,\dots,k_a,z_1,\dots,z_a) & \arrz &
  \symb{eqtest}_\pi(\symb{zero}_\pi(cs,\vec{k}),\symb{zero}_\pi(cs,\vec{z}),
  cs,\vec{k},\vec{z}) \\
\symb{eqtest}_\pi(\strue,b,cs,\vec{k},\vec{z}) & \arrz & b \\
\symb{eqtest}_\pi(\sfalse,\strue,cs,\vec{k},\vec{z}) & \arrz & \sfalse \\
\symb{eqtest}_\pi(\sfalse,\sfalse,cs,\vec{k},\vec{z}) & \arrz &
  \symb{equal}_\pi(cs,\symb{pred}^1_\pi[cs,\vec{k}],\dots,
  \symb{pred}^a_\pi[cs,\vec{k}],\\
& & \phantom{\symb{equal}_\pi(cs,}~\symb{pred}^1_\pi[cs,\vec{z}],\dots,
  \symb{pred}^a_\pi[cs,\vec{z}]) \\
\end{array}
\]
This, we use to define our predecessor function.
\[
\begin{array}{rcl}
\symb{pred}_{\symb{p}[\pi]}(cs,F,\vec{z}) & \arrz & \symb{pred'}_{
  \symb{p}[\pi]}(cs,\symb{seed}^1_\pi[cs],\dots,\symb{seed}^a_\pi[cs],F,
  \vec{z}) \\
\symb{pred'}_{\symb{p}[\pi]}(cs,k_1,\dots,k_a,F,\vec{z}) & \arrz &
  \symb{predtest}_{\symb{p}[\pi]}(\apps{F}{k_1}{k_a},
  \symb{zero}_\pi(cs,\vec{k}),cs,\vec{k},\\
& & \phantom{\symb{predtest}_{\symb{p}[\pi]}(}
  \symb{flip}_{\symb{p}[\pi]}[cs,F,\vec{k}],\vec{z}) \\
\symb{predtest}_{\symb{p}[\pi]}(\strue,b,cs,\vec{k},F,\vec{z}) & \arrz &
  \app{F}{\vec{z}} \\
\symb{predtest}_{\symb{p}[\pi]}(\sfalse,\strue,cs,\vec{k},F,\vec{z}) &
  \arrz & \symb{not}(\app{F}{\vec{z}}) \\
\symb{predtest}_{\symb{p}[\pi]}(\sfalse,\sfalse,cs,\vec{k},F,\vec{z}) &
  \arrz & \symb{pred'}_{\symb{p}[\pi]}(cs,\symb{pred}^1_\pi[cs,\vec{k}],
  \dots,\symb{pred}^a_\pi[cs,\vec{k}],F,\vec{z}) \\
\end{array}
\]
Note the way $\symb{not}$ is used in the second-last rule: this is
the case where we continue flipping bits until $b_0$ is reached, and
$b_0$ itself is $0$; that is, the number represented by $F$ is $0$.
As the $\symb{pred}$-function iteratively updates the functional
argument, this argument will return $\strue$ at all positions by the
time this last bit is reached.  That is why $\symb{not}$ is applied.

Finally, the successor function is obtained by combining $\symb{inv}$
and $\symb{pred}$ as in Lemma~\ref{lem:mainmodule}.
\[
\symb{suc}_{\symb{p}[\pi]}(cs,F,\vec{z}) \arrz
\symb{inv}_{\symb{p}[\pi]}(cs,\symb{pred}_{\symb{p}[\pi]}[cs,
\symb{inv}_{\symb{p}[\pi]}[cs,F]],\vec{z})
\qedhere
\]
\end{proof}

\subsection{Proofs of Section~\ref{sec:algorithm}}

In Section~\ref{sec:algorithm}, we must prove correctness of
the algorithm (Theorem~\ref{thm:algorithmsoundcomplete}).  This proof
takes several large steps.  To start, we introduce a terminating
counterpart to $\Rules$.

\begin{definition}[Labeled System]
Let
$
\Flab := \F \cup \{ f_i : \alpha \mid f : \alpha \in \Defineds \wedge
  i \in \nats \}
$.
For $s \in \Terms(\F,\V)$ and $i \in \nats$, let $\labl_i(s)$ be $s$
with all instances of a defined symbol $f$ replaced by $f_i$.  For
$t \in \Terms(\Flab,\V)$, let $|t|$ be $t$ with all symbols $f_i$
replaced by $f$.  Then, let
\[
\begin{array}{rl}
\Ruleslab =\ & \{ f(x_1,\dots,x_n) \arrz f_i(x_1,\dots,x_n) \mid
  f : [\atype_1 \times \dots \times \atype_n] \arrtype \asort \in
  \Defineds \wedge i \in \nats \}\ \cup \\
& \{ f_{i+1}(x_1,\dots,x_n) \arrz f_i(x_1,\dots,x_n) \mid f :
  [\atype_1 \times \dots \times \atype_n] \arrtype \asort \in
  \Defineds \wedge i \in \nats \}\ \cup \\
& \{ f_{i+1}(\ell_1,\dots,\ell_n) \arrz \labl_i(r) \mid
  f(\vec{\ell}) \arrz r \in \Rules \wedge i \in \nats \}
\end{array}
\]
\end{definition}

Note that constructor terms are unaffected by $\labl_i$ and $|\cdot|$.
While the AFS $(\Flab,\Ruleslab)$ is obviously non-deterministic and
infinite in both its signature and rules, these issues do not block us
from using it as a reasoning tool.  Importantly, this AFS defines the
same decision function as $(\F,\Rules)$:

\begin{lemma}\label{lem:labeledequiv}
For all $f : [\atype_1 \times \dots \times \atype_n] \arrtype \asort
\in \Defineds$ and data terms $s_1,\dots,s_n,t$:
\[
f(s_1,\dots,s_n) \arrr{\Rules} t\ 
\text{if and only if}\ 
f(s_1,\dots,s_n) \arrr{\Ruleslab} t
\]
\end{lemma}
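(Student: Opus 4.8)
The plan is to prove the two implications separately, translating reductions back and forth between $(\F,\Rules)$ and $(\Flab,\Ruleslab)$; the only real work is bookkeeping of labels. I would first record an auxiliary fact: labels can always be \emph{lowered}, i.e.\ for every $t \in \Terms(\F,\V)$ and all $i \geq j$ we have $\labl_i(t) \arrr{\Ruleslab} \labl_j(t)$. This follows by rewriting each occurrence $f_i(\vec q)$ in $\labl_i(t)$ — which is fully applied, as AFSs have no partial application — to $f_{i-1}(\vec q)$ with the rule $f_{i+1}(\vec x) \arrz f_i(\vec x)$, repeated $i-j$ times per occurrence.

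For the ``only if'' direction I would prove, by induction on $N$, the statement: if $u \in \Terms(\F,\V)$ and $u \arr{\Rules}^N v$, then $\labl_N(u) \arrr{\Ruleslab} \labl_0(v)$. The base case $N=0$ is trivial. For the step, write $u \arr{\Rules} u' \arr{\Rules}^{N-1} v$. If the first step is a $\beta$-step, then $\labl_N(u) \arr{\beta} \labl_N(u')$ (labelling commutes with substitution, as it touches neither variables nor $\lambda$'s), and then I lower all labels from $N$ to $N-1$ to reach $\labl_{N-1}(u')$ and invoke the induction hypothesis. If the first step uses a rule $f(\vec\ell) \arrz r \in \Rules$ at a context $C$ with substitution $\gamma$, then, using that the $\ell_i$ are proper constructor terms (hence unaffected by $\labl_N$), $\labl_N(\ell\gamma) = (f_N(\vec\ell))\gamma_N$ with $\gamma_N := \labl_N\circ\gamma$; applying the $\Ruleslab$-rule $f_N(\vec\ell) \arrz \labl_{N-1}(r)$ with $\gamma_N$ yields $\labl_{N-1}(r)\gamma_N$, whose skeleton carries label $N-1$ but whose substituted parts still carry label $N$. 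Lowering all remaining label-$N$ symbols (those in $\gamma_N$'s images and in the surrounding $\labl_N(C)$) down to $N-1$ produces exactly $\labl_{N-1}(C[r\gamma]) = \labl_{N-1}(u')$, and the induction hypothesis finishes the step. Specialising to $u = f(s_1,\dots,s_n)$ and $v = t$ with the $s_i$ and $t$ data terms: $\labl_N(f(\vec s)) = f_N(\vec s)$ and $\labl_0(t) = t$, and since $f(\vec s) \arr{\Ruleslab} f_N(\vec s)$ via $f(\vec x) \arrz f_N(\vec x)$, we get $f(\vec s) \arrr{\Ruleslab} t$.

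For the ``if'' direction I would use the erasure map $|\cdot|$, exploiting that each $\Ruleslab$-step projects onto at most one $\Rules$-step: a step with $f(\vec x) \arrz f_i(\vec x)$ or $f_{i+1}(\vec x) \arrz f_i(\vec x)$ collapses under $|\cdot|$ (same image); a $\beta$-step maps to a $\beta$-step; and a step with $f_{i+1}(\vec\ell) \arrz \labl_i(r)$ maps — using $|\labl_i(r)| = r$ and again that the $\ell_i$ are constructor terms — to the $\Rules$-step with $f(\vec\ell) \arrz r$. Hence $u \arrr{\Ruleslab} v$ implies $|u| \arrr{\Rules} |v|$; taking $u = f(\vec s)$ and $v = t$ with the $s_i$ and $t$ data terms (so $|f(\vec s)| = f(\vec s)$ and $|t| = t$) yields $f(\vec s) \arrr{\Rules} t$.

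The main obstacle is the label arithmetic in the ``only if'' direction: after contracting a labelled redex the image of the matching substitution is one level too high, so the argument hinges on always being able to push labels down with the $f_{i+1}(\vec x)\arrz f_i(\vec x)$ rules and on lining up the indices so that an $N$-step $\Rules$-reduction is matched by a reduction from label $N$ to label $0$. I would note that, in contrast to the termination argument for $\Ruleslab$ used elsewhere, this lemma requires no well-foundedness: both directions merely construct or project reductions.
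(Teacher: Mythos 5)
Your proposal is correct and follows essentially the same route as the paper's proof: the same lowering observation ($\labl_{i+1}(s) \arrr{\Ruleslab} \labl_i(s)$), the same per-step label-decrement argument for the ``only if'' direction (the paper states it as a single-step lemma $u \arr{\Rules} v \Rightarrow \labl_{i+1}(u) \arrr{\Ruleslab} \labl_i(v)$ and then chains, where you fold the chaining into an induction on the number of steps), and the same erasure-map projection $|\cdot|$ for the ``if'' direction.
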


\begin{proof}
For the \emph{if} direction, note that:
\begin{itemize}
\item if 
  $\gamma^{|\cdot|}(x) = |\gamma(x)|$ for all $x$, then 
  $|u\gamma| = |u|\gamma^{|\cdot|}$ for all $u \in \Terms(\Flab,\V)$;
\item therefore, if $u = \ell\gamma$ and $v = r\gamma$ for
  $\ell \arrz r \in \Ruleslab$, then either $|u| = |\ell|\gamma^{|
  \cdot|} = |r|\gamma^{|\cdot|} = |v|$ (for the first two groups of
  rules, where $|\ell| = |r|$), or $|u| = |\ell|\gamma^{|\cdot|}
  \arr{\Rules} |r|\gamma^{|\cdot|} = |v|$;
\item therefore, if $u \arr{\Ruleslab} v$ either
  $|u| \arr{\Rules} |v|$ or $|u| = |v|$.
\end{itemize}
The first and third observations follow by a straightforward induction
on $u$, the second by definition of $\Ruleslab$ and $|\cdot|$.  The
\emph{if} statement now follows straightforwardly by induction on the
length of the derivation $f(\vec{s}) \arrr{\Ruleslab} t$.

For the \emph{only if} direction, proceed as follows. For any substitution $\gamma$ and $i \in \nats$,
let $\gamma_i$ be the substitution mapping each $x$ to
$\labl_i(\gamma(x))$.  We observe:
\begin{itemize}
\item for all $s$, $\gamma$, $i$:
  $\labl_i(s\gamma) = \labl_i(s)\gamma_i$ (by structural induction on
  $s$);
\item for all $s,i$: $\labl_{i+1}(s) \arrr{\Ruleslab}
  \labl_i(s)$ (by structural induction on $s$);
\item therefore, if $u = f(\vec{\ell})\gamma$ and $v =
  r\gamma$ with $f(\vec{\ell}) \arrz r \in \Rules$, then
  $\labl_{i+1}(u) = \linebreak
  \labl_{i+1}(\ell)\gamma_{i+1}
  \arrr{\Ruleslab} f_{i+1}(\vec{\ell})
  \gamma_i \arr{\Ruleslab} \labl_i(r)\gamma_i = \labl_i(v)$;
\item therefore, if $u \arr{\Rules} v$, then $\labl_{i+1}(u)
  \arrr{\Rules} \labl_i(v)$ (by structural induction on $u$);
\item thus, if $f(\vec{s}) \arrr{\Rules} t$ in
  $k$ steps, then $f(\vec{s}) \arr{\Ruleslab} f_k(\vec{s})
  \arrr{\Ruleslab} \labl_0(t) = t$ (as there are no defined symbols
  in $t$).
  \qedhere
\end{itemize}
\end{proof}

What is more, as promised, $\arr{\Ruleslab}$ is terminating (even
though $\arr{\Rules}$ might not be).

\begin{lemma}\label{lem:terminating}
There is no infinite $\arrr{\Ruleslab}$ reduction.
\end{lemma}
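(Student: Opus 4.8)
The plan is to exhibit a reduction ordering that orients every rule in $\Ruleslab$ and is well-founded, so that $\arr{\Ruleslab}$ is contained in it and hence terminating. Because the rules fall into three groups with a very uniform shape, a suitable ordering can be built by combining a decreasing \emph{label measure} with a standard higher-order path ordering on the underlying (unlabeled) terms. Concretely, I would first assign to each labeled term $t \in \Terms(\Flab,\V)$ the multiset $M(t)$ of all indices $i$ such that a symbol $f_i$ occurs in $t$ (counted with multiplicity and position). For the first group of rules, $f(\vec{x}) \arrz f_i(\vec{x})$, the measure does not obviously decrease in this multiset order, so instead I would stratify: observe that an unlabeled defined symbol $f$ can be rewritten to $f_i$ \emph{at most once} along any position, after which only the second and third groups apply. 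So I would use a lexicographic combination: first compare the number of unlabeled defined-symbol occurrences (strictly decreased by group one, non-increased by groups two and three), then compare $M(t)$ in the multiset extension of $>$ on $\nats$ (strictly decreased by group two; for group three, $\labl_i(r)$ only introduces symbols $f_i$ with index $i$, strictly below the index $i+1$ of the symbol $f_{i+1}$ consumed on the left, so again strictly decreasing), and finally — for group three, where the first two components may stay equal in pathological reembeddings — fall back on a genuine termination argument for the \emph{shape} of the rule.

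The remaining subtlety is exactly that third component: rules $f_{i+1}(\vec{\ell}) \arrz \labl_i(r)$ with $f(\vec{\ell}) \arrz r \in \Rules$. Here the index strictly drops ($i+1 \rightsquigarrow i$ on the head, and every defined symbol in $r$ gets index $i < i+1$), so in fact the second component of the lexicographic measure already strictly decreases for every group-three step, and no further component is needed. Thus the combined measure is: (i) number of occurrences of unlabeled symbols from $\Defineds$, then (ii) multiset of labels over $(\nats,>)$. Both components live in well-founded orders (the second because the multiset extension of a well-founded order is well-founded), their lexicographic product is well-founded, and — crucially — the measure is \emph{stable under contexts and substitutions}: rewriting inside a context $C[\cdot]$ or under a substitution $\gamma$ changes the multiset of labels and the count of unlabeled symbols only by the same additive contribution on both sides, so a strict decrease at the redex yields a strict decrease overall. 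Alternatively, and more in keeping with the excerpt's own suggestion (``provable using, e.g., the Computability Path Ordering~\cite{bla:jou:rub:08}''), one may simply note that mapping each $f_i$ to a fresh symbol equipped with precedence $f_{i+1} \succ f_i \succ g_j$ for all $g,j \le i$ and $f_i \succ c$ for all constructors $c$, together with the standard argument that the $\beta$-rule is handled by the CPO's built-in treatment of application and abstraction, orients all three rule groups; CPO is well-founded, so $\arr{\Ruleslab}$ terminates.

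The main obstacle I anticipate is the interaction with $\beta$-reduction: $\arr{\Ruleslab}$ includes the $\clausebeta$ step $C[(\abs{x}{s})\cdot t] \arr{\Ruleslab} C[s[x:=t]]$, and this can duplicate subterms (hence duplicate labeled symbols), which naïvely breaks any ``count of symbols'' measure. This is why invoking CPO (which is designed precisely to subsume simply-typed $\beta$-reduction, via the subterm property for applications and the computability closure for abstractions) is cleaner than a hand-rolled measure; the labels then only need to guarantee that the genuine rewrite rules are oriented, and the lexicographic-with-labels idea above survives because on \emph{well-typed} terms $\beta$-reduction alone is terminating (simply-typed $\lambda$-calculus) and the label multiset is non-increasing under $\beta$. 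So the honest write-up is: define the precedence on $\Flab$ by $f_{i+1} \succ f_i$ and $f_i \succ c$ for constructors, extend CPO accordingly, check that each of the three rule groups and the $\beta$-rule is a strict CPO-decrease (the first two groups by the precedence decrease at the head together with the subterm components being literally equal or subterms; the third by $\labl_i(r)$ using only symbols of strictly smaller precedence than the head $f_{i+1}$, plus the standard computability-closure bookkeeping for any higher-order structure in $r$), and conclude well-foundedness from the corresponding property of CPO~\cite{bla:jou:rub:08}.
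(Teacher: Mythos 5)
Your final route---orient $\Ruleslab$ with the Computability Path Ordering using a label-decreasing precedence and conclude from CPO's well-foundedness (with monotonicity and stability, and with $\beta$ handled natively by CPO)---is exactly the paper's proof, and your decision to abandon the hand-rolled lexicographic measure is sound (note that its claimed stability under substitution already fails for duplicating rewrite rules, not only for $\beta$). The one point you must repair is the precedence itself: as literally stated, neither of your two versions works. The version ``$f_{i+1} \succ f_i \succ g_j$ for all $g$, $j \le i$'' is not well-founded, since it yields $f_i \succ g_i$ and symmetrically $g_i \succ f_i$ for distinct defined symbols $f,g$; and the version ``$f_{i+1}\succ f_i$ and $f_i \succ c$ for constructors $c$'' is too weak: a rule $f_{i+1}(\vec{\ell}) \arrz \labl_i(r)$ whose right-hand side mentions a \emph{different} defined symbol $g$ produces $g_i$ on the right, and the CPO comparison at the head needs $f_{i+1} \succ g_i$; moreover the rules $f(\vec{x}) \arrz f_i(\vec{x})$ (unlabeled to labeled) require the unlabeled defined symbols to lie above all labeled ones, which your precedence omits entirely. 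The paper's precedence---$f_i \succ_{\Flab} g_j$ whenever $i > j$ for all defined $f,g$, together with $f \succ_{\Flab} f_i \succ_{\Flab} c$ for every constructor $c$---fixes all of this and is well-founded. With that correction, the remaining verification you sketch (induction on the right-hand side showing $f_{i+1}(\vec{\ell})$ dominates $\labl_i(r)$, treating variables of $r$ that occur below constructors in $\ell$ as subterms of the corresponding $\ell_j$, with a type order equating all sorts) is precisely what the paper carries out case by case.
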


\begin{proof}
This follows because we can orient all rules by the
\emph{Computability Path Ordering} (CPO)~\cite{bla:jou:rub:08}.
Here, we use only the first definition, without accessibility
(Section 3.3),
with the following precedence:
\begin{itemize}
\item for $f,g \in \Defineds,i,j \in \nats$: $f_i \succ_{\Flab}
  g_j$ if $i > j$;
\item for $f \in \Defineds, g \in \Constructors,i \in \nats$:
  $f \succ_{\Flab} f_i \succ_{\Flab} g$.
\end{itemize}
This precedence is obviously well-founded, as there is no infinite
decreasing sequence of numbers in $\nats$.  We employ an order on types
which obeys the requirements and equates all sorts (such an order can
easily be constructed for any given set of sorts).

Observe:
\begin{enumerate}
\item\label{thm:terminating:subterm}
  If $s \in \Terms(\Constructors,\V)$ and $s \suptermeq
  t$, then $s \succeq_\tau t$. \\
  Here, $\succ_\tau$ is the \emph{type-sensitive} part of the ordering,
  so $s : \atype \succ t : \atype$ and $\atype$ is
  greater or equal in the type ordering then $\btype$.  As
  we have assumed that all sorts have a type declaration
  $[\asort_1 \times \dots \times \asort_n] \arrtype \bsort$ with
  $\bsort$ and all $\asort_i$ in $\Sorts$,
  the above observation follows immediately by 
case (1e)
  and structural induction on $s$.
\end{enumerate}

Recall that CPO employs a (finite) set of variables $X$ for
bookkeeping related to variables encountered in (above the right-hand
side of) the current constraint to be satisfied. Keeping with standard
notation for CPO \cite{bla:jou:rub:08} we write $s  \succ^X t$ for the
ordering below.
Observe that each rule in $\Ruleslab$ is oriented: the rules with an
unlabeled left-hand side because $f \succ_{\Flab} f_i$ for all $f,i$,
the ``decreasing'' rules $f_{i+1}(\vec{x}) \arrz f_i(\vec{x})$ because
each $f_{i+1} \succ_{\Flab} f_i$, and as for the other rules, we see
by induction that if $r$ is a renaming of a subterm of the right-hand
side of a rule $f(\vec{\ell}) \arrz r \in \Rules$ and $\FV(r)
\setminus \FV(f(\vec{\ell})) \subseteq X$ and only variables not
occurring in $f(\vec{\ell})$ have been renamed, then
$f_{i+1}(\vec{\ell}) \succ^X \labl_i(r)$:
\begin{itemize}
\item if $r$ is a variable in $X$, then $f_{i+1}(\vec{\ell})
  \succ^X r = \labl_i(r)$ by case (1a); 
\item if $r$ is a variable not in $X$, then it occurs in some
  $\ell_j$, so $\ell_j \succeq_\tau r = \labl_i(r)$ by observation
  (\ref{thm:terminating:subterm}), giving $f_{i+1}(\vec{\ell})
  \succ^X \labl_i(r)$ by case (1e). 
\item if $r = g(r_1,\dots,r_n)$ with $g \in \Defineds$, then
  $\labl_i(r) = g_i(\labl_i(r_1),\dots,\labl_i(r_n))$, and by the
  induction hypothesis $f_{i+1}(\vec{\ell}) \succ^X \labl_i(r_j)$
  for all $j$; we complete by 
  case (1c) because $f_{i+1} \succ_{\Flab} g_i$;
\item if $r = g(r_1,\dots,r_n)$ with $g \in \Constructors$,
  then $\labl_i(r) = g(\labl_i(r_1),\dots,\labl_i(r_n))$, and
  by the induction hypothesis $f_{i+1}(\vec{\ell}) \succ^X
  \labl_i(r_j)$ for all $j$; we complete once more by 
  case (1c) because $f_{i+1} \succ_{\Flab} g$;
\item if $r = \app{r_1}{r_2}$, then $f_{i+1}(\vec{\ell}) \succ^X
  \labl_i(r_1),\labl_i(r_2)$ by the induction hypothesis, so
  $f_{i+1}(\vec{\ell}) \succ^X \app{\labl_i(r_1)}{\labl_i(r_2)} =
  \labl_i(\app{r_1}{r_2})$ by 
  case (1c).
\item if $r = \abs{x}{r'}$, then for a fresh variable $y$,
  $\FV(r'[x:=y]) = \FV(r) \cup \{y\}$; the induction
  hypothesis gives $f_{i+1}(\vec{\ell}) \succ^{X \cup \{y\}}
  \labl_i(r'[x:=y]) = \labl_i(r')[x:=y]$, so we
  obtain $f_{i+1}(\vec{\ell}) \succ^X \labl_i(r)$ by
  case (1d).
\end{itemize}
In particular, we thus have $f_{i+1}(\vec{\ell}) \succ_\tau
\labl_i(r)$ for $r$ the right-hand side of the rule.  With all rules
oriented, we obtain well-foundedness of $\arr{\Ruleslab}$
by~\cite[Lemma 6.3 (monotonicity), Lemma 6.6(1) (stability) and
Theorem 6.27 (well-foundedness)]{bla:jou:rub:08}.
\end{proof}

Note that, while we did use Lemma~\ref{lem:niceconstructor} to obtain
that functional variables may only occur as direct arguments of the
root, the proof otherwise does not rely on cons-freeness.

Before turning our attention to Theorem~\ref{thm:algorithmsoundcomplete},
we derive one ancillary lemma:

\begin{lemma}\label{lem:betafirst}
Let $s = \apps{\app{(\abs{x}{u})}{v_0}}{v_1}{v_n}$ with $n \geq 0$ and
$t \in \Data$.  Then $s \arrr{\Ruleslab} t$ if{f}
$\apps{u[x:=v_0]}{v_1}{v_n} \arrr{\Ruleslab} t$.
\end{lemma}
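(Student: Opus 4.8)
The plan is to prove the two directions separately, the reverse one by induction on the length of a $\Ruleslab$-reduction to $t$, using that along the head application spine of $s$ no rule of $\Ruleslab$ and no $\beta$-step can fire. The forward direction is immediate: the single step $s = \apps{\app{(\abs{x}{u})}{v_0}}{v_1}{v_n} \arr{\beta} \apps{u[x:=v_0]}{v_1}{v_n}$ is a $\arr{\Ruleslab}$-step, since clause \clausebeta{} is part of the one-step rewrite relation; hence $\apps{u[x:=v_0]}{v_1}{v_n} \arrr{\Ruleslab} t$ gives $s \arrr{\Ruleslab} t$.

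For the reverse direction, assume $s \arrr{\Ruleslab} t$ with $t \in \Data$. A data term contains neither abstractions nor applications, while $s$ contains $\abs{x}{u}$, so $s \neq t$ and any such reduction has length $\geq 1$. I would argue by induction on the length $m \geq 1$ of a fixed reduction $s = s_0 \arr{\Ruleslab} \dots \arr{\Ruleslab} s_m = t$, analysing the first step $s \arr{\Ruleslab} s_1$. The key structural observation is that every left-hand side of $\Ruleslab$ has a symbol from $\F$ or one of the $f_i$ at its root, built with clause \clausefun{} rather than with the binary application $\app{}{}$, and that a $\beta$-redex is an application whose left child is an abstraction; since along the spine $\app{(\apps{\app{(\abs{x}{u})}{v_0}}{v_1}{v_{j-1}})}{v_j}$ of $s$ every left child is itself an application, no redex of the first step can sit at a spine node. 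Consequently the first step is one of: (1) the head $\beta$-step $\app{(\abs{x}{u})}{v_0} \arr{\Ruleslab} u[x:=v_0]$, giving $s_1 = \apps{u[x:=v_0]}{v_1}{v_n}$; (2) a step inside $u$, giving $s_1 = \apps{\app{(\abs{x}{u'})}{v_0}}{v_1}{v_n}$ with $u \arr{\Ruleslab} u'$; (3) a step inside $v_0$, giving $s_1 = \apps{\app{(\abs{x}{u})}{v_0'}}{v_1}{v_n}$ with $v_0 \arr{\Ruleslab} v_0'$; or (4) a step inside some $v_i$ with $1 \leq i \leq n$, giving the $s_1$ obtained from $s$ by replacing $v_i$ with $v_i'$, where $v_i \arr{\Ruleslab} v_i'$.

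In case (1) we are done. In cases (2)--(4) the term $s_1$ again has the shape required by the lemma and reduces to $t$ in $m-1$ steps, so the induction hypothesis applies to it; it then remains to connect $\apps{u[x:=v_0]}{v_1}{v_n}$ to the term whose reduction the hypothesis supplies, using monotonicity and stability of $\arr{\Ruleslab}$ (the standard closure of AFS rewriting under contexts and substitutions, cf.~\cite{bla:jou:rub:08}). Concretely: in case (2), $u \arr{\Ruleslab} u'$ gives $u[x:=v_0] \arr{\Ruleslab} u'[x:=v_0]$; in case (3), $v_0 \arr{\Ruleslab} v_0'$ gives $u[x:=v_0] \arrr{\Ruleslab} u[x:=v_0']$ by rewriting the (possibly zero) occurrences of $v_0$ one at a time; in case (4), $v_i \arr{\Ruleslab} v_i'$ gives a single step in context. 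Prepending the resulting step(s) to the reduction obtained from the hypothesis yields $\apps{u[x:=v_0]}{v_1}{v_n} \arrr{\Ruleslab} t$, closing the induction. Observe that this works with a chosen finite reduction and so does not even need termination of $\arr{\Ruleslab}$.

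The one delicate point — and the main obstacle — is exactly the structural claim that the first step must fall into one of the four listed cases, i.e.\ that no $\Ruleslab$-redex and no $\beta$-redex sits at a spine node of $s$. This is where the constructor-style, partial-application-free syntax of AFSs is used essentially: function symbols take a fixed argument tuple via clause \clausefun{}, a construct syntactically disjoint from the binary application operator. I would isolate this as a small observation and verify it by inspecting the three families of rules of $\Ruleslab$ together with the shape of a $\beta$-redex; everything else is routine, and cons-freeness plays no role.
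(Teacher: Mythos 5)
Your proof is correct and follows essentially the same route as the paper's: both directions hinge on the observations that no $\Ruleslab$-rule or $\beta$-redex can occur on the application spine of $s$ (so only the head $\beta$-step and steps inside $u,v_0,\dots,v_n$ are possible) and that $\arr{\Ruleslab}$ is monotonic and stable under substitution, so the head step can be permuted to the front. Your induction on the length of the reduction is just a more explicit rendering of the paper's decomposition ``the reduction must eventually contract the redex at the root,'' and it correctly spells out the structural claim the paper leaves implicit.
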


\begin{proof}
For the only if direction, we obtain $s \arr{\Ruleslab}
\apps{u[x:=v_0]}{v_1}{v_n} \arrr{\Ruleslab} t$.  For the if direction,
suppose $s \arrr{\Rules} t$.  As $t \in \Data$ does not contain
applications, the reduction must eventually contract a redex at the root;
we have $s \arrr{\Ruleslab} \apps{\app{(\abs{x}{u'})}{v_0'}}{v_1'}{
v_n'} \arr{\beta} \apps{u'[x:=v_0']}{v_1'}{v_n'} \arrr{\Ruleslab} t$,
with $u \arrr{\Ruleslab} u'$ and each $v_i \arrr{\Ruleslab} v_i'$.
But as $\arr{\Ruleslab}$ is a rewriting relation, and therefore both
monotonic and stable under substitution, also $\apps{u[x:=v_0]}{v_1}{
v_n} \arrr{\Ruleslab} \apps{u'[x:=v_0']}{v_1'}{v_n'} \arrr{\Ruleslab}
t$.
\end{proof}

Note that Lemma~\ref{lem:betafirst} immediately implies that
$t~_\beta\!\leftarrow s \arrr{\Rules} q$ with $q \in \Data$ implies
$t \arrr{\Rules} q$ as well.  Therefore, as announced in the
introduction, we will from now on assume that the rules in $\Rules$
do not contain any $\beta$-redexes, as reducing these immediately
does not change the many-step reduction relation to data which we are
interested in.

As announced in the proof sketch, we will use an auxiliary definition;
the \emph{$\NF$-substitution:}

\begin{definition}
For $V \subseteq \V$, a partial function $\eta$ on domain $V$ is an
$\NF$-substitution of depth $k \geq 0$ if $k$ is the smallest number
such that:
for all $x : \atype \in V$ there exist some $i,s,\zeta$ such that
$\vdash s : \atype$ and $\eta(x) = \NF^i(s,\zeta)$ and $\zeta$ is an
$\NF$-substitution of depth $m < k$.
Note that the empty mapping $[]$ is an $\NF$-substitution of depth
$0$.

For an $\NF$-substitution $\eta$ on domain $V$, let $\overline{\eta}$
be defined by induction on the depth of $\eta$:
\begin{itemize}
\item for $x \notin V$, $\overline{\eta}(x) = x$;
\item for $x \in V$ we can write $\eta(x) = \NF^i(s,\zeta)$ with
  $\mathit{depth}(\zeta) < \mathit{depth}(\eta)$; let
  $\overline{\eta}(x) = \labl_i(s)\overline{\zeta}$.
\end{itemize}
\end{definition}

Now we are ready to prove Theorem~\ref{thm:algorithmsoundcomplete}:

\oldcounter{\soundnessthm}
\begin{theorem}
Let $f : [\asort_1 \times \dots \times \asort_n] \arrtype \bsort \in
\Defineds$ and $s_1 : \asort_1,\dots,s_n : \asort_n, t : \bsort$ be
data terms.  Then $\Conf^I[f(\{s_1\},\dots,\{s_n\}) \approx t] = \top$
if{f} $f(\vec{s}) \arrr{\Rules} t$.
\end{theorem}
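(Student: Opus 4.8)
The plan is to reduce reasoning about the (possibly non-terminating) relation $\arr{\Rules}$ to reasoning about a terminating, infinitely-branching labeled variant over which well-founded induction is available. First I would introduce the labeled signature $\Flab = \F \cup \{ f_i \mid f \in \Defineds,\ i \in \nats\}$ and the labeled rule set $\Ruleslab$ containing, for each defined $f$ and each $i \in \nats$, an ``initialization'' rule $f(\vec x) \arrz f_i(\vec x)$, a ``decrement'' rule $f_{i+1}(\vec x) \arrz f_i(\vec x)$, and a labeled copy $f_{i+1}(\vec\ell) \arrz \labl_i(r)$ of each original rule $f(\vec\ell)\arrz r$, where $\labl_i$ replaces every defined symbol by its $i$-indexed version. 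I would then establish two preliminary lemmas: (a) for data terms, $f(\vec s) \arrr{\Rules} t$ iff $f(\vec s) \arrr{\Ruleslab} t$ — one direction by erasing labels, the other by starting from a label at least as large as the number of $\arr{\Rules}$-steps used; and (b) $\arr{\Ruleslab}$ is terminating, proved by orienting every rule with the Computability Path Ordering~\cite{bla:jou:rub:08} under the precedence that sets $f \succ f_i \succ g$ for defined $f$ and constructor $g$, and $f_i \succ f_j$ whenever $i > j$. The initialization and decrement rules are oriented directly; each labeled rule copy is handled by induction on the structure of (a renaming of) its right-hand side, using that a constructor term dominates its subterms in the type-sensitive part of the ordering. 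This gives a well-founded binary relation on labeled terms.

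The core of the argument is to mirror the recursive structure of the algorithm's function $\NF^i$ by labeled reductions. To this end I would introduce \emph{$\NF$-substitutions}: finite partial maps $\eta$ assigning to each $x : \atype$ in their domain a value $\NF^i(s,\zeta)$ with $\vdash s : \atype$ and $\zeta$ an $\NF$-substitution of strictly smaller depth; each such $\eta$ is turned into an ordinary substitution $\overline\eta$ by setting $\overline\eta(x) = \labl_i(s)\overline\zeta$. The central claim is then a pair of equivalences, proved by simultaneous induction on the labeled term $q$ under the well-founded order $\arr{\Ruleslab} \cup \supterm$: first, $\Conf^i[f(\NF^{j_1}(s_1,\eta_1),\dots,\NF^{j_n}(s_n,\eta_n)) \approx t] = \top$ holds iff $q := f_i(\labl_{j_1}(s_1)\overline{\eta_1},\dots,\labl_{j_n}(s_n)\overline{\eta_n}) \arrr{\Ruleslab} t$; and second, $t \in \NF^i(u,\zeta)(\NF^{j_1}(s_1,\eta_1),\dots,\NF^{j_n}(s_n,\eta_n))$ holds iff $q := \apps{(\labl_i(u)\overline\zeta)}{\labl_{j_1}(s_1)\overline{\eta_1}}{\labl_{j_n}(s_n)\overline{\eta_n}} \arrr{\Ruleslab} t$. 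The case analysis follows the clauses defining $\NF^i$: data terms and variables are immediate; applications use the second equivalence, and $\beta$-redexes use Lemma~\ref{lem:betafirst} (a $\beta$-step may be performed first without changing reachability of a data term); abstractions use the second equivalence with an extended $\NF$-substitution; the defined-symbol clause invokes the first equivalence at strictly $\arr{\Ruleslab}$- or subterm-smaller arguments; and the $\Conf^{i+1}$ update step unfolds exactly one rule application, matching the single use of a labeled rule copy $f_{i+1}(\vec\ell) \arrz \labl_i(r)$ followed by decrements. Throughout, $\B$-safety (Lemma~\ref{lem:safetypreserve}) guarantees that $\NF^i$ is only ever applied to terms in its intended domain, and Lemma~\ref{lem:niceconstructor} ensures functional variables occur only as direct arguments of a defined symbol, so the cases are exhaustive.

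Finally, the theorem follows by specializing the first equivalence: for data terms $s_j$ we have $\NF^i(s_j,[]) = \{s_j\}$, so $\Conf^i[f(\{s_1\},\dots,\{s_n\}) \approx t] = \top$ iff $f_i(\vec s) \arrr{\Ruleslab} t$. Combining with preliminary lemma (a) and the monotonicity and stabilization of $\Conf$ (once $\Conf^{I} = \Conf^{I+1}$ it equals $\Conf^j$ for all $j \ge I$, and it is pointwise above every earlier $\Conf^i$), the assertion $\Conf^I[f(\{s_1\},\dots,\{s_n\}) \approx t] = \top$ is equivalent to the existence of some $i$ with $f_i(\vec s) \arrr{\Ruleslab} t$, which is equivalent to $f(\vec s) \arrr{\Ruleslab} t$, which is equivalent to $f(\vec s) \arrr{\Rules} t$. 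I expect the main obstacle to be the mutual induction for the core claim: choosing the correct well-founded measure (the combination $\arr{\Ruleslab} \cup \supterm$ rather than either alone), carefully tracking how labels decrease through nested $\NF^i$ calls and the $\overline{\cdot}$ translation, and reconciling the ``one rule step plus decrements'' structure of $\Ruleslab$ with the single-step nature of the $\Conf^{i} \to \Conf^{i+1}$ transition.
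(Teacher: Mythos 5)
Your proposal matches the paper's own proof essentially step for step: the same labeled system $\Ruleslab$ with initialization, decrement, and labeled rule copies, termination via the Computability Path Ordering, the same notion of $\NF$-substitution with the translation $\overline{\eta}$, the same pair of equivalences proved by mutual induction on $q$ ordered by $\arr{\Ruleslab} \cup \supterm$ (using the $\beta$-postponement lemma for the abstraction case), and the same final specialization via $\NF^i(s_j,[]) = \{s_j\}$ and stabilization of $\Conf$. The only detail left implicit is that the variable and application cases of the second equivalence need auxiliary inductions on the depth of the $\NF$-substitution and the size of $u$, which you gesture at but do not spell out; otherwise the argument is the paper's.
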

\startappendixcounters

\begin{proof}
Extending the definition of $\Conf^i$ and $\NF^i$ also for $i > I$ --
simply by observing that, if the recursive process were continued, we
obtain $\Conf^I = \Conf^{I+1} = \dots$ -- we will derive the following
two statements for all relevant $i,\vec{j} \in \nats,f \in \Defineds,u,
\vec{s} \in \Terms(\F,\V),t \in \B$ and $\NF$-substitutions $\zeta,
\vec{\eta}$:
\begin{description}
\item[(A)]
  $\Conf^i[f(\NF^{j_1}(s_1,\eta_1),\dots,\NF^{j_n}(s_n,\eta_n))
  \approx t] = \top$ \emph{if and only if} \\
  $q := f_i(\labl_{j_1}(s_1)\overline{\eta_1},\dots,\labl_{j_n}(s_n)
  \overline{\eta_n}) \arrr{\Ruleslab} t$;
\item[(B)]
  $t \in \NF^i(u,\zeta)(\NF^{j_1}(s_1,\eta_1),\dots,\NF^{j_n}(s_n,
  \eta_n))$ \emph{if and only if} \\
  $q := \apps{(\labl_i(u)\overline{\zeta})}{\labl_{j_1}(s_1)
  \overline{\eta_1}}{\labl_{j_n}(s_n)\overline{\eta_n}} \arrr{
  \Ruleslab} t$.
\end{description}
If we can prove (A), we obtain the theorem by
Lemma~\ref{lem:labeledequiv}:
\begin{itemize}
\item if $\Conf^I[f(\vec{s}) \approx t] = \top$, we can write this as
  \\
  $\Conf^I[f(\NF^0(s_1,[]),\dots,\NF^0(s_n,[])) \approx t] = \top$,
  which gives \\
  $f(\labl_0(s_1),\dots,\labl_0(s_n)) \arrr{\Ruleslab} t$, so
  $f(s_1,\dots,s_n) \arrr{\Rules} t$ by Lemma~\ref{lem:labeledequiv};
\item if $f(\vec{s}) \arrr{\Rules} t$, then by
  Lemma~\ref{lem:labeledequiv} there is some $i$ with \\
  $f_i(\labl_i(s_1),\dots,\labl_i(s_n)) \arrr{\Ruleslab} t$; then by
  (A) we obtain \\
  $\Conf^i[f(\NF^i(s_1,[]),\dots,\NF^i(s_n,[])) \approx t] = \top$,
  which (because all $s_i \in \Data$) implies \\
  $\Conf^i[f(\{s_1\},\dots,\{s_n\}) \approx t] = \top$.  If $i \leq I$
  then the same holds for $I$ since $\Conf^x[C] = \top$ implies
  $\Conf^{x+1}[C] = \top$, and if $i > I$ this follows because
  $\Conf^I = \Conf^{I+1} = \dots$.
\end{itemize}
We will prove statements (A) and (B) together by a mutual induction on
$q$, oriented with $\arr{\Ruleslab} \cup \rhd$, which is terminating
because $\arr{\Ruleslab}$ is terminating and monotonic.

\smallskip\noindent
\textbf{(A), only if case}.
Suppose $\Conf^i[f(A_1,\dots,A_n) \approx t] = \top$, where $A_k =
\NF^{j_k}(s_k,\eta_k)$ for $1 \leq k \leq n$.  If this holds, then
necessarily $i > 0$; there are two possibilities.
\begin{itemize}
\item $\Conf^{i-1}[f(\vec{A}) \approx t] = \top$.  The induction
  hypothesis immediately yields:
  \[
  \begin{array}{l}
  f_i(\labl_{j_1}(s_1)\overline{\eta_1},\dots, \labl_{j_n}(s_n)
  \overline{\eta_n}) \\ \arr{\Ruleslab}
  f_{i-1}(\labl_{j_1}(s_1)\overline{\eta_1},\dots,
  \labl_{j_n}(s_n)\overline{\eta_n}) \\ \arrr{\Ruleslab} t
  \end{array}
  \]
\item There are a rule $f(\vec{\ell}) \arrz r \in \Rules$
  and substitution $\gamma$ on domain $\FV(f(\vec{\ell}))
  \setminus \{ \vec{\ell} \}$ such that $\ell_k\gamma \in A_k$ for
  all non-variable $\ell_k$ and $t \in \NF^{i-1}(r\gamma,\xi)$,
  where $\xi$ is the function mapping each variable $\ell_k$ to
  $A_k = \NF^{j_k}(s_k,\eta_k)$ -- also an $\NF$-substitution.
  
  Now, for all non-variable $\ell_k$, we use the $\supterm$ part of
  the induction hypothesis (B) to obtain $\labl_{j_k}(s_k)
  \overline{\eta_k} \arrr{\Ruleslab} \ell_k\gamma \in \B$.  Let  
  $\delta := \gamma \cup [\ell_k:=\labl_{j_k}(s_k)\overline{
  \eta_k} \mid \ell_k \in \V]$.  Then we have:
  \[
  \begin{array}{l}
  f_i(\labl_{j_1}(s_1)\overline{\eta_1},\dots,
  \labl_{j_n}(s_n)\overline{\eta_n}) \\ \arrr{\Ruleslab}
  f_i(\ell_1,\dots,\ell_n)\delta \\ \arr{\Ruleslab}
  \labl_{i-1}(r)\delta \\ =
  \labl_{i-1}(r\gamma)[\ell_k:=\labl_{j_k}(s_k)\overline{\eta_k}
  \mid \ell_k \in \V] \\ =
  \labl_{i-1}(r\gamma)\overline{\xi}
  \end{array}
  \]
  Since at least one step is done and $t \in \NF^{i-1}(r\gamma,
  \xi)$, we can use the $\arr{\Ruleslab}$ part of the induction
  hypothesis of (B) to derive that $\labl_{i-1}(r\gamma)\overline{\xi}
  \arrr{\Ruleslab} t$.
\end{itemize}

\smallskip\noindent
\textbf{(A), if case}.  Suppose $q = f_i(\labl_{j_1}(s_1)
\overline{\eta_1},\dots,\labl_{j_n}(s_n)\overline{\eta_n})
\arrr{\Ruleslab} t$.  Since $t$ cannot be rooted by $f_i$, the
reduction must eventually take a root step.  There are two
possibilities.
\begin{itemize}
\item A lowering rule: $q \arrr{\Ruleslab} f_i(x_1,\dots,x_n)\gamma
  \arr{\Ruleslab} f_{i-1}(x_1,\dots,x_n)\gamma \arrr{\Ruleslab} t$.
  Then
  \begin{align*}
  q\ & = f_i(\labl_{j_1}(s_1)\overline{\eta_1},\dots,
  \labl_{j_n}(s_n)\overline{\eta_n}) \\
  & \arr{\Ruleslab} f_{i-1}(\labl_{j_1}(s_1)\overline{\eta_1},\dots,
  \labl_{j_n}(s_n)\overline{\eta_n})\\
  & \arrr{\Ruleslab} f_{i-1}(\vec{x})\gamma \arrr{\Ruleslab} t
  \end{align*}
  By the induction hypothesis, $\Conf^{i-1}[f(\NF^{j_1}(s_1,\eta_1),
  \dots,\NF^{j_n}(s_n,\eta_n)) \approx t] = \top$, so by
  definition the same holds for $\Conf^i[\dots]$.
\item A rule obtained from $\Rules$: $q \arrr{\Ruleslab}
  f_i(\ell_1\gamma,\dots,\ell_n\gamma) \arr{\Ruleslab} \labl_{i-1}(r)
  \gamma \arrr{\Ruleslab} t$ for $f(\vec{\ell}) \arrz r \in \Rules$,
  where each $\labl_{j_k}(s_k)\overline{\eta_k} \arrr{\Rules}
  \ell_k\gamma$.  Now, let $LV := \{ k \mid k \in \{ 1,\dots,n\}
  \wedge \ell_k \in \V \}$.  Let $\delta := [\ell_k:=\labl_{j_k}(s_k)
  \overline{\eta_k} \mid k \in LV]$, and $\gamma' := [x:=\gamma(x)
  \mid x \notin \domain(\delta)]$.  Then:
  \begin{itemize}
  \item $\delta$ and $\gamma'$ have disjoint domains, and
    $\domain(\delta) \cup \domain(\gamma') = \domain(\gamma)$;
  \item $\gamma'$ maps to elements of $\B$, and each $\ell_k\gamma'
    \in \B$ for $k \notin LV$;
  \item each $(\delta \cup \gamma')(x) \arrr{\Ruleslab} \gamma(x)$;
  \item for $k \in \{1,\dots,n\} \setminus LV$: $\labl_{j_k}(s_k)
    \overline{\eta_k} \arrr{\Rules} \ell_k\gamma = \ell_k\gamma' \in
    \B$;
  \item hence, by the induction hypothesis, $\ell_k\gamma' \in
    \NF^{j_k}(s_k,\eta_k)$ for $k \in \{1,\dots,n\}\setminus LV$;
  \item $q \arrr{\Ruleslab} f_i(\vec{\ell})(\delta \cup \gamma')
    \arr{\Ruleslab} \labl_{i-1}(r)(\delta \cup \gamma') \arrr{
    \Ruleslab} \labl_{i-1}(r)\gamma \arrr{\Ruleslab} t$;
  \item $\labl_{i-1}(r)(\delta \cup \gamma') = \labl_{i-1}(r\gamma')
    \delta$ since $\gamma'$ maps to data terms;
  \item $\delta = \overline{\chi}$, where $\chi = [\ell_k := \NF^{j_k
    }(s_k,\eta_k) \mid k \in LV]$;
  \item thus, by the induction hypothesis, $\labl_{i-1}(r\gamma')
    \overline{\chi} \arrr{\Rules} t$ implies $t \in \NF^{i-1}(r\gamma
    ',\chi)$;
  \item this gives $\Conf^i[f(\NF^{j_1}(s_1,\eta_1),\dots,\NF^{j_n}(
    s_n,\eta_n)) \approx t] = \top$.
  \end{itemize}
\end{itemize}
\smallskip\noindent
\textbf{(B), both cases}.  We prove (B) by two additional induction
hypotheses; the second on the depth of $\xi$, the third on the size
of $u$.  Consider the form of $u$.
\begin{itemize}
\item If $u = f(u_1,\dots,u_m)$, then $u$ has base type, so $n =
  0$ and $t \in \NF^i(u,\xi)$ if and only if $\Conf^i[f(\NF^i(u_1,
  \xi),\dots,\NF^i(u_m,\xi)) \approx t] = \top$.  As we have just
  seen, this is the case if{f} $q =
  \labl_i(u)\overline{\xi} =
  f_i(\labl_i(u_1)\overline{\xi},\dots,\labl_i(u_m)
  \overline{\xi}) \arrr{\Ruleslab} t$.
\item If $u \in \V$, then since $\domain(\xi) \supseteq
  \FV(u)$ we can write $\xi(u) = \NF^{i'}(u',\xi')$,
  and have
  \[
  \begin{array}{l}
  \NF^i(u,\xi)(\NF^{j_1}(s_1,\eta_1),\dots,\NF^{j_n}(
  s_n,\eta_n)) \\
  = \NF^i(u',\xi')(\NF^{j_1}(s_1,\eta_1),\dots,\NF^{j_n}(
  s_n,\eta_n))
  \end{array}
  \]
  Also,
  \[
  \begin{array}{l}
  \apps{(\labl_i(u)\overline{\xi})}{(\labl_{j_1}(s_1)
  \overline{\eta_1})}{(\labl_{j_n}(s_n)\overline{\eta_n})} \\
  = \apps{\overline{\xi}(u)}{(\labl_{j_1}(s_1)
  \overline{\eta_1})}{(\labl_{j_n}(s_n)\overline{\eta_n})} \\
  = \apps{(\labl_{i'}(u')\overline{\xi'})}{(\labl_{j_1}(s_1)
  \overline{\eta_1})}{(\labl_{j_n}(s_n)\overline{\eta_n})} \\
  \end{array}
  \]
  Noting that $\zeta'$ has a smaller depth than $\zeta$, we complete
  by the second induction hypothesis.
\item If $u = \app{v}{w}$, then
  \[
  \begin{array}{l}
  \NF^i(u,\xi)(\NF^{j_1}(s_1,\eta_1),\dots,\NF^{j_n}(s_n,\eta_n)) \\
  = \NF^i(v,\xi)(\NF^i(w,\xi),
  \NF^{j_1}(s_1,\eta_1),\dots,\NF^{j_n}(s_n,\eta_n))
  \end{array}
  \]
  Additionally,
  \[
  \begin{array}{l}
  \apps{(\labl_i(u)\overline{\xi})}{(\labl_{j_1}(s_1)
  \overline{\eta_1})}{(\labl_{j_n}(s_n)\overline{\eta_n})} \\
  = \apps{\app{(\labl_i(v)\overline{\xi})}{(\labl_i(w)
  \overline{\xi})}}{(\labl_{j_1}(s_1)
  \overline{\eta_1})}{(\labl_{j_n}(s_n)\overline{\eta_n})}
  \end{array}
  \]
  We complete by the third induction hypothesis.
\item Finally, if $u = \abs{x}{u'}$, then $n > 0$ by type
  restrictions.  Then
  \[
  \begin{array}{l}
  \NF^i(u,\xi)(\NF^{j_1}(s_1,\eta_1),\dots,\NF^{j_n}(s_n,\eta_n)) \\
  = \NF^i(u',\xi \cup [x:=\NF^{j_1}(s_1,\eta_1)])(\NF^{j_2}(s_2,
  \eta_2),\dots,\NF^{j_n}(s_n,\eta_n))
  \end{array}
  \]
  Now, assuming $x$ to be fresh (which we can safely do by
  $\alpha$-conversion), $\delta := \xi \cup [x :=
  \NF^{j_1}(s_1,\eta_1)]$ is an $\NF$-substitution.  We note that:
  \begin{align*}
  q\ 
  & = \apps{(\labl_i(\abs{x}{u'})\overline{\xi})}{(\labl_{j_1}(s_1)
  \overline{\eta_1})}{(\labl_{j_n}(s_n)\overline{\eta_n})} \\
  & = \apps{(\abs{x}{(\labl_i(u')\overline{\xi})})}{(\labl_{j_1}(s_1)
  \overline{\eta_1})}{(\labl_{j_n}(s_n)\overline{\eta_n})} \\
  & \arr{\beta} \apps{(\labl_i(u')\overline{\xi}[x:=\labl_{j_1}(s_1,
  \overline{\eta_1})])}{(\labl_{j_2}(s_2)
  \overline{\eta_2})}{(\labl_{j_n}(s_n)\overline{\eta_n})} \\
  & = \apps{(\labl_i(u')\overline{\delta})}{(\labl_{j_2}(s_2)
  \overline{\eta_2})}{(\labl_{j_n}(s_n)\overline{\eta_n})} \\
  & =: q'
  \end{align*}
  As $q$ reduces to $q'$, we use the first induction hypothesis to
  obtain $q' \arrr{\Ruleslab} t$ if{f} $t \in \NF^i(u',\xi
  \cup [x:=\NF^{j_1}(s_1,\eta_1)])(\NF^{j_2}(s_2,\eta_2),\dots,
  \NF^{j_n}(s_n,\eta_n))$.  This proves the theorem since
  Lemma~\ref{lem:betafirst} gives us that $q \arrr{\Ruleslab} t$ if
  and only if $q' \arrr{\Ruleslab} t$.
\qedhere
\end{itemize}
\end{proof}

Finally, we consider the proof of complexity.  Again, we split this
into several parts.  To start, we define a counter notion to
\emph{order}:

\begin{definition}
The \emph{length bound} of a type $\atype$ is the length $n+1$ of the
longest sequence $\atype_1 \arrtype \dots \arrtype \atype_n \arrtype
\asort$ occurring in it.  Formally, $\mathit{lengthbound}(\atype_1
\arrtype \dots \arrtype \atype_n \arrtype \asort) = \max(n+1,
\mathit{lengthbound}(\atype_1),\dots,\mathit{lengthbound}(\atype_n))$.
\end{definition}

\begin{lemma}\label{lem:interpretsize}
If a type $\atype$ has order $k$ and length bound at most $i$, then
$\card(\interpret{\atype}) \leq \mathrm{exp}_2^{k+1}(i^{k+1} \cdot
N)$, where $N$ is the number of elements in $\B$.
\end{lemma}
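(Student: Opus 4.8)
The plan is to prove the bound by \emph{structural induction on the type} $\atype$, relying on the elementary tower estimates recorded in Section~\ref{sec:prelims} --- in particular $(\mathrm{exp}_2^{k}(x))^{d} \le \mathrm{exp}_2^{k}(dx)$ for $d,k \ge 1$ --- to keep the polynomial overhead from raising the height of the tower. First I would handle the base case: if $\atype$ is a sort $\asort$, then it has order $0$ and length bound $1$, so $i \ge 1$, and $\interpret{\asort}$ is the powerset of a set with at most $N$ elements, hence $\card(\interpret{\asort}) \le 2^{N} \le 2^{iN} = \mathrm{exp}_2^{1}(i^{1} \cdot N)$, which is the claim for $k = 0$.

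For the inductive step I would write $\atype = \atype_1 \arrtype \dots \arrtype \atype_n \arrtype \asort$ with $n \ge 1$, so that $\atype$ is a function type and therefore $k \ge 1$. By the definition of type order each $\atype_j$ has order $k_j \le k-1$; by the definition of length bound each $\atype_j$ has length bound $\le i$, and the sequence $\atype_1 \arrtype \dots \arrtype \asort$ forces $n + 1 \le i$ (in particular $n < i$). Unfolding the definition of $\interpret{\cdot}$ on arrow types gives $\card(\interpret{\atype}) = \card(\interpret{\asort})^{\prod_{j=1}^{n} \card(\interpret{\atype_j})} \le (2^{N})^{\prod_{j=1}^{n} \card(\interpret{\atype_j})}$. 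Applying the induction hypothesis to each $\atype_j$ yields $\card(\interpret{\atype_j}) \le \mathrm{exp}_2^{k_j+1}(i^{k_j+1}N) \le \mathrm{exp}_2^{k}(i^{k}N)$, the last inequality because $k_j + 1 \le k$, $i \ge 1$, and $\mathrm{exp}_2$ is monotone in both its height and its argument. Hence
\[
\card(\interpret{\atype}) \;\le\; 2^{\,N \cdot (\mathrm{exp}_2^{k}(i^{k}N))^{n}} .
\]

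It then remains to carry out the arithmetic, and this is the step I expect to be the only real obstacle: funnelling the leading factor $N$ and the $n$-fold product through $\mathrm{exp}_2^{k+1}$ without increasing its height. Since $k,n \ge 1$, the Section~\ref{sec:prelims} estimate gives $(\mathrm{exp}_2^{k}(i^{k}N))^{n} \le \mathrm{exp}_2^{k}(n \cdot i^{k}N)$; writing $m := n i^{k}N$, we then have $N \cdot \mathrm{exp}_2^{k}(m) \le 2^{N}\cdot \mathrm{exp}_2^{k}(m) = 2^{N + \mathrm{exp}_2^{k-1}(m)} \le 2^{\mathrm{exp}_2^{k-1}(m + N)} = \mathrm{exp}_2^{k}(m + N)$, where the middle inequality reduces to the super-additivity bound $\mathrm{exp}_2^{k-1}(m) + N \le \mathrm{exp}_2^{k-1}(m+N)$ (an equality when $k = 1$, and a short induction on $k$ using $2^{N} \ge N+1$ when $k \ge 2$). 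Combining,
\[
\card(\interpret{\atype}) \;\le\; \mathrm{exp}_2^{k+1}(n i^{k} N + N) = \mathrm{exp}_2^{k+1}((n i^{k}+1)N) \le \mathrm{exp}_2^{k+1}((n+1) i^{k} N) \le \mathrm{exp}_2^{k+1}(i^{k+1} N),
\]
using $1 \le i^{k}$ for the second-to-last step and $n+1 \le i$ for the last, which completes the induction. Everything apart from this last display is routine bookkeeping about type orders and length bounds; note in particular that the argument uses only the finiteness $|\B| = N$ and nowhere needs cons-freeness.
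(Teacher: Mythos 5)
Your proof is correct and takes essentially the same route as the paper's: induction on the structure of $\atype$, the identity $\card(\interpret{\atype}) = \card(\interpret{\asort})^{\card(\interpret{\atype_1})\cdots\card(\interpret{\atype_n})}$, and then folding $N \cdot \mathrm{exp}_2^k(i^k N)^n$ into $\mathrm{exp}_2^k(n\cdot i^k\cdot N + N)$ before concluding via $n \cdot i^k + 1 \leq i^{k+1}$. The only difference is bookkeeping: the paper establishes that folding step as one combined claim proved by induction on $k$, whereas you split it into the preliminaries' power estimate plus a super-additivity inequality (also by induction on $k$)---the two amount to the same arithmetic.
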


\begin{proof}
By induction on the form of $\atype$; write $\atype = \atype_1
\arrtype \dots \arrtype \atype_n \arrtype \asort$ with $0 \leq n < i$
and each $\atype_i$ having order at most $k-1$ and length bound at
most $i$.  Then:
\[
\begin{array}{rcl}
\card(\interpret{\atype_1 \arrtype \dots \arrtype \atype_n \arrtype
\asort}) & = &
\card((\cdots (\interpret{\asort}^{\interpret{\atype_n}})^{
\interpret{\atype_{n-1}}}\cdots)^{\interpret{\atype_1}}) \\
& = &
(\cdots (\card(\interpret{\asort})^{\card(\interpret{\atype_n})})^{
\card(\interpret{\atype_{n-1}})}\cdots)^{\card(\interpret{\atype_1})}
\\
& = & \card(\interpret{\asort})^{\card(\interpret{\atype_n})\cdots
  \card(\interpret{\atype_1})} \\
& \leq & 2^{N \cdot \card(\interpret{\atype_n}) \cdots
  \card(\interpret{\atype_1})}\hfill
  (\text{since}\ \interpret{\asort} \subseteq \P(\B)) \\
& \leq & 2^{N \cdot \mathrm{exp}_2^k(i^k \cdot N) \cdots
  \mathrm{exp}_2^k(i^k \cdot N)}\hfill
  (\text{by induction hypothesis}) \\
& = & 2^{N \cdot \mathrm{exp}_2^k(i^k \cdot N)^n} \\
& \leq & 2^{\mathrm{exp}_2^k(i^k \cdot N \cdot n + N)}\hfill
  (\text{by (**)}) \\
& = & \mathrm{exp}_2^{k+1}(n \cdot i^k \cdot N + N) \\
& \leq & \mathrm{exp}_2^{k+1}(i \cdot i^k \cdot N)\ \ \ 
  (\text{as}\ n \cdot i^k + 1 \leq (n+1) \cdot i^k \leq i \cdot i^k)
\\ & = & \exp_2^{k+1}(i^{k+1} \cdot N) \\
\end{array}
\]
(**) Here, we make an additional claim: $N \cdot \mathrm{exp}_2^k(m
\cdot N)^n \leq \mathrm{exp}_2^k(m \cdot N \cdot n + N)$ for $m,k
\geq 1$ and all $X$. This claim obviously holds if $N = 0$; for $N >
0$ we prove it by induction on $k$:
\begin{itemize}
\item if $k = 1$ then $N \cdot (2^{m \cdot N})^n = N \cdot 2^{m \cdot
  N \cdot n} \leq 2^N \cdot 2^{m \cdot N \cdot n} = 2^{N + m \cdot N
  \cdot n}$;
\item if the claim is known for $k$, then $N \cdot \mathrm{exp}_2^{
  k+1}(m \cdot N)^n = N \cdot \mathrm{exp}_2^k(2^{m \cdot N})^n \leq
  \mathrm{exp}_2^k(2^{m \cdot N} \cdot n + N)$; we are done if we can
  prove that $2^{m \cdot N} \cdot n + N \leq 2^{m \cdot N \cdot n +
  N}$, which holds because always:
  \begin{itemize}
  \item $2^X \cdot n \leq 2^{X \cdot n}$ when $X \geq 1$:
    if $n = 0$ both sides are $0$, if $n = 1$ both sides are $2^X$,
    if $n \geq 2$ and $X = 1$ the statement becomes $2n \leq 2^n$
    which indeed holds for $n \geq 2$, and if $n \geq 2$ and $X \geq
    2$ we obtain $2^X \cdot n \leq 2^X \cdot 2^n = 2^{X + n} \leq
    2^{X \cdot n}$;
  \item $2^X + N \leq 2^X + 2^X \cdot N = 2^X \cdot (1 + N) \leq
    2^X \cdot 2^N = 2^{X + N}$.
    \qedhere
  \end{itemize}
\end{itemize}
\end{proof}

Lemma~\ref{lem:interpretsize} bounds the sizes of the sets
iterated over in the algorithm. Preparations done, consider the theorem:

\oldcounter{\complexitythm}
\begin{theorem}
If $(\F,\Rules)$ has order $k$, then Algorithm~\ref{alg:main}
runs in time $O(\mathrm{exp}_2^k(m \cdot n))$ for some $m$.
\end{theorem}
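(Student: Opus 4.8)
The plan is to combine Lemma~\ref{lem:interpretsize}, which bounds the size of every semantic domain $\interpret{\atype}$, with a routine accounting of the work performed by Algorithm~\ref{alg:main}. First I would set $N := |\B_s|$ and recall that, since $\F$ and $\Rules$ are fixed, $N$ is linear in the size $n$ of the input term $s$; everything below is measured in terms of $N$, and the final conversion to $n$ is immediate. Let $i$ be the maximal length bound of any type occurring in a type declaration in $\F$ and let $a$ be the maximal arity of a symbol in $\F$; both are constants depending only on the signature. Since $(\F,\Rules)$ has order $k$, all such types have order at most $k-1$, so Lemma~\ref{lem:interpretsize} gives $\card(\interpret{\atype}) \leq \mathrm{exp}_2^k(i^k \cdot N)$ for every argument type $\atype$ arising in the algorithm, and $\card(\interpret{\asort}) \leq 2^N$ for each output sort $\asort$ (as $\interpret{\asort} \subseteq \P(\B)$).

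Next I would count the objects the algorithm manipulates. The number of distinct statements $f(A_1,\dots,A_n) \approx t$ is at most $|\Defineds| \cdot (\mathrm{exp}_2^k(i^k \cdot N))^a \cdot N$, which — using $(\mathrm{exp}_2^k(x))^a \leq \mathrm{exp}_2^k(ax)$ and absorbing the polynomial factors into the argument — is bounded by $X := \mathrm{exp}_2^k(m \cdot N)$ for a suitable constant $m$. Because $\Conf^{i+1}$ differs from $\Conf^i$ only by turning additional statements from $\bot$ to $\top$, and the algorithm halts as soon as nothing changes, there are at most $X+2$ iterations. Within one iteration, for each of the at most $X$ statements currently mapped to $\bot$, the algorithm ranges over the constantly many rules $f(\vec\ell)\arrz r \in \Rules$ and, for each, over all substitutions $\gamma$ on the variables occurring below constructors in $\vec\ell$; each such variable is instantiated by an element of $\B$ (subterm-closure of $\B$), so there are at most $2^{aN}$ such substitutions, and for each choice the algorithm tests $t \in \NF^i(r\gamma,\eta)$.

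The remaining point is to bound the cost of evaluating $\NF^i(r\gamma,\eta)$. I would argue by structural induction on the fixed term $r$: the base cases (data term, variable) are constant-time lookups; the $\clauseapp$ and $\clauseabs$ cases apply or abstract previously computed elements of the finite domains $\interpret{\atype}$, each representable as a table of size at most $\mathrm{exp}_2^k(i^k\cdot N)$ whose entries are themselves already bounded, so applying or tabulating costs polynomial in $\mathrm{exp}_2^k(i^k\cdot N)$; and the $\clausefun$ case is a single lookup into $\Conf^i$. Hence $\NF^i(r\gamma,\eta)$ is computed in time $Q(\mathrm{exp}_2^k(i^k\cdot N))$ for a polynomial $Q$ depending only on $r$, hence only on the fixed system. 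Multiplying the per-iteration costs — at most $X+2$ iterations, $X$ statements, $|\Rules|$ rules, $2^{aN}$ substitutions, and $Q(\cdots)$ per test — gives a total bounded by $P'(\mathrm{exp}_2^k(m\cdot N))$ for some polynomial $P'$ and constant $m$. Finally I would invoke the observation from Section~\ref{sec:prelims} that any set decided in time $O(P(\mathrm{exp}_2^k(an)))$ lies in $\etime{k}$, i.e.\ that $\etime{k}$ is robust under taking polynomials, which upgrades the bound to $O(\mathrm{exp}_2^k(m'\cdot n))$ for some constant $m'$.

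The main obstacle is making the bound on the cost of $\NF^i$ precise: one must be careful that elements of the higher-order domains $\interpret{\atype\arrtype\btype}$ are stored as tables indexed by $\interpret{\atype}$ whose entries are themselves already bounded objects, so that the induction on the structure of $r$ does not compound the exponentials — i.e.\ the cost stays \emph{polynomial} in $\mathrm{exp}_2^k(i^k\cdot N)$ rather than becoming another tower level. Everything else is routine arithmetic with the $\mathrm{exp}_2^k$ hierarchy already set up in the preliminaries and in Lemma~\ref{lem:interpretsize}.
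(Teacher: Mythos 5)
Your proposal is correct and follows essentially the same route as the paper's own proof: bound the cardinalities of the semantic domains via the $\mathrm{exp}_2^{k}$ estimate (Lemma~\ref{lem:interpretsize}), count the statements $f(\vec{A})\approx t$ to bound both the number of iterations and the per-iteration work (rules times at most $2^{aN}$ substitutions), bound the cost of evaluating $\NF^i(r\gamma,\eta)$ by induction on the fixed term $r$ so that it stays polynomial in $\mathrm{exp}_2^k(i^k\cdot N)$, and finish by the robustness of $\etime{k}$ under polynomial overhead. The subtlety you flag about representing elements of $\interpret{\atype\arrtype\btype}$ as tables so the induction does not add a tower level is exactly the point the paper handles with its $Z\cdot X^{|r|}\cdot|r|$ bound, so nothing is missing.
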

\startappendixcounters

\begin{proof}
In the following, denote by ``the set of types occurring in an AFS'' $(\F,
\Rules)$ the set $\Sigma$ of all $\atype_1,\dots,\atype_n,\asort$
such that some $f : [\atype_1 \times \dots \times \atype_n] \arrtype
\asort \in \F$, and all their subtypes.  We let:
\begin{itemize}
\item $a \in \nats$ denote the maximal arity of symbols in $\F$;
\item $k \in \nats$ denote the order of the AFS $(\F,\Rules)$, so $k-1$ the
  maximal type order in $\Sigma$;
\item $i \in \nats$ denote a length bound which bounds all $\atype$ in
  $\Sigma$;
\item $d \in \nats$ denote the maximal size (counting symbols,
  variables, applications and abstractions) of right-hand sides in
  $\Rules$.
\end{itemize}
All numbers above are fixed by the given AFS and should thus be
considered constant (the only input to the algorithm is $s$).  We
also define:
\begin{itemize}
\item $N :=$ the number of elements in $\B$ (note that this number is
  linear in $|s|$);
\item $X := \mathrm{exp}_2^k(i^k \cdot N)$, which bounds
  $\card(\atype)$ for all $\atype \in \Sigma$ by
  Lemma~\ref{lem:interpretsize};
\item $Y := |\Defineds| \cdot X^a \cdot N \leq
  |\Defineds| \cdot \mathrm{exp}_2^k((i^k + a + 1) \cdot N)$, which
  therefore bounds the number of different statements $f(\vec{A})
  \approx t$ considered in the algorithm;
\end{itemize}
Now, for every right-hand side $r$, we first make the following
observation: every subterm of $r$ has a type which is a sort or in
$\Sigma$.  This follows because we have assumed right-hand sides to
be $\beta$-normalised, so all strict subterms are either the direct
argument of some $f \in \F$ or of an application $\apps{F}{r_1}{r_n}$
with $F$ a variable which occurs as a direct argument in the
left-hand side.  Thus, in particular, the binders of abstractions
have a type of at most order $k-2$.

Consider the cost of calculating some $\NF^i(r\gamma,\eta)$ if
all $\gamma(x)$ are data terms (or variables) and $\Conf^i$ is already
known; the exact cost depends on implementation details, so for
simplicity let $Z$ denote a bound to the cost of:
\begin{itemize}
\item performing a substitution $r'\gamma$;
\item looking up a truth value in $\Conf^i$ if $A_1,\dots,A_n,t$ are
  already calculated;
\item looking up an element in $\eta$;
\item calculating a function $A(B)$, with $A \in \interpret{\atype
  \arrtype \btype},\ B \in \interpret{\atype}$ for some $\atype,
  \btype \in \Sigma$.
\end{itemize}
Induction on the size $|r|$ of $r$ shows that the cost of calculating
$\NF^i(r\gamma,\eta)$ is bounded by $Z \cdot X^{|r|} \cdot |r|$:
\begin{itemize}
\item if $r = c(\dots)$ with $c \in \Constructors$, or $r$ is a
  variable in $\domain(\gamma)$, this cost is at most $Z$;
\item if $r$ is a variable in $\domain(\eta)$, this cost is at most
  $Z$;
\item if $r = \app{u}{v}$, we must calculate $\NF^i(u\gamma,\eta)$
  and $\NF^i(v\gamma,\eta)$, followed by a function calculation, so
  this cost is at most
  \[
  \begin{array}{rcl}
  (Z \cdot X^{|r_1|} \cdot |r_1|) + (Z \cdot X^{|r_2|} \cdot |r_2|) +
    Z & \leq & (Z \cdot X^{|r|} \cdot |r_1|) + (Z \cdot X^{|r|} \cdot
    |r_2|) + (Z \cdot X^{|r|} \cdot 1) \\
  & = & (Z \cdot X^{|r|}) \cdot (|r_1| + |r_2| + 1) \\
  & = & Z \cdot X^{|r|} \cdot |r|
  \end{array}
  \]
\item if $r = f(r_1,\dots,r_n)$, then we must calculate $\NF^i(r_i
  \gamma,\eta)$ for each subterm, so we obtain a cost bounded by
  \[
  \begin{array}{rcl}
  (\sum_{i=1}^n Z \cdot X^{|r_i|} \cdot |r_i|) + Z \cdot N &
    \leq &
  (\sum_{i=1}^n Z \cdot X^{|r|} \cdot |r_i|) + (Z \cdot X^{|r|}
    \cdot 1) \\
  & = & Z \cdot X^{|r|} \cdot (|r_1| + \dots |r_n| + 1) \\
  & = & Z \cdot X^{|r|} \cdot |r| \\
  \end{array}
  \]
\item if $r = \abs{x}{r'}$, then there are fewer than $X$ different
  $\NF^i(r',\zeta)$ to calculate, so the cost is bounded by
  $X \cdot (Z \cdot X^{|r'|} \cdot |r'|) \leq
  Z \cdot X^{|r'|+1} \cdot |r'| \leq Z \cdot X^{|r|} \cdot |r|$.
\end{itemize}
Even in a non-optimal
implementation, we will have $Z \leq c \cdot Y^b$ for some $b,c$, which suffices
for our purposes. This bounds the cost of determining a query $t \in
\NF^i(r\gamma,\eta)$ by $c \cdot d \cdot Y^b \cdot X^d$.

Observing that $I \leq Y+2$, as the number of $\top$-statements
increases by at least $1$ in every step before $I$, we thus obtain:
\begin{itemize}
\item there are at most $Y+2$ steps;
\item in each step, we investigate at most $Y$ claims;
\item for each claim, we consider $|\Rules|$ possible rules;
\item for each rule, we investigate at most $(2^N)^a = 2^{a \cdot
  N}$ substitutions $\gamma$;
\item for each investigation, we must test membership in some
  $\NF^i(r\gamma,\eta)$.
\end{itemize}
The cost of the lookup to $\Conf^{i-1}$ is negligible compared to the cost of investigating all
substitutions.  Combining these costs and assuming $N \geq 1$, we
obtain a bound of
\[
\begin{array}{rl}
& (Y+2) \cdot Y \cdot |\Rules| \cdot (2^{a \cdot N} + 1) \cdot c
  \cdot d \cdot Y^b \cdot X^d \\
= & (|\Defineds| \cdot X^a \cdot N+2) \cdot |\Defineds| \cdot X^a
  \cdot N \cdot |\Rules| \cdot (2^{a \cdot N} + 1) \cdot c \cdot d
  \cdot (|\Defineds| \cdot X^a \cdot N)^b \cdot X^d \\
\leq & (2 \cdot |\Defineds| \cdot X^a \cdot N) \cdot |\Defineds|
  \cdot X^a \cdot N \cdot |\Rules| \cdot (2^{a \cdot N + 1}) \cdot c
  \cdot d \cdot (|\Defineds|^b \cdot X^{a \cdot b} \cdot N^b) \cdot
  X^d \\
= & 2 \cdot c \cdot d \cdot |\Defineds|^{2+b} \cdot |\Rules| \cdot
  X^{2a+ab + d} \cdot 2^{a \cdot N + 1} \cdot N^{2+b} \\
= & O(\mathrm{exp}_2^k(i^k \cdot N)^{2a + ab + d} \cdot 2^{a \cdot N
  + 1} \cdot 2^{N \cdot (2 + b)}) \\
\leq & O(\mathrm{exp}_2^k(i^k \cdot (3a+ab+d+2) \cdot N + 2b+1)) \\
= & O(\mathrm{exp}_2^k(x \cdot N + y))\ \text{for fixed numbers}\ x\ 
  \text{and}\ y \\
\end{array}
\]
As $N$ is linear in the size of the input, the result follows.
\end{proof}

\newpage

\section{An extended example of SAT-solving using cons-free rewriting}\label{app:sat}

To see how the algorithm from Figure~\ref{fig:sat} works in
practice, consider the formula
$(x_1 \vee x_2) \wedge (\neg x_1 \vee \neg x_2 \vee \neg x_3) \wedge
(x_2 \vee x_3)$.
This corresponds to the following string and term:
\[
L = 11?\#000\#?11\#\quad\quad
\encode{L} =
\one(\one(\symb{?}(\symb{\#}(\nul(\nul(\nul(\symb{\#}(\symb{?}(
\one(\one(\symb{\#}(\nil))))))))))))
\]
We consider a successful reduction from $\symb{decide}(\encode{L})$ to
$\symb{true}$.  For readability, we will omit the brackets and $\nil$,
and simply denote $\encode{L}$ as $\symb{11?\#000\#?11}$ (and similar
for its subterms).
\[
\begin{array}{ll}
& \symb{decide}(\encode{L}) \\
\arr{\Rules} & \symb{assign}(\symb{11?\#000\#?11},~\nil,~\nil,~
  \encode{L}) \\
\arr{\Rules} & \symb{assign}(\symb{1?\#000\#?11},~\nil,~
  \symb{either}(\symb{11?\#000\#?11},\nil),~\encode{L}) \\
\arr{\Rules} & \symb{assign}(\symb{?\#000\#?11},~
  \symb{either}(\symb{1?\#000\#?11},\nil),~
  \symb{either}(\symb{11?\#000\#?11},\nil),~\encode{L}) \\
\arr{\Rules} & \symb{assign}(\symb{\#000\#?11},~
  \symb{either}(\symb{1?\#000\#?11},\nil),\\
  & \phantom{\symb{assign}(\symb{\#000\#?11},}~~
  \symb{either}(\symb{?\#000\#?11},\symb{either}(
    \symb{11?\#000\#?11},\nil)),~\encode{L}) \\
\arr{\Rules} & \symb{main}(\symb{either}(\symb{1?\#000\#?11},\nil),~
  \symb{either}(\symb{?\#000\#?11},\symb{either}(\symb{11?\#000\#?11},
  \nil)),~\encode{L})
\end{array}
\]
This derivation corresponds to choosing the assignment $[x_1:=\bot,
x_2:=\top,x_3:=\bot]$.
For brevity, let us write $X_2$ for
$\symb{either}(\symb{1?\#000\#?11},\nil)$ and $X_{3,1}$ for
$\symb{either}(\symb{?\#000\#?11},\linebreak
\symb{either}(\symb{11?\#000\#?11},\nil))$.  Then
$X_{1} \arrr{\Rules} \symb{1?\#000\#?11}$ and both
$X_{3,1} \arrr{\Rules} \symb{?\#000\#?11}$ and
$X_{3,1} \arrr{\Rules} \symb{11?\#000\#?11}$ using the
$\symb{either}$ rules.  Technically, both terms also reduce to
$\nil$, but we will not use this.

We continue:
\[
\begin{array}{ll}
& \symb{main}(X_2,~X_{3,1},~\symb{11?\#000\#?11\#}) \\
\arr{\Rules} & \symb{test}(X_2,~X_{3,1},~\symb{1?\#000\#?11\#}, \\
  & \phantom{test(}~
  \symb{eq}(X_2,\symb{11?\#000\#?11\#}),~
  \symb{eq}(X_{3,1},\symb{11?\#000\#?11\#})) \\
\arrr{\Rules} & \symb{test}(X_2,~X_{3,1},~\symb{1?\#000\#?11\#},~
  \symb{eq}(\dots),~\symb{eq}(\symb{11?\#000\#?11\#},
  \symb{11?\#000\#?11\#})) \\
\arr{\Rules} & \symb{test}(X_2,~X_{3,1},~\symb{1?\#000\#?11\#},~
  \symb{eq}(\dots),~\symb{eq}(\symb{1?\#000\#?11\#},
  \symb{1?\#000\#?11\#})) \\
\arr{\Rules} & \symb{test}(X_2,~X_{3,1},~\symb{1?\#000\#?11\#},~
  \symb{eq}(\dots),~\symb{eq}(\symb{?\#000\#?11\#},
  \symb{?\#000\#?11\#})) \\
\arr{\Rules} & \symb{test}(X_2,~X_{3,1},~\symb{1?\#000\#?11\#},~
  \symb{eq}(\dots),~\symb{eq}(\symb{\#000\#?11\#},
  \symb{\#000\#?11\#})) \\
\arr{\Rules} & \symb{test}(X_2,~X_{3,1},~\symb{1?\#000\#?11\#},~
  \symb{eq}(\dots),~\strue) \\
\arr{\Rules} & \symb{main}(X_2,~X_{3,1},~\symb{1?\#000\#?11\#}) \\
\end{array}
\]
That is, we tested the first variable of the first clause $x_1 \vee
x_2$ against our non-deterministically chosen assignment, and concluded that it
does not suffice (since $x_1$ is mapped to $\bot$, as evidenced by
$X_{3,1} \arrr{\Rules} \symb{11?\#000\#?11\#}$).  We continue with
the next variable:
\[
\begin{array}{ll}
& \symb{main}(X_2,~X_{3,1},~\symb{1?\#000\#?11\#}) \\
\arr{\Rules} & \symb{test}(X_2,~X_{3,1},~\symb{?\#000\#?11\#},\\
& \phantom{\symb{test}(}~
  \symb{eq}(X_2,\symb{1?\#000\#?11\#}),~
  \symb{eq}(X_{3,1},\symb{1?\#000\#?11\#})) \\
\arrr{\Rules} & \symb{test}(X_2,~X_{3,1},~\symb{?\#000\#?11\#},~
  \symb{eq}(\symb{1?\#000\#?11\#},\symb{1?\#000\#?11\#}),~
  \symb{eq}(\dots)) \\
\arr{\Rules} & \symb{test}(X_2,~X_{3,1},~\symb{?\#000\#?11\#},~
  \symb{eq}(\symb{?\#000\#?11\#},\symb{?\#000\#?11\#}),~
  \symb{eq}(\dots)) \\
\arr{\Rules} & \symb{test}(X_2,~X_{3,1},~\symb{?\#000\#?11\#},~
  \symb{eq}(\symb{\#000\#?11\#},\symb{\#000\#?11\#}),~
  \symb{eq}(\dots)) \\
\arr{\Rules} & \symb{test}(X_2,~X_{3,1},~\symb{?\#000\#?11\#},~
  \strue,~\symb{eq}(\dots)) \\
\arr{\Rules} & \symb{main}(X_2,~X_{3,1},~\symb{skip}(
  \symb{?\#000\#?11\#})) \\
\arr{\Rules} & \symb{main}(X_2,~X_{3,1},~\symb{skip}(
  \symb{\#000\#?11\#})) \\
\arr{\Rules} & \symb{main}(X_2,~X_{3,1},~\symb{000\#?11\#}) \\
\end{array}
\]
Thus, testing the second variable ($x_2$) against our assignment
succeeded, so the $\symb{main}$ function moves on to the next clause.
\[
\begin{array}{ll}
& \symb{main}(X_2,~X_{3,1},~\symb{000\#?11\#}) \\
\arr{\Rules} & \symb{test}(X_2,~X_{3,1},~\symb{00\#?11\#},~
  \symb{eq}(X_{3,1},\symb{000\#?11\#}),~
  \symb{eq}(X_2,\symb{000\#?11\#})) \\
\arrr{\Rules} & \symb{test}(X_2,~X_{3,1},~\symb{00\#?11\#},~
  \symb{eq}(\symb{11?\#000\#?11\#},\symb{000\#?11\#}),~
  \symb{eq}(\dots)) \\
\arr{\Rules} & \symb{test}(X_2,~X_{3,1},~\symb{00\#?11\#},~
  \symb{eq}(\symb{1?\#000\#?11\#},\symb{00\#?11\#}),~
  \symb{eq}(\dots)) \\
\arr{\Rules} & \symb{test}(X_2,~X_{3,1},~\symb{00\#?11\#},~
  \symb{eq}(\symb{?\#000\#?11\#},\symb{0\#?11\#}),~
  \symb{eq}(\dots)) \\
\arr{\Rules} & \symb{test}(X_2,~X_{3,1},~\symb{00\#?11\#},~
  \symb{eq}(\symb{\#000\#?11\#},\symb{\#?11\#}),~
  \symb{eq}(\dots)) \\
\arr{\Rules} & \symb{test}(X_2,~X_{3,1},~\symb{00\#?11\#},~
  \strue,~\symb{eq}(\dots)) \\
\arr{\Rules} & \symb{main}(X_2,~X_{3,1},~\symb{skip}(
  \symb{00\#?11\#})) \\
\arr{\Rules} & \symb{main}(X_2,~X_{3,1},~\symb{skip}(
  \symb{0\#?11\#})) \\
\arr{\Rules} & \symb{main}(X_2,~X_{3,1},~\symb{skip}(
  \symb{\#?11\#})) \\
\arr{\Rules} & \symb{main}(X_2,~X_{3,1},~\symb{?11\#}) \\
\end{array}
\]
The second clause was satisfied already by the valuation for $x_1$,
so the reduction has moved towards the last clause.  Note that here
$\symb{eq}(\symb{11?\#000\#?11\#},\symb{000\#?11\#})$ reduces to
$\strue$, because what is compared is not the exact string, but
rather the number of symbols before the first $\symb{\#}$.
From the current state, we quickly complete the derivation:
\[
\begin{array}{ll}
& \symb{main}(X_2,~X_{3,1},~\symb{?11\#}) \\
\arr{\Rules} & \symb{main}(X_2,~X_{3,1},~\symb{11\#}) \\
\arr{\Rules} & \symb{test}(X_2,~X_{3,1},~\symb{1\#},~
  \symb{eq}(X_2,\symb{11\#}),~\symb{eq}(X_{3,1},\symb{11\#})) \\
\arr{\Rules} & \symb{test}(X_2,~X_{3,1},~\symb{1\#},~
  \symb{eq}(\symb{1?\#000\#?11\#},\symb{11\#}),~\symb{eq}(\dots)) \\
\arr{\Rules} & \symb{test}(X_2,~X_{3,1},~\symb{1\#},~
  \symb{eq}(\symb{?\#000\#?11\#},\symb{1\#}),~\symb{eq}(\dots)) \\
\arr{\Rules} & \symb{test}(X_2,~X_{3,1},~\symb{1\#},~
  \symb{eq}(\symb{\#000\#?11\#},\symb{\#}),~\symb{eq}(\dots)) \\
\arr{\Rules} & \symb{test}(X_2,~X_{3,1},~\symb{1\#},~\strue,~
  \symb{eq}(\dots)) \\
\arr{\Rules} & \symb{main}(X_2,~X_{3,1},~\symb{skip}(\symb{1\#})) \\
\arr{\Rules} & \symb{main}(X_2,~X_{3,1},~\symb{skip}(\symb{\#})) \\
\arr{\Rules} & \symb{main}(X_2,~X_{3,1},~\nil) \\
\arr{\Rules} & \strue \\
\end{array}
\]
\end{document}